\tikzstyle{graphnode}=[circle,draw,minimum size=20pt,scale=0.8]
\tikzstyle{dnode}=[scale=0.8]
\tikzset{
  font={\fontsize{10pt}{12}\selectfont}}
\newtheorem{theorem}{Theorem}
\newtheorem{definition}[theorem]{Definition}
\newtheorem{corollary}[theorem]{Corollary}
\newtheorem{lemma}[theorem]{Lemma}
\newtheorem{remark}{Remark}
\newcommand{\agent}{A}
\newcommand{\agentset}{\mathcal{A}}
\newcommand{\turing}{M}
\newcommand{\bigO}{\mathcal{O}}
\newcommand{\N}{\mathbb{N}}
\newcommand{\regExt}[1]{\overline{#1}}
\newcommand{\ceil}[1]{\left\lceil #1 \right\rceil}
\renewcommand{\vec}[1]{\boldsymbol #1}
\newcommand{\pebblemachine}{T}
\newcommand{\startvertex}{v_0}
\newcommand{\Agentstate}{\Sigma}
\newcommand{\agentstate}{\sigma}
\newcommand{\agentstartstate}{\agentstate^*}
\providecommand*{\udotdot}{%
  \mathbin{\mathpalette\@udotdot{}}%
}
\providecommand*{\@udotdot}[2]{%
  \sbox0{$#1.\m@th$}%
  \setbox2=\hbox to \ht0{\hss\copy0\hss}
  \dimen2=.5\dimexpr\wd0-\wd2\relax
  \sbox4{$#1=\m@th$}%
  \sbox6{$#1\vcenter{}$}%
  \dimen@=\dimexpr(\ht4-\ht6)*2 + \ht0\relax
  \kern\dimen2 
  \vcenter to \dimen@{%
    \hbox to \dimen@{\hfill\copy2}%
    \nointerlineskip
    \vfill
    \hbox to \dimen@{\copy2\hfill}%
  }%
  \kern\dimen2 
}
\author[1]{Yann Disser\thanks{Supported by the `Excellence Initiative' of the German Federal and State Governments and the Graduate School~CE at TU~Darmstadt.}}
\author[2]{Jan Hackfeld}
\author[3]{Max Klimm}
\affil[1]{Graduate School CE, TU Darmstadt, Germany \protect\\
 \texttt{disser@mathematik.tu-darmstadt.de}}
\affil[2]{School of Business and Economics, HU Berlin, Germany \protect\\
\texttt{\{jan.hackfeld|max.klimm\}@hu-berlin.de}}
\begin{document}
\title{Tight bounds for undirected graph exploration\\with pebbles and multiple agents\thanks{Results concerning exploration with pebbles appeared in preliminary form in~\cite{disser15}.}}

\maketitle

\begin{abstract} 
We study the problem of deterministically exploring an undirected and initially unknown graph with $n$ vertices either by a single agent equipped with a set of pebbles, or by a set of collaborating agents. The vertices of the graph are unlabeled and cannot be distinguished by the agents, but the edges incident to a vertex have locally distinct labels. The graph is explored when all vertices have been visited by at least one agent. In this setting, it is known that for a single agent without pebbles $\Theta(\log n)$ bits of memory are necessary and sufficient to explore any graph with at most~$n$ vertices. We are interested in how the memory requirement decreases as the agent may mark vertices by dropping and retrieving distinguishable pebbles, or when multiple agents jointly explore the graph. We give tight results for both questions showing that for a single agent with constant memory $\Theta(\log \log n)$ pebbles are necessary and sufficient for exploration. We further prove that using collaborating agents instead of pebbles does not help as $\Theta(\log \log n)$ agents with constant bits of memory each are necessary and sufficient for exploration.

For the upper bounds, we devise an algorithm for a single agent with constant memory that explores any $n$-vertex graph using $\bigO(\log \log n)$ pebbles, even when $n$ is not known a priori. The algorithm terminates after polynomial time and returns to the starting vertex. Since an additional agent is at least as powerful as a pebble, this implies that $\bigO(\log \log n)$ agents with constant memory can explore any $n$-vertex graph.
For the lower bound, we show that the number of agents needed for exploring any graph with at most~$n$ vertices is already  $\Omega(\log \log n)$ when we allow each agent to have at most $\bigO( \log n ^{1-\varepsilon})$ bits of memory for any $\varepsilon>0$. This also implies that a single agent with sublogarithmic memory needs $\Theta(\log \log n)$ pebbles to explore any $n$-vertex graph.

\end{abstract}

\section{Introduction}

The exploration of unknown graphs subject to space or time bounds is a fundamental problem with applications to robot navigation, Internet crawling, and image recognition. Its pivotal role within the theory of computation stems from the fact that it is a natural abstraction of a process of computing where nodes correspond to states, edges correspond to possible state transitions, and the goal is to find an accepting state when starting in a given initial state. Single agent exploration then corresponds to computing with a single processing unit while exploration with multiple agents corresponds to parallel computing. In this context, graph exploration and traversal problems have proven to be useful to study the relationship between probabilistic and deterministic space-bounded algorithms \cite{savitch73}.

The time and space complexity of undirected graph exploration by a \emph{single} agent is fairly well understood. Aleliunas et al.~\cite{aleliunas79} showed that a random walk of length $n^5 \log n$ visits all vertices of any $n$-vertex graph with high probability. When $n$ is known, it is thus possible to use a counter that keeps track of the number of steps taken to obtain a probabilistic algorithm that explores any graph of size $n$ in polynomial time and $\bigO(\log n)$ space with high probability. In a breakthrough result, Reingold~\cite{reingold08} showed that the same time and space complexity can be achieved by a \emph{deterministic} algorithm. Both the randomized algorithm of Aleliunas et al.\ and the deterministic algorithm of Reingold work for anonymous graphs where vertices are indistinguishable. Logarithmic memory is in fact necessary to explore all anonymous graphs with $n$ vertices; see Fraigniaud et al.~\cite{fraigniaud05}.

Already the early literature on graph exploration problems is rich with examples where exploration is made feasible or the time or space complexity of exploration by a single agent can be decreased substantially by either allowing the agent to mark vertices with pebbles or by cooperating with other agents. For instance, two-dimensional mazes can be explored by a single agent with finite memory using two pebbles \cite{shah74,blum77,blum78}, or by two cooperating agents with finite memory \cite{blum78}, while a single agent with finite memory (and even a single agent with finite memory and a single pebble) does not suffice \cite{budach78,hoffmann81}. Directed anonymous graphs can be explored in polynomial time by two cooperating agents \cite{bender94} or by a single agent with $\Theta(\log \log n)$ indistinguishable pebbles and $\bigO(n^2 \Delta \log n)$ bits of memory \cite{bender02,fraigniaud04}, where $\Delta$ is the maximum out-degree in the graph. Note that a single agent needs at least 
 $\Omega(n \log \Delta)$ bits of memory in this setting even if it is equipped with a linear number of  indistinguishable pebbles \cite{fraigniaud04} and it needs
 exponential time for exploration if it only has a constant number of pebbles and no upper bound on the number of vertices is known \cite{bender02}. 

Less is known regarding the complexity of general undirected graph exploration by more than one agent or an agent equipped with pebbles. The only result in this direction is due to Rollik~\cite{rollik80} showing that there are finite graphs, henceforth called \emph{traps}, that a finite set of $k$ agents each with a finite number $s$ of states cannot explore. Fraigniaud et al.~\cite{fraigniaud06} revisited Rollik's construction and observed that the traps have $\tilde{\bigO}(s \uparrow\uparrow (2k+1))$ vertices, where $a \uparrow \uparrow b := \smash{a^{a^{\udotdot^a}}}$ with~$b$ levels in the exponent and $\tilde{\bigO}$ suppresses lower order terms. Fraigniaud et al.~also gave an improved upper bound of $\tilde{\bigO}(s \uparrow \uparrow (k+1))$. 
While it is a rather straightforward observation that an agent with $s$ states and $p$ pebbles is less powerful than a set of $p+1$ agents with $s$ states each, no better bounds for a single agent with pebbles were known. 
Even more striking is the lack of any non-trivial upper bounds for the exploration with several agents or the single agent exploration with pebbles for undirected graphs. 
Specifically, there was no algorithm known that explores an undirected graph with sublogarithmic space when more than one agent and/or pebbles are allowed.

\subsection{Our results}

We give tight bounds for both the space complexity of undirected graph exploration by a single agent with pebbles, as well as by a set of cooperating agents.

\subsubsection{Results for exploration with pebbles}

For the exploration of a graph by a single agent with constant memory, we show that $\Theta(\log \log n)$ distinguishable pebbles are necessary and sufficient to explore all undirected anonymous graphs with at most $n$ vertices. 

Our proof of the upper bound is constructive, i.e., we devise an algorithm that explores any graph with $n$ vertices using $\bigO(\log \log n)$ pebbles (Corollary~\ref{cor:constant_memory_pebbles}). Our algorithm terminates after having explored the graph and returns to the starting vertex. We further show that the exploration time, i.e., the number of edge traversals of the agent, is polynomial in the size of the graph. Our algorithm does not require $n$ to be known and gradually increases the number of used pebbles during the course of the algorithm such that for any $n$-vertex graph at most $f(n)$ pebbles are used where $f(n) \in \bigO(\log \log n)$.

For a lower bound we show that a single agent with sublogarithmic memory (more precisely $\bigO((\log n)^{1-\epsilon})$ bits of memory for an arbitrary constant $\epsilon > 0$) already needs $\Omega(\log \log n)$ pebbles for exploring every graph with $n$ vertices (Corollary~\ref{cor_pebbles_lower_bound}). 
Our results fully characterize the tradeoff between the memory and the number of pebbles of an agent needed for exploration. It turns out that this tradeoff is governed by two thresholds. When the agent has $\Omega(\log n)$ bits of memory, no pebbles are needed at all. But as soon as the memory is $\bigO((\log n)^{1-\epsilon})$ already $\Omega(\log \log n)$ pebbles are needed to explore all $n$-vertex graphs. On the other hand, with $\Omega(\log \log n)$ pebbles already a constant number of bits of memory are sufficient for exploration.

\subsubsection{Results for multi-agent exploration}
\label{sec:results-multi}

For collaborative graph exploration, we show that $\Theta(\log \log n)$ agents with constant memory are necessary and sufficient to explore all undirected anonymous graphs with at most $n$ vertices. 

We first make the rather straightforward observation that a set of $p+1$ agents with a constant set of states each can reproduce the exploration by a single agent with a constant number of states and $p$ pebbles in the following way. One of the agents replicates the moves of the single agent while the others do not move independently and simply act as pebbles (Lemma~\ref{lem:agent>pebble}). This observation allows to rephrase our single agent exploration algorithm with $\smash{\bigO(\log \log n)}$ pebbles as a multi-agent exploration algorithm with $\bigO(\log \log n)$ agents and constant memory each (Corollary~\ref{cor:constant_memory_agents}). As a perhaps surprising result, we show that this is  optimal in terms of the asymptotic number of agents. To prove this lower bound, we construct a family of graphs with $\smash{\bigO(s^{2^{5k}})}$ vertices that trap any set of $k$ agents with $s$ states each (Theorem~\ref{theo_size_of_trap}). 
Our construction exhibits dramatically smaller traps with only a doubly exponential number of vertices compared to the traps of size $\tilde{\bigO}(s\uparrow \uparrow (2k+1))$ and $\tilde{\bigO}(s \uparrow \uparrow (k+1))$ due to Rollik~\cite{rollik80} and Fraigniaud et al.~\cite{fraigniaud06}, respectively. As a consequence of our improved bound on the size of the trap, we are able to show that, even if we allow $\bigO((\log n)^{1-\epsilon})$ bits of memory for an arbitrary constant $\epsilon>0$ for every agent, the number of agents needed for exploration is at least $\Omega(\log \log n)$ (Theorem~\ref{theo_pebbles_lower_bound}). This construction also yields the lower bound for a single agent with pebbles, as $p+1$ agents with $\bigO((\log n)^{1-\epsilon})$ bits of memory each are more powerful than one agent with $\bigO((\log n)^{1-\epsilon})$ bits of memory and $p$ pebbles. Our results again allow to fully describe the tradeoff between the number of agents and the memory of each agent. When agents have $\Omega(\log n)$ memory, a single agent explores all $n$-vertex graphs. For agents with $\bigO((\log n)^{1-\epsilon})$ memory, $\Omega(\log \log n)$ agents are needed. On the other hand, when $\Omega(\log \log n)$ agents are available it is sufficient that each of them has only constant memory.

\subsection{Related work}
Exploration algorithms were first designed for mazes which are finite anonymous subgraphs of the two-dimensional grid where edges are labeled with their cardinal direction. Shannon~\cite{shannon51} constructed an actual physical device -- Shannon's mouse --  that explores a $5 \times 5$ grid and uses constant memory per vertex. Budach~\cite{budach78} gave a proof that no agent with finite memory can explore any finite maze. Shah~\cite{shah74} proved that exploration is possible when the agent can use five pebbles. Blum and Sakoda~\cite{blum77} reduced the number of pebbles to four, and Blum and Kozen~\cite{blum78} showed that in fact two pebbles suffice. This result is tight as Hoffmann~\cite{hoffmann81} proved that one pebble does not suffice.  Blum and Kozen~\cite{blum78} also showed that  any maze can be explored by two agents with finite memory. As a contrast, they showed that there are finite planar cubic graphs that cannot be explored by three agents. Rollik~\cite{rollik80} strengthened the latter result showing that for any finite set of agents with finite memory there is a planar graph -- a so-called \emph{trap} -- that cannot be explored. 
For~$k$ agents, the trap is of order $\smash{\tilde{\bigO}(s\uparrow \uparrow (2k+1))}$, where $\smash{a \uparrow \uparrow b := a^{a^{\udotdot^a}}}$ with~$b$ levels in the exponent. 
This bound was improved by Fraigniaud et al.~\cite{fraigniaud06} to $\smash{\tilde{\bigO}(s \uparrow \uparrow (k+1))}$, and is further improved in this paper (see \S~\ref{sec:results-multi}).

Aleliunas et al.~\cite{aleliunas79} showed that a random walk of length $n^5 \log n$ explores any $n$-vertex graph with high probability. By the probabilistic method, this implies the existence of a universal traversal sequence (UTS) of polynomial length for regular graphs. A $(n,d)$-UTS is a sequence of port numbers in $\{0,\dots,d-1\}$ such that an agent starting in an arbitrary vertex and following the  port numbers of the sequence explores every $d$-regular graph with $n$ vertices. 
Using a counter for the number of steps, this yields a log-space randomized algorithm constructing a UTS when the number of vertices is known. Using Nisan's derandomization technique \cite{nisan92}, this gives a deterministic algorithm with $\bigO(\log^2 n)$ memory. The length of the sequence, however, increases to $\smash{2^{\bigO(\log^2 n)}} = \bigO(n^{\log n})$. Explicit constructions of UTS are only known for cycles (Istrail~\cite{istrail88}) and to date it remains open whether an UTS of polynomial length can be constructed deterministically in log-space for general graphs.

As a remedy for the perceived difficulty of constructing a UTS, Kouck\'y~\cite{koucky02} introduced the related concept of universal exploration sequences (UXS) where the next port number depends on the port number of the edge to the previously visited vertex. 
Reingold~\cite{reingold08} showed that an $(n,d)$-UXS can be constructed deterministically in $\bigO(\log n)$ space. 
A slight modification of his algorithm allows to explore any (not necessarily) regular graph, thus, solving the undirected $s$-$t$ connectivity problem in log-space. Fraigniaud et al.~\cite{fraigniaud05} showed that $\Omega(\log n)$ memory is necessary to explore all graphs with up to $n$ vertices. The set of graphs $\mathcal{G}_k$ that can be explored by an agent with $k$ states is increasing in the sense that there is polynomial $h\colon \N \to \N$ such that $\mathcal{G}_{h(i)}$ is strictly included in $\mathcal{G}_i$, see Fraigniaud et al.~\cite{fraigniaud08}.

Kouck\'y~\cite{koucky02} noted that $1^{2n}$ is a UXS for trees. As remarked by Diks et al.~\cite{diks02}, this gives rise to a \emph{perpetual} tree exploration algorithm that runs forever, and eventually visits all vertices of the tree. 
For the cases that the exploration algorithm is required to terminate they showed that $\Omega(\log \log \log n)$ bits of memory are needed. If the algorithm is even required to terminate at the same vertex where it started, $\Omega(\log n)$ bits of memory are needed. 
A matching upper bound for the latter problem was given by Ambühl et al.~\cite{ambuhl11}.

Regarding the exploration time, Dudek et al.~\cite{dudek91} showed that an agent provided with a pebble can map an undirected graph in time $\bigO(n^2 \Delta)$. In a similar vein, Chalopin et al.~\cite{chalopin10} showed that if the starting node can be recognized by the agent, the agent can explore the graph in time $\bigO(n^3 \Delta)$ using $\bigO(n \Delta \log n)$ memory.
Exploration with the objective of minimizing the exploration time also has been studied in terms of competitive analysis. 
In this context an exploration algorithm for an edge weighted graph is called $c$-competitive if the sum of the weights of the edges traversed by the algorithm is at most $c$ times that of an optimal offline walk. If the task is to visit all vertices of a vertex-labeled weighted graph and an agent learns about all neighbors when arriving at a node, a nearest neighbor greedy approach is $\Theta(\log n)$-competitive \cite{rosenkrantz77}. Algorithms with constant competitive ratios are only known for planar graphs \cite{kalyana94} and, more generally, graphs with bounded genus \cite{megow12}. 

Exploration becomes considerably more difficult for directed graphs, where random walks may need  exponential time to visit all vertices. Without any constraints on memory, Bender et al.~\cite{bender02} gave an $\bigO(n^8 \Delta^2 )$-time algorithm that uses one pebble and explores (and maps) am unlabeled  directed graph with maximum degree~$\Delta$, when an upper bound on the number~$n$ of vertices is known. For the case that such an upper bound is not available, they proved that $\Theta(\log \log n)$ pebbles are both necessary
and sufficient. Concerning the space complexity of directed graph exploration, Fraigniaud and Ilcinkas~\cite{fraigniaud04} showed that $\Omega(n \log \Delta)$ bits of memory are necessary to explore any directed graph with $n$ vertices and maximum degree~$\Delta$, even with a linear number of pebbles. 
They provided an $\bigO(n \Delta \log n)$-space algorithm for terminating exploration with an exponential running time using a single pebble. 
They also gave an $\bigO(n^2 \Delta \log n)$-space algorithm running in polynomial time and using $\bigO(\log \log n)$ indistinguishable pebbles. 
According to Bender et al.~\cite{bender02} at least~$\Omega(\log \log n)$ pebbles are necessary in this setting.
For the exploration time of labeled directed graphs, where the task is to traverse all edges, the competitive ratios achievable by online algorithms are closely related to the deficiency of the graph \cite{deng99,albers00,fleischer05}.  
If, on the other hand, the agent learns about all neighbors when arriving at a node and has to visit all vertices of the graph, the best possible competitive ratio is $\Theta(n)$, even for Euclidean planar graphs \cite{foerster16}.

Further related is the problem of exploring geometric structures in the plane, see, e.g.,  \cite{chalopin13,chalopin15} for the exploration of simple polygons and \cite{blum97,deng98} for the exploration of other geometric terrains.

\subsection{Techniques and outline of the paper}

In \S~\ref{sec:prelim}, we fix notation and introduce the agent model. We give formal proofs of the intuitive facts that for undirected graph exploration an additional agent (with two states) is more powerful than a pebble, and a pebble is more powerful than a bit of memory (Lemmas~\ref{lem:pebble>bit} and~\ref{lem:agent>pebble}). 

Our main positive result is presented in \S~\ref{sec:pebbles} where we give a single agent exploration algorithm that explores any $n$ vertex graph with $\bigO(\log \log n)$ pebbles and $\bigO(\log \log n)$ bits of memory. In light of the lemmata presented in \S~\ref{sec:prelim}, we obtain as direct corollaries that (a) a single agent with $\bigO(\log \log n)$ pebbles and constant memory can explore any $n$-vertex graph and (b) a set of $\bigO(\log \log n)$ agents with constant memory each can explore any $n$-vertex graph.

For the algorithm, we use the concept of universal exploration sequences due to Kouck\'y~\cite{koucky02}. 
One of our main building blocks is the algorithm of Reingold~\cite{reingold08} that takes $n$ and $d$ as input and deterministically constructs an exploration sequence universal to all $d$-regular graphs using $\bigO(\log n)$ bits of memory. 
The general idea of our algorithm is to run Reingold's algorithm with a smaller amount of seed memory $a$. As the seed memory is substantially less than $\bigO(\log n)$, the algorithm will, in general, fail to explore the whole graph. 
We show, however, that the algorithm will visit $2^{\Omega{(a)}}$ distinct vertices (Lemma~\ref{lemma_uxs_general_graph}). 
Reinvoking Reingold's algorithm allows us to \emph{deterministically} walk along these vertices in the order of exploration of Reingold's algorithm. 
Using this traversal, we encode additional memory by placing a subset of pebbles on the vertices along the walk. 
Having boosted our memory this way, we run again Reingold's algorithm, this time with more memory, and recurse. 
At some recursion depth, running Reingold's algorithm with $a^*$ bits of memory will visit less than $2^{\Omega{(a^*)}}$ distinct vertices. 
We show that this can only happen when the graph is fully explored which allows to terminate the algorithm when this event occurs. Unwrapping all levels of recursion then also allows to return to the starting vertex.
The ability of our algorithm to terminate and return to the starting vertex after successful exploration, stands in contrast to Reingold's algorithm that is only able to terminate when being given the number $n$ of vertices as input.

There are a couple of technical difficulties to make these ideas work. 
The main challenge is that the memory generated by placing pebbles along a walk in the graph is implicit and can only be accessed and altered locally. 
To still make use of the memory, we do not work with Reingold's algorithm directly but consider an implementation of Reingold's algorithm on a Turing machine with logarithmically bounded working tape. We show that the tape operations on the working tape can be reproduced by the agent by placing and retrieving the pebbles on the walk. This allows to use the memory encoded by the pebble positions for further runs of Reingold's algorithm. In each recursion, we only need a constant number of pebbles and additional states. We further show that $\bigO(\log \log n)$ recursive calls are necessary to explore an $n$-vertex graph so that the total number of pebbles needed is $\bigO(\log \log n)$. 

A second challenge is that Reingold's algorithm produces a UXS for regular graphs which our graph need not be. A natural approach to circumvent this issue is to apply the technique of Kouck\'y~\cite{koucky03} that allows to locally view vertices with degree $d$ as cycles of $3d$ subvertices with degree $3$ each. 
Unfortunately, this approach requires $\bigO(\log d)$ bits of memory if we keep track of the current subvertex which may exceed the memory of our agent.
To circumvent this issue, we store the current subvertex only implicitly and navigate the graph in terms of subvertex index offsets instead of the actual subvertex indices.

Our general lower bound is presented in \S~\ref{sec:lower_bound}. Specifically, we show that for a set of cooperative agents with sublogarithmic memory of $\bigO((\log n)^{1-\varepsilon})$ for some constant $\varepsilon > 0$, $\Omega(\log \log n)$ agents are needed to explore any undirected graph with $n$ vertices. In light of our reduction presented in \S~\ref{sec:prelim}, this implies that an agent with sublogarithmic memory needs $\Omega(\log \log n)$ pebbles to explore any $n$-vertex graph.

To prove the lower bound, we use the concept of an $r$-barrier. Informally, an $r$-barrier is a graph with two special entry points such that any subset of up to $r$ agents with $s$ states cannot reach one entry point when starting from the other.
Moreover, a set of  $r+1$ agents can explore an $r$-barrier, but the agents can only leave the barrier via the same entry point. 
We recursively construct an $r$-barrier by replacing edges by $(r-1)$-barriers. 
A set of $r$ agents traversing this graph needs to stay close to each other to be able to traverse the barriers and make progress. 
However, if the agents stay close to each other, the states and relative positions of the agents become periodic relatively quickly, and we can use this fact to build an $r$-barrier. By carefully bounding the size of the $r$-barriers in our
recursive construction, we obtain a trap of size $\bigO(s^{2^{5 k}})$ for any given set of $k$ agents with at most $s$ states each. 
The size of the trap directly implies that the number of agents with at most $\bigO((\log n)^{1-\varepsilon})$ bits of memory needed for exploring any graph of size $n$ is at least 
$\Omega(\log \log n)$.

\section{Terminology and model}
\label{sec:prelim}

\subsection{The graph}

Let $G = (V,E)$ be an undirected graph with $n$ vertices to be explored by a single agent or a group of agents. All agents start at the same vertex $\startvertex \in V$. We say that $G$ is explored when each vertex of $G$ has been visited by at least one agent. Every graph considered in this paper is assumed to be connected as otherwise exploration is not possible. We further assume that the graph is anonymous, i.e., the vertices of the graph are unlabeled and therefore cannot be identified or distinguished by the agents.  
To enable sensible navigation, the edges incident to a vertex~$v$ have distinct labels $0,\ldots,d_v-1$, where $d_v$ is the degree of~$v$.
This way every edge~$\{u,v\}$ has two -- not necessarily identical -- labels called \emph{port numbers}, one at~$u$ and one at~$v$.

\subsection{The agents}
\label{sec:prelim:agent}

We model an agent as a tuple $A = (\Agentstate, \bar{\Agentstate}, \delta, \agentstartstate)$ where $\Agentstate$ is its set of states, $\bar{\Agentstate}\subseteq{\Agentstate}$ is its set of halting states, $\agentstartstate \in \Agentstate$ is its starting state, and $\delta$ is its transition function. 
The transition function governs the actions of the agent and its transitions between states based on its local observations. Its exact specifics depend on the problem considered, i.e., whether we consider a single agent or a group of agents and whether we allow the agents to use pebbles. 
Exploration terminates when a halting state is reached by all agents.

\subsubsection{A single agent without pebbles}
\label{sec:prelim:agents:single}
The most basic model is that of a single agent $A$ without any pebbles. In each step, the agent observes its current state $\agentstate \in \Agentstate$, the degree $d_v$ of the current vertex $v$ and the port number at $v$ of the edge from which $v$ was entered. The transition function $\delta$ then specifies a new state $\agentstate' \in \Agentstate$ of the agent and a move $l' \in  \{0,\dots,d_v-1\} \cup \{\bot\}$. If $l' \in \{0,\dots,d_v-1\}$ the agent enters the edge with the local port number $l'$, whereas for $l'=\bot$ it stays at $v$. Formally, the transition function is a partial function
\begin{align*}
\delta \colon \Agentstate  \times \N \times \N &\to \Agentstate \times (\N \cup \{\bot\}),\\ 
(\agentstate, d_v, l) &\mapsto (\agentstate', l').
\end{align*}
Note that the transition function only needs to be defined for $l$ with $l<d_v$ and degrees $d_v$ that actually appear in the class of graphs considered.
It is standard to define the space requirement of an an agent with states $\Agentstate$ as $\log |\Agentstate|$ as this is the number of bits needed to encode every state, see, e.g., Cook and Rackoff~\cite{cook1980}.

\subsubsection{A single agent with pebbles}
\label{sec:prelim:agents:pebbles}
We may equip the agent $A$ with a set $P = \{1,\dots,p\}$ of unique and distinguishable pebbles. At the start of the exploration the agent is carrying all of its pebbles. As before, the agent observes in each step the degree $d_v$ of the current vertex $v$ and the port number from which $v$ was entered. In addition, the agent  has the ability to observe the set of pebbles $P_{A}$ that it carries and the set of pebbles $P_v$ present at the current vertex $v$. The transition function $\delta$ then specifies the new state $\agentstate' \in \Agentstate$ of the agent, and a move $l' \in \{0,\dots,d_v-1\} \cup \{\bot\}$ as before. In addition, the agent may drop any subset $P_{\text{drop}} \subseteq P_A$ of carried pebbles and pick up any subset of pebbles $P_{\text{pick}} \subseteq P_v$ that were located at $v$, so that after the transition the set of carried pebbles is $P_{A}' = \left( P_{A} \setminus P_{\text{drop}}\right) \cup P_{\text{pick}}$ and the set of pebbles present at $v$ is $P_v' = \left(P_v \setminus P_{\text{pick}}\right) \cup P_{\text{drop}}$. Formally, we have
\begin{align*}
\delta \colon \Agentstate \times \N \times \N \times 2^P 	\times 2^P &\to \Agentstate \times (\N \cup \{\bot\}) \times 2^P \times 2^P,\\
(\agentstate,d_v,l,P_{A},P_{v}) &\mapsto (\agentstate',l',P_{A}',P_{v}').
\end{align*}
The transition function $\delta$ is partial  as it is only defined for $P_A \cap P_v = \emptyset$.
We assume that the pebbles are actual physical devices dropped at the vertices so that no space is needed to manage the pebbles, thus, the space requirement of the agent is again $\log|\Agentstate|$.

\subsubsection{A set of agents without pebbles}
\label{sec:prelim:agents:set}
Consider a set of $k$ cooperative agents $\smash{A_1 = (\Agentstate_1,\bar{\Agentstate}_1,\delta_1,\agentstartstate_1)}$, \dots, $\smash{A_k = (\Agentstate_k,\bar{\Agentstate}_k,\delta_k,\agentstartstate_k)}$ jointly exploring the graph. 
We assume that all agents start at the same vertex $v_0$. 
In each step, all agents synchronously determine the set of agents they share a location with, as well as the states of these agents. Then, all agents move and alter their states synchronously according to their transition functions $\delta_1,\dots,\delta_k$. The transition function of agent~$i$ determines a new state $\agentstate'$ and a move $l'$ as before. Formally, let 
$$\vec \Agentstate_{-i} = (\Agentstate_1 \cup \{\bot\}) \times \cdots \times (\Agentstate_{i-1} \cup \{\bot\}) \times (\Agentstate_{i+1} \cup \{\bot\}) \times \cdots \times (\Agentstate_k \cup \{\bot\})$$
denote the states of all agents potentially visible to agent $A_i$ where a $\bot$ at position $j$ (or $(j-1)$ if $j \geq i$) stands for the event that agent $A_i$ and agent $A_j$ are located on different vertices. Then, the transition function $\delta_i$ of agent~$A_i$ is a partial function
\begin{align*}
\delta_i : \Agentstate_i \times \vec \Agentstate_{-i} \times \N \times \N &\to \Agentstate_i	\times (\N \cup \{\bot\}),\\
(\agentstate_i, \vec \agentstate_{-i},d_v,l) &\mapsto (\agentstate_i',l_i').
\end{align*}
The overall memory requirement is $\sum_{i=1}^k \log |\Agentstate_i|$.

\subsection{Relationship between agent models}

In order to compare the capability of an agent $A$ with $s$ states and $p$ pebbles to another agent $A'$ with $s'$ states and $p'$ pebbles or a set of agents $\agentset$, we use the following notion: We say that the walk of an agent $A$ is \emph{reproduced} by an agent $A'$ in a graph $G$, if the sequence of edges traversed by $A$ is a subsequence of the edges visited by $A'$ in $G$. Put differently, $A$ traverses the same edges as $A'$ in the same order, but for every edge traversal of $A$ the agent $A'$ can do an arbitrary number of intermediate edge traversals. Similarly, we say that a set of agents $\agentset$ reproduces the walk of an agent $A$ in $G$, if there is an agent $A'\in \agentset$ such that $A'$ reproduces the walk of $A$ in $G$.

We first formally show the intuitive fact that pebbles are more powerful than memory bits. 

\begin{lemma}
\label{lem:pebble>bit}
Let $A$ be an agent with $s$ states and $p$ pebbles exploring a set of graphs $\mathcal{G}$. 
Then there is an agent $A'$ with six states and $p+\lceil \log s \rceil$ pebbles that reproduces the walk of $A$ on every $G \in \mathcal{G}$ and performs at most three edge traversals for every edge traversal of $A$.
\end{lemma}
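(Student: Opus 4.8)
The plan is to have $A'$ store the current state of $A$ physically in the pebbles, keeping only a constant-size phase counter in its own memory. Identify the state set of $A$ with $\{0,\dots,s-1\}$ and assume, after relabeling, that the start state $\agentstartstate$ is $0$. I equip $A'$ with the $p$ pebbles of $A$ together with $\lceil\log s\rceil$ additional \emph{state pebbles}, one per bit of the binary encoding of a state. The invariant I maintain is that, whenever the simulation is ``at rest'' at the vertex $u$ currently occupied by $A$, the state pebble for bit $i$ lies on $u$ if that bit of $\sigma$ is $1$ and is carried by $A'$ otherwise, while the $p$ pebbles of $A$ are always placed exactly as $A$ would place them. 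Since the pebbles on the current vertex and the carried pebbles are both part of the input to the transition function, $A'$ can decode the full state $\sigma$ of $A$ and the whole configuration of $A$'s pebbles at $u$ in a single step, and thus evaluate $\delta$; here I use that the transition function of $A'$ may be an arbitrary (large) table, as only the number of states — not the size of $\delta$ — counts towards the space bound. Because the start state is $0$, the initial configuration in which $A'$ carries all $p+\lceil\log s\rceil$ pebbles already encodes $\agentstartstate$, so no special initialization is needed.

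To simulate one transition, $A'$ reads $\sigma$ at the rest vertex $u$ and computes $\delta(\sigma,d_u,l)=(\sigma',l',P_{\mathrm{drop}},P_{\mathrm{pick}})$. If $\sigma$ is a halting state, $A'$ halts. If $l'=\bot$, then $A$ stays, and $A'$ simply performs $A$'s pebble operations and rewrites the state pebbles on $u$ to encode $\sigma'$, all within the single rest step and without any traversal. The interesting case is a move along port $l'$ to a neighbour $v$. The obstacle is that $A'$ cannot remember $\sigma'$ while relocating the state pebbles, since it has only constant memory; the remedy is to encode $\sigma'$ in the \emph{positions} of the state pebbles using three traversals.

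Concretely, in the rest step at $u$, besides executing $P_{\mathrm{drop}}$ and $P_{\mathrm{pick}}$, the agent rearranges the state pebbles so that it \emph{carries} exactly the $1$-bits of $\sigma'$ and \emph{leaves on $u$} exactly the $0$-bits of $\sigma'$, and then traverses $l'$ to $v$, entering a ``deposit'' phase. Upon arriving at $v$ it drops all carried state pebbles there; these are precisely the $1$-bits of $\sigma'$, so $v$ now holds the correct ``present'' part of the encoding. It then returns to $u$ by retracing the port through which it just entered $v$ — the incoming port is always observable — enters a ``fetch'' phase, picks up the state pebbles parked on $u$ (the $0$-bits of $\sigma'$), and finally traverses back to $v$ along the incoming port, returning to the rest phase. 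Crucially, $A'$ never needs to remember $\sigma'$ or any port number: each return merely retraces the incoming port, and the identity of the bits is stored in which pebbles sit where. After these three traversals the invariant holds at $v$ with state $\sigma'$, while all of $A$'s pebbles rest exactly where $A$ would leave them.

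It remains to check the bookkeeping. Each simulated transition uses at most three edge traversals (zero when $A$ stays), the first of which is exactly $A$'s own traversal of $\{u,v\}$, so $A'$ reproduces the walk of $A$ with the claimed slowdown, and $A'$ visits every vertex $A$ visits and halts when $A$ halts. The agent uses the $p$ pebbles of $A$ plus $\lceil\log s\rceil$ state pebbles, and only a constant number of states — a rest/read phase, the two intermediate phases above, and a halting phase — so that six states suffice. I expect the main difficulty to lie precisely in the move case: relocating the $\lceil\log s\rceil$ state pebbles with only constantly many states of $A'$ forces the three-phase ``carry the ones forward, park the zeros, then fetch them'' maneuver and the systematic use of the incoming port for navigation. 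The one-step decoding of $\sigma$ and of $A$'s pebble configuration, together with the fact that the transition table of $A'$ may be arbitrarily complex, is what makes a single rest step powerful enough to drive the whole simulation.
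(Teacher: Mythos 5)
Your proposal is correct and follows essentially the same route as the paper: encode the state of $A$ in the carried-versus-dropped partition of $\lceil\log s\rceil$ extra pebbles, decode it from the observed pebble sets in a single transition, and relocate the state pebbles across each traversed edge with a forward--back--forward maneuver costing three traversals. The only (harmless) deviations are cosmetic: you relabel the start state to avoid a dedicated initialization state, and your deposit/fetch ordering dispenses with the paper's extra ``swap'' state, so you in fact get by with fewer than six states.
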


\begin{proof}
As the set of graphs $\mathcal{G}$ that can be explored by an agent with $s$ states and $p$ pebbles is non-decreasing in $s$, it suffices to show the claimed result for the case that $s$ is an integer power of two. 
Let $A = (\Agentstate, \bar{\Agentstate}, \delta, \agentstartstate)$ be an agent with a set of $p$ pebbles $P$ and $s = |\Agentstate| = 2^r$, $r \in \mathbb{N}$ states exploring all graphs $G \in \mathcal{G}$. In the following, we construct an agent $A' = (\Agentstate', \bar{\Agentstate}', \delta', {\agentstartstate}')$ with six states $\Agentstate' = \{{\agentstartstate}',\agentstate_{\text{comp}}, \bar{\agentstate}_{\text{halt}}, \agentstate_{\text{back}-1}, \agentstate_{\text{back}-2}, \agentstate_{\text{swap}}\}$, one halting state $\bar{\Agentstate}' = \{\bar{\agentstate}_{\text{halt}}\}$, and a set $P'$ of $|P'| = p+r$ pebbles. 
The general idea is to let $A'$ store the state of $A$ by dropping and retrieving the additional $r$ pebbles. 
To this end, we identify $p$ of the pebbles of $A'$ with the $p$ pebbles of $A$ and call the additional set of $r$ pebbles  $P'_{\Agentstate}$, i.e., $P'=P \cup P'_{\Agentstate}$ with $|P| = p$ and $|P'_{\Agentstate}| = r$, respectively. 
Since $|P'_{\Agentstate}|=r$ and $|\Agentstate| = s = 2^r$, there is a canonical bijection $\smash{f : \Agentstate \to 2^{P'_{\Agentstate}}}$.
Every edge traversal of agent $A$ in a state $\sigma$, will be simulated
by agent $A'$ in the computation state $\agentstate_{\text{comp}}$ while
carrying the set of pebbles $f(\sigma)$ plus the additional pebbles that $A$ is carrying. We need the additional states $\agentstate_{\text{back}-1}$, $\agentstate_{\text{back}-2}$, $\agentstate_{\text{swap}}$ to move all pebbles in $P'_{\Agentstate}$ encoding the state of $A$ to the next vertex in some intermediate steps. 

At the start of the exploration, $A'$ remains at the starting vertex and stores the starting state~$\agentstartstate$ of agent $A$ by dropping the set of pebbles $(P'_{\Agentstate} \setminus f(\agentstartstate))$.
Formally, we define the transition from the starting state~${\agentstartstate}'$ of agent $A'$ as
\begin{align*}
\delta'({\agentstartstate}', d_v, l, P', \emptyset) = (\agentstate_{\text{comp}}, \bot, f(\agentstartstate) \cup P, (P'_{\Agentstate} \setminus f(\agentstartstate)).	
\end{align*}
for all $d_v, l \in \N$. 

Next, we define the transition function $\delta'$ of $A'$ for the case that $A'$ is in its computing state $\agentstate_{\text{comp}}$, i.e., we want to simulate the change of state of $A$ and traverse the same edge. 
If $\agentstate = f^{-1}(P_{A'} \cap P'_{\Agentstate})$ is the current state of agent $A$ and agent $A$ transitions according to
\begin{align}
\label{eq:define_q_comp_1}
\delta(\agentstate, d_v, l, P_{A'} \cap P, P_v \cap P) &= (\agentstate',l',P_A',P_v')
\end{align}
with $\agentstate' \in \Agentstate$, $l' \in \N$ and $P_A', P'_v \in 2^{P'}$, then we define
\begin{align}
\label{eq:define_q_comp_2}
\delta'(\agentstate_{\text{comp}},d_v,l,P_{A'},P_v) &=
\begin{cases}
(\agentstate_{\text{comp}},\;\; l', P_A' \cup f(\agentstate'),P_v' \cup (P'_{\Agentstate} \setminus f(\agentstate')) & \text{ if } l' = \bot \text{ and } \agentstate' \notin \bar{\Agentstate},\\
(\agentstate_{\text{back}-1}, l', P_A' \cup f(\agentstate'),P_v' \cup (P'_{\Agentstate} \setminus f(\agentstate')) & \text{ if } l' \neq \bot \text{ and } \agentstate' \notin \bar{\Agentstate},\\
(\agentstate_{\text{halt}},\;\;\;\; l', P_A' \cup f(\agentstate'),P_v' \cup (P'_{\Agentstate} \setminus f(\agentstate')) & \text{ else. } 
\end{cases}
\end{align}
Note that before and after this transition the subset of pebbles from
$P'_{\Agentstate}$ carried by $A'$ encodes the state of $A$ via the bijection $f$. 
However, if $A$ traverses an edge without entering a halting state, 
we also need to fetch the remaining pebbles from $P'_{\Agentstate}$ from the previous vertex to be able to encode the state of $A$ in the future. To this end, $A'$ switches to the state $\sigma_{\text{back}-1}$. 
The fetching will be done in three steps: First, $A'$ drops all pebbles
in  $f(\agentstate')$, moves to the previous vertex and changes its state to $\agentstate_{\text{back}-2}$. Formally, this means 
\begin{align*}
\delta'(\agentstate_{\text{back}-1}, d_v, l, P_{A'}, P_v) = \left(\agentstate_{\text{back}-2}, l, P_{A'} \setminus P'_{\Agentstate}, P_v \cup \left(P'_{\Agentstate}  \cap P_{A'} \right) \right) 
\end{align*}
for all $d_v, l \in \N$ and $P_{A'},P_v \in 2^{P'}$ with $P_{A'} \cap P_v=\emptyset$. 
Then it picks up the pebbles in  $\left(P'_{\Agentstate} \setminus f(\agentstate')\right)$, returns to the current vertex of $A$ and changes its state to
$\agentstate_{\text{swap}}$, i.e.,
\begin{align*}
\delta'(\agentstate_{\text{back}-2}, d_v, l, P_{A'}, P_v) = \left(\agentstate_{\text{swap}}, l, P_{A'} \cup \left(P'_{\Agentstate}  \cap P_v\right), P_v \setminus P'_{\Agentstate}\right)
\end{align*}
for all $d_v, l \in \N$ and $P_{A'},P_v \in 2^{P'}$ with $P_{A'} \cap P_v=\emptyset$. Lastly, agent $A'$ swaps the set of carried pebbles  $P'_{\Agentstate} \setminus f(\agentstate')$ and the set $f(\agentstate')$ of pebbles on the current vertex by performing the transition
\begin{align*}
\delta'(\agentstate_{\text{swap}},d_v,l,P_{A'},P_v) &=
\left(\agentstate_{\text{comp}}, \bot, P_{A'} \cup \left( P'_{\Agentstate} \cap P_v \right), P_v \cup \left( P'_{\Agentstate} \cap P_{A'} \right) \right)
\end{align*}
 for all $d_v, l \in \N$ and $P_{A'},P_v \in 2^{P'}$ with $P_{A'} \cap P_v=\emptyset$. 
 
A simple inductive proof establishes that the state $\agentstate$ of $A$ in every step of the exploration of a graph  $G \in \mathcal{G}$
corresponds to the set of pebbles in $P'_{\Agentstate}$ carried by $A'$ in its computation state $\agentstate_{\text{comp}}$, i.e., $\agentstate=f^{-1}\left(P_{A'} \cap  P'_{\Agentstate}\right)$. Moreover, if agent $A$ in state $\agentstate$ traverses an edge $\{v,w\}$ from a vertex $v$ to a vertex $w$ and does not move to a halting state, then $A'$ will traverse the edge $\{v,w\}$ three times and afterwards again the set of
pebbles carried by $A$ will correspond to $P_{A'}\cap P$ and the state of $A$ to  $\agentstate=f^{-1}\left(P_{A'} \cap  P'_{\Agentstate}\right)$. If $A$ remains at the same vertex
or moves to a halting state then this transition is mirrored by a single transition of agent $A'$.
In particular, agent $A'$ visits exactly the same vertices as $A$ in every graph $G \in \mathcal{G}$ while performing at most three times the number of edge traversals. 
\end{proof}

Next, we show the intuitive result that an additional agent is more powerful than a pebble.

\begin{lemma}
\label{lem:agent>pebble}
Let $A$ be an agent with $s$ states and $p$ pebbles exploring a set $\mathcal{G}$ of graphs. Then, there is a set $\mathcal{A} = (A_0,\dots,A_p)$ of $p+1$ agents,  where $A_0$ has $s$ states and all other agents have two states, that reproduce the walk of $A$ in every graph $G\in \mathcal{G}$. Moreover, for every edge traversal of $A$ each agent in $\mathcal{A}$ performs at most one edge traversal.
\end{lemma}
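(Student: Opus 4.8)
The plan is to single out one agent $A_0$ to act as a faithful copy of $A$ and to use the remaining $p$ agents $A_1,\dots,A_p$ as stand-ins for the $p$ distinguishable pebbles. I would give $A_0$ the state set $\Sigma$ of $A$ together with the same halting and starting states, so that $A_0$ has exactly $s$ states, and give each $A_i$ only two states, say $\agentstate_{\mathrm{carried}}$ and $\agentstate_{\mathrm{dropped}}$, with every $A_i$ starting in $\agentstate_{\mathrm{carried}}$ at $\startvertex$ (mirroring that $A$ initially carries all pebbles). The intended invariant is that pebble $i$ is carried by $A$ exactly when $A_i$ is in state $\agentstate_{\mathrm{carried}}$ on the same vertex as $A_0$, and pebble $i$ rests on a vertex $u$ exactly when $A_i$ is in state $\agentstate_{\mathrm{dropped}}$ on $u$. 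The essential enabling observation is that in the multi-agent model each agent sees the states of all co-located agents indexed by agent; hence $A_0$ can read off from its current neighbours precisely the sets $P_A\cap P$ and $P_v\cap P$ that $A$ would observe, evaluate $A$'s transition function $\delta$, and reproduce $A$'s state change and move one-to-one.

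Next I would fix the behaviour of the pebble agents so as to preserve this invariant. A dropped agent that does not share its vertex with $A_0$ simply stays and keeps its state, which is what makes it behave like an inert pebble. Whenever $A_0$ is present, each pebble agent recomputes $A$'s transition from the locally available data (its own state, the co-located agents' states, the degree, and its incoming port) and acts accordingly: an agent that $A$ keeps carrying follows $A_0$ along the edge $l'$ and remains carried, an agent that $A$ drops stays put and switches to $\agentstate_{\mathrm{dropped}}$, and an agent that $A$ picks up switches to $\agentstate_{\mathrm{carried}}$ and accompanies $A_0$. The decisive consistency requirement is that all agents meant to move together compute the \emph{same} target edge $l'$. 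For $A_0$ and for every pebble that was already being carried this is automatic: such a pebble entered the current vertex along the very same edge as $A_0$, so it observes the same incoming port $l$, and feeding identical arguments into $\delta$ yields identical moves. A straightforward induction over the steps of $A$ then shows that the invariant is maintained, that $A_0$ visits exactly the vertices that $A$ visits and in the same order (so $\agentset$ reproduces the walk of $A$ through $A_0$), and that every step of $A$ in which it traverses an edge is realised by a single synchronous round in which the moving agents traverse exactly that one edge, while steps in which $A$ stays put cause no traversals; this gives the bound of at most one edge traversal per agent per edge traversal of $A$.

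I expect the genuine obstacle to be the synchronization of a pebble at the moment $A$ \emph{picks it up and simultaneously moves}. Such an agent is sitting on the current vertex as a dropped pebble, but it generally entered that vertex earlier along a different edge than $A_0$ did, so its recorded incoming port need not equal the port $l$ that $A$ currently sees; recomputing $\delta$ from its own, stale incoming port could then send it along the wrong edge and destroy the invariant. The plan to cope with this is to split each step of $A$ into a fixed, bounded sequence of synchronous rounds in which $A_0$ and the carried pebbles do not yet move: pickups are performed first as pure state changes, so that the progress of the current step is encoded in the pebble agents' own states rather than in any auxiliary state of $A_0$, and only afterwards is the common move executed with the target edge dictated by agents that already share $A_0$'s incoming port. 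The delicate part of the argument is to verify that these auxiliary rounds involve no edge traversals at all, so that the per-step traversal bound is untouched, and that $A_0$ can always determine from the visible configuration which sub-step of the current move it is in, so that it never needs a state outside $\Sigma$; establishing this bookkeeping carefully, and confirming that the picked-up pebble ends the step co-located with $A_0$ in state $\agentstate_{\mathrm{carried}}$, is where the real work of the proof lies.
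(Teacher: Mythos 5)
Your core construction is the same as the paper's: $A_0$ inherits the state set of $A$, each pebble $j$ becomes a two-state agent $A_j$ with states ``carried'' and ``dropped'', and every agent colocated with $A_0$ evaluates $A$'s transition function $\delta$ on the data it can see (the state of $A_0$, the degree, its own incoming port, and the carried/dropped pattern read off from the colocated agents' states), after which $A_0$ adopts $(\sigma',l')$, an agent in $P_A'$ adopts $(c_j,l')$, and an agent in $P_v'$ adopts $(d_j,\bot)$. The paper performs all of this in a single synchronous round per step of $A$ -- including the case where a pebble is picked up and the agent moves in the same step -- and then asserts the invariant by a one-sentence induction; it does not introduce any sub-rounds.

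The difference is that you single out the pick-up-and-move case as the crux (a dropped $A_j$'s recorded incoming port is stale, so it cannot in general recompute the $l'$ that $A_0$ will follow) and you propose to repair it with auxiliary rounds. You are right that this is the one delicate point of the induction, but your repair is both unfinished and, as described, cannot close the gap. First, performing the pickup ``as a pure state change'' does not help: an agent's incoming port only changes when it traverses an edge, so after the state-change round the freshly carried $A_j$ still holds its stale port and faces exactly the same inability to compute $l'$ in the move round. Second, the interleaved rounds change the carried/dropped pattern that $A_0$ observes (pebble $j$ now looks carried rather than present at the vertex), so in the move round $A_0$ would feed the wrong sets $P_A\cap P$ and $P_v\cap P$ into $\delta$ unless it remembers that it is mid-pickup -- and it has no spare states in which to record this, since it is required to have exactly the $s$ states of $A$. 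You explicitly defer ``the real work'' to this bookkeeping, and the mechanism you sketch does not supply it; as it stands the proposal does not establish the invariant in the pickup case, which is precisely the case that distinguishes a pebble-using agent from a pebble-free one.
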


\begin{proof}
Let $A = (\Agentstate, \bar{\Agentstate}, \delta, \agentstartstate)$ be an agent with $|\Agentstate| = s$ and a set $P = \{1,\dots,p\}$ of $p$ pebbles exploring all graphs $G \in \mathcal{G}$. We proceed to construct a set $\mathcal{A} = \{A_0,\dots,A_p\}$ of $p+1$ agents $A_i = (\Agentstate_i, \bar{\Agentstate}_i, \delta_i, \agentstartstate_i)$, $i \in \{0,\dots,p\}$ that reproduces the walk of $A$ on all graphs $G \in \mathcal{G}$. In this construction, agent~$A_0$ represents the original agent $A$ while every agent $A_i$ for $i>0$ represents a pebble.

For agent~$A_0$, we set $\Agentstate_0 = \Agentstate$, $\bar{\Agentstate}_0 = \bar{\Agentstate}$, and $\agentstate_0^* = \agentstate^*$. For every agent $A_i$ with $i \in P$, we set $\Agentstate_i = \{c_i,d_i\}$, $\bar{\Agentstate}_i = \Agentstate_i$, and $\agentstartstate_i = c_i$. 
Intuitively, the state $c_i$ simulates that pebble~$i$ is carried and $d_i$ simulates that the pebble is dropped. In every step, we let agent
$A_0$ and the agents $A_i$ corresponding to a carried pebble do the same transitions as agent $A$. Agents that are not sharing their current vertex with $A_0$ remain at their vertex and in their state.
Let $\agentstate_{-i,j}$ for $i,j \in \{0,\ldots, p\}$ with $i \neq j$ denote the
state of agent $A_j$ visible to agent $A_i$, i.e., $\agentstate_{-i,j}=\agentstate_j$ if $A_i$ and $A_j$ share the same vertex and $\agentstate_{-i,j}=\bot$ otherwise.
Specifically, to define the transition functions $\delta_i(\agentstate_i, \vec \agentstate_{-i}, d_v, l)$ for $i \in \{0,\dots,p\}$, $\agentstate_i \in \Agentstate_i$, $\vec \agentstate_{-i} \in \vec \Agentstate_{-i}$ and $d_v,l \in \N$, we first compute
 \begin{align*}
\delta(\agentstate_0, d_v, l, \{i \in P : \agentstate_{-0,i} = c_i\}, \{i \in P : \agentstate_{-0,i} = d_i\}) = (\agentstate'_0, l', P_A', P_v')
\end{align*}
with $\agentstate'_0 \in \Agentstate$, $l' \in \N$, and $P_A', P_v' \in 2^P$. We then set
\begin{align*}
\delta_0(\agentstate_0, \vec{\agentstate}_{-0}, d_v, l) &= (\agentstate_0', l')
\intertext{
and
}
\delta_j(\agentstate_j, \vec{\agentstate}_{-j}, d_v, l) &=
\begin{cases}
(\agentstate_j,\bot) & \text{ if $\agentstate_0 = \bot$ }\\	
(c_j,l')   & \text{ if $\agentstate_0 \neq \bot$ and $j \in P_A'$}\\
(d_j,\bot)   & \text{ if $\agentstate_0 \neq \bot$ and $j \in P_v'$}
\end{cases}
\end{align*}
for all $j \in P$.

To finish the proof, fix a graph $G \in \mathcal{G}$ and consider the transitions of agent $A$ and the set of agents $\mathcal{A}$ in $G$.
A simple inductive proof shows that after $i$ transitions, the state and position of agent $A$ equals the state and position of agent $A_0$, the position of agent $A_j$ equals the position of pebble $j$ and $\sigma_j=c_j$ if and only if pebble $j$ is carried by $A$ for all $j \in P$. This implies the claim.
\end{proof}

Note that, for ease of presentation, we allow agents to make transitions even when they are in one of their halting states. We need this property in the proof above to show that two-state agents are more powerful than pebbles (cf.~Lemma~\ref{lem:agent>pebble}) in general. 
However, this reduction only needs agents to make transitions from their halting states to other halting states, and only when colocated with another agent that has not yet reached a halting state.
Furthermore, our main algorithm for single-agent exploration with pebbles that we devise in \S~\ref{sec:pebbles} has the special property that the agent $A_0$ returns to the starting vertex carrying all pebbles after having explored the graph. Thus, for our algorithm it is not necessary that agents can make transitions from halting states as we could add an additional halting state to the two-state agents to which they transition once exploration is complete and $A_0$ has returned to the starting vertex.


\section{Exploration algorithms} \label{sec:pebbles}

In this section, we devise an agent exploring any graph on at most $n$ vertices with $\bigO(\log \log n)$ pebbles and $\bigO(\log \log n)$ memory. By the reductions between the agents' models given in \S~\ref{sec:prelim} this implies that (a) an agent with $\bigO(\log \log n)$ pebbles and constant memory can explore any $n$-vertex graph and (b) that a set of $\bigO(\log \log n)$ agents with constant memory each can explore any $n$-vertex graph.

For our algorithm, we use the concept of \emph{exploration sequences} (Kouck{\'{y}} \cite{koucky02}). An exploration sequence is a sequence of integers $e_0,e_1,e_2,\ldots$ that guides the walk of an agent
through a graph~$G$ as follows: 
Assume an agent starts in a vertex $v_0$ of $G$ and let $l_0=0$.
Let~$v_i$ denote the agent's location in step $i$ and~$l_i$ the port number of the edge at~$v_i$ leading back to the previous location.
Then, the agent \emph{follows} the exploration sequence $e_0,e_1, e_2,\dots$ if, in each step~$i$, it traverses the edge with port number ${(l_i+e_i) \bmod d_{v_i}}$
at~$v_i$ to the next vertex $v_{i+1}$,  where $d_{v_i}$ is the degree of~$v_i$. 
An exploration sequence is \emph{universal} for a class of undirected, connected, locally edge-labeled 
graphs $\mathcal{G}$ if an agent following it explores every graph $G \in \mathcal{G}$ for any starting vertex in $G$, i.e., for any starting vertex it visits all vertices of $G$. For a set $M$, we further use the notation $M^*:=\bigcup_{i=1}^\infty M^i$ to denote the set of finite sequences with elements in $M$.
The following fundamental result of Reingold~\cite{reingold08} establishes that universal exploration sequences can be constructed in logarithmic space.

\begin{theorem}[\cite{reingold08}, Corollary 5.5]\label{theo_reingold}
There exists an algorithm taking $n$ and $d$ as input and producing in $\bigO(\log n)$ space an exploration sequence universal for all connected $d$-regular graphs on $n$ vertices.
\end{theorem}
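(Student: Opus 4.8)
The plan is to follow Reingold's strategy: transform the (only implicitly available) input graph into a constant-degree \emph{expander} by a sequence of operations that can be navigated in $\bigO(\log n)$ space, explore that expander by a trivially short universal sequence, and then pull this exploration back to the original graph. The central tool is the \emph{zig-zag product} together with graph powering, which jointly amplify the spectral gap while keeping the degree bounded. Throughout, ``expansion'' is measured by the second-largest eigenvalue $\lambda$ of the normalized transition matrix, a graph being a good expander exactly when $1-\lambda$ is bounded below by a constant.

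First I would reduce to graphs of a fixed constant degree $D$ (adding self-loops turns any connected $d$-regular graph into a $D$-regular non-bipartite one without altering its connectivity structure) and recall that such a graph on $n$ vertices satisfies $\lambda \le 1 - 1/\mathrm{poly}(n)$. I then fix, once and for all, a constant-size expander $H$ and define an iterated construction $G_0, G_1, G_2, \dots$ in which $G_{i+1}$ arises from $G_i$ by first taking a constant power (replacing $\lambda$ by $\lambda^t$, which enlarges the gap $1-\lambda$ but blows up the degree) and then taking the zig-zag product $G_i \otimes_{\mathrm z} H$ (which restores the degree to $D$ while only mildly worsening the gap). The key spectral inequality $\lambda(G \otimes_{\mathrm z} H) \le g(\lambda(G),\lambda(H))$, with $g$ small whenever both arguments are small, ensures that each round multiplies $1-\lambda$ by a constant factor greater than one until it is constant, so after $\ell = \bigO(\log n)$ rounds one reaches $\lambda(G_\ell) \le 1/2$. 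At every step $G_\ell$ is $D$-regular with $n\cdot |H|^{\bigO(\ell)} = \mathrm{poly}(n)$ vertices, each of which is a tuple formed by a vertex of $G_0$ together with $\bigO(\ell)=\bigO(\log n)$ cloud coordinates.

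Because $G_\ell$ is a constant-degree expander, its diameter is $\bigO(\log n)$, so a fixed universal exploration sequence for it can be produced by brute-force enumeration over the constant parameters; following it visits every vertex of $G_\ell$ and hence every vertex of $G_0$. It then remains to translate each step of this walk into a port offset in the original graph. Here I would exploit that powering and the zig-zag product each carry an explicit \emph{rotation map}, and that these compose, so computing the neighbor of a given $\bigO(\log n)$-bit vertex tuple along a given port reduces to $\bigO(\ell)$ nested applications of the rotation map of $G_0$ and of the constant-size rotation map of $H$, all executable in $\bigO(\log n)$ space. Crucially, one arranges this navigation to be \emph{oblivious}: the emitted sequence of offsets depends only on $n$ and $d$, never on the actual edges of $G_0$, which is precisely what upgrades the output from a connectivity test to a genuine universal exploration sequence over port offsets in $\{0,\dots,d-1\}$.

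The hardest part will be twofold. First, the spectral analysis: proving the zig-zag bound and verifying that alternating powering with the zig-zag product actually drives $\lambda$ below a constant within $\bigO(\log n)$ rounds, rather than stalling, requires the careful tensor-decomposition argument of Reingold--Vadhan--Wigderson. Second, and more delicate for the universal-sequence statement, is carrying out the entire iterated navigation in $\bigO(\log n)$ space \emph{and} keeping it graph-oblivious at the same time, so that the offsets form a single fixed sequence valid for \emph{every} $d$-regular $n$-vertex graph. Since the vertex tuples themselves already have logarithmic length, one cannot afford to store the intermediate graphs and must instead recompute coordinates on the fly through the composed rotation maps; reconciling this recomputation with the obliviousness requirement is the technical heart of the construction.
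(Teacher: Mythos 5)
The paper does not prove this statement at all --- it is imported verbatim as Corollary~5.5 of Reingold's work \cite{reingold08} and used as a black box. Your sketch faithfully reproduces Reingold's actual argument (constant powering alternated with the zig-zag product to drive the spectral gap to a constant in $\bigO(\log n)$ rounds, log-space navigation via composed rotation maps, and the obliviousness needed to upgrade the connectivity algorithm to a universal exploration sequence), so it is correct and takes essentially the same route as the cited source.
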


Reingold's result implies in particular that there is an agent without pebbles and~$\bigO(n^c)$ states for some constant $c$ that explores any $d$-regular graph with $n$ vertices when both $n$ and $d$ are known. We further note that Reingold's algorithm can be implemented on a Turing machine that has a read/write tape of length $\bigO(\log n)$ as work tape and writes the exploration sequence to a write only output tape, see \cite[\S~5]{reingold08} for details. 
For formal reasons the Turing machine in~\cite{reingold08} additionally has a read-only input tape from which it reads the values of $n$ and $d$ encoded in unary so that the space complexity of the algorithm is actually logarithmic in the input length. 
For our setting, it is sufficient to assume that $n$ and $d$ are given as binary encoded numbers on the working tape of length $\bigO(\log n)$, as we care only about the space complexity of exploration in terms of the number of vertices $n$. 

 As a first step, we show in Lemma~\ref{lem_uxs_closed_3_reg_graph} how to modify Reingold's algorithm for 3-regular graphs to yield a closed walk containing an exponential number of vertices in terms of the memory used. Afterwards, we extend this result to general graphs in Lemma~\ref{lemma_uxs_general_graph}.

\begin{lemma}\label{lem_uxs_closed_3_reg_graph}
For any $z \in \N$, there exists a $\bigO(\log z)$-space algorithm producing an exploration sequence 
${w\in \{0,1,2\}^*}$ such that for all connected 3-regular graphs $G$ with $n$ vertices the following hold:
\begin{enumerate}
\item an agent following $w$ in $G$ explores at least $\min\{z,n\}$ distinct vertices,
\item $w$ yields a closed walk in $G$,
\item the length of $w$ is bounded by $z^{\bigO(1)}$.
\end{enumerate}
\end{lemma}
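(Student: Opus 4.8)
The plan is to run Reingold's algorithm (Theorem~\ref{theo_reingold}) with degree parameter $d = 3$ and vertex parameter $z' := z + \bigO(1)$, where the hidden constant is the (constant) number of extra vertices needed by the capping construction below. This yields, in $\bigO(\log z') = \bigO(\log z)$ space, a sequence $u = e_0 e_1 \cdots e_{L-1}$ with $e_i \in \{0,1,2\}$ of length $L = (z')^{\bigO(1)} = z^{\bigO(1)}$ that is universal for all connected $3$-regular graphs on at most $z'$ vertices (which is what Reingold's construction provides). The output $w$ will be $u$ followed by a reversal $\bar u$ of $u$; I would then verify the three claims in turn, noting that following $w$ visits exactly the same vertex set as following the prefix $u$ (since $\bar u$ only retraces steps), so Item~1 for $w$ reduces to Item~1 for $u$.

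For Item~1, if $n \le z$ then $G$ is a connected $3$-regular graph on at most $z \le z'$ vertices, so by universality the agent following $u$ visits all $n = \min\{z,n\}$ vertices. The interesting case is $n > z$, which I would treat by contradiction: suppose the agent following $u$ visits only a set $S$ with $|S| < z$. Since $n > |S|$ and $G$ is connected, some port at a vertex of $S$ leads to an unvisited vertex, so $S$ has at least one \emph{unused} port, meaning a port never traversed in either direction during the walk. I would then build a connected $3$-regular graph $H$ on at most $z'$ vertices by keeping every edge of $G$ that the walk actually traverses (with its port labels at both endpoints intact) and re-routing the unused ports of $S$: pair most of them off internally and attach the remainder to a constant-size gadget containing at least one fresh, hence unvisited, vertex. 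This makes $H$ connected and $3$-regular with $|S| < |H| \le |S| + \bigO(1) \le z'$. Because only used ports govern the walk, an agent following $u$ in $H$ from $v_0$ reproduces exactly the walk in $G$ and therefore visits only $S \subsetneq V(H)$, contradicting universality of $u$ on $H$. Hence $|S| \ge z = \min\{z,n\}$.

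For Item~2, I would use reversibility of exploration sequences: if the forward walk visits $v_0,\dots,v_L$ with entry ports $l_0 = 0, l_1, \dots, l_L$, then appending $\bar u := 0,\,(-e_{L-1})\bmod 3,\,(-e_{L-2})\bmod 3,\,\dots,\,(-e_1)\bmod 3$ makes the agent retrace its path from $v_L$ back to $v_0$. A short induction verifies that, without resetting the entry port, the first appended symbol $0$ backtracks along the last edge to $v_{L-1}$, and each subsequent symbol $(-e_{L-j})\bmod 3$ cancels the offset $e_{L-j}$ and sends the agent from $v_{L-j}$ to $v_{L-j-1}$. Crucially, $\bar u$ depends only on the symbols of $u$ and not on $G$, so $w := u\,\bar u$ yields a closed walk in every connected $3$-regular graph, giving Item~2, while its length $2L = z^{\bigO(1)}$ gives Item~3. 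For the space bound, $u$ is streamed directly by Reingold's algorithm in $\bigO(\log z)$ space and $L$ is obtained with an $\bigO(\log z)$-bit counter during this pass; to emit $\bar u$ I would recompute $e_i$ on demand for decreasing $i$ by re-running Reingold's algorithm and counting outputs up to position $i$, reusing $\bigO(\log z)$ space each time (time is unconstrained here).

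The main obstacle is the capping construction in Item~1: I must show that $S$ together with its traversed edges can always be completed to a connected $3$-regular graph on at most $z'$ vertices that contains an unvisited vertex. The delicate points are the parity of the number $m$ of unused ports, which satisfies $m \equiv |S| \pmod 2$ and determines whether a constant-size gadget with one or two external stubs is attached (leaving one leftover port when $m$ is odd, two when $m$ is even), and ensuring $|H| \le z'$ even when $|S|$ is close to $z$; both are absorbed into the $\bigO(1)$ additive slack in $z'$. I expect verifying that the reproduced walk in $H$ is genuinely identical to the walk in $G$ (i.e.\ that the agent never needs an unused port) to be the one step that must be argued carefully rather than waved through.
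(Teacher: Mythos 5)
Your overall architecture (forward pass of a Reingold sequence, followed by the offset-negated reversal $0,(-e_{L-1})\bmod 3,\dots,(-e_1)\bmod 3$ to close the walk, plus a padding argument showing that a short walk could be embedded in a small $3$-regular graph and would contradict universality) matches the paper, and your reversal, your ``only traversed ports govern the walk'' reproduction argument, and your space/length accounting are all sound. You even avoid the paper's pigeonhole/periodicity step by using the full universal exploration sequence rather than a truncation of it, which is a legitimate simplification. However, there is one load-bearing gap: you invoke Reingold's theorem as producing a sequence universal for all connected $3$-regular graphs on \emph{at most} $z'$ vertices, with the parenthetical ``which is what Reingold's construction provides.'' Theorem~\ref{theo_reingold} as available here guarantees universality only for graphs on \emph{exactly} $n$ vertices, and a UXS for exactly-$N$-vertex graphs is not automatically universal for smaller graphs. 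Your entire Item~1 rests on this: in the case $n\le z$ you apply universality directly to $G$ (which has $n<z'$ vertices), and in the case $n>z$ your capped graph $H$ has only $|S|+\bigO(1)$ vertices, which can be far below $z'$ when $|S|$ is small. A constant-size capping gadget cannot bring $|H|$ up to the exact vertex count the theorem requires, so no contradiction with universality is obtained.

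The fix is exactly what the paper does: choose the Reingold parameter with slack (the paper uses $4z$), take the subgraph spanned by the visited vertices, and complete it to a $3$-regular graph on \emph{exactly} that many vertices using a \emph{variable-size} gadget --- a cycle of length $4z-|V_1|$ with chords, whose edges are successively rerouted to absorb all deficient ports; the slack $4z-|V_1|\ge 3|V_1|$ guarantees this procedure terminates, and a parity argument rules out a leftover degree-$2$ vertex. This also sidesteps the secondary issue in your construction that pairing unused ports of $S$ ``internally'' can create self-loops or parallel edges. Note that once you pad to exactly $4z$ vertices you can keep your use of the full sequence and drop the paper's truncation at $a=12zc_{\turing_0}+1$ steps and the accompanying configuration-repetition argument, or alternatively adopt the truncation as the paper does to get the length bound without computing the length of the full sequence; either route closes the gap.
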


\begin{proof}
By  Theorem~\ref{theo_reingold}, there is a Turing machine~$\turing_0$ with a tape of length $\bigO( \log z)$ producing a universal exploration sequence $e_1,e_2,\ldots $ for any 3-regular graph on exactly $4 z$~vertices. 
Let $c_{\turing_0}$ be the number of configurations of $\turing_0$ and $a := 12  z   c_{\turing_0}+1$. 
Here the number of configurations of $\turing_0$ is the number of possible combinations of Turing state, tape contents and head position of~$\turing_0$.

\begin{algorithm}[tb]
\floatname{algorithm}{Algorithm}
\caption{Turing machine $\turing$ computing exploration sequence for 3-regular graphs.}
\label{alg_tm_log_space_3_regular}

\begin{algorithmic}
\For{$t \in \{1, \ldots, 2 a \}$} 
	\If{ $t \leq a$}
		\State  run $\turing_0$ for $t$ steps to obtain element $e_t$ of the exploration sequence generated by~$\turing_0$
		\State output $e_t$
	\ElsIf{$t=a+1$}
		\State output $0$
	\ElsIf{$t \geq a+2$}
		\State  run $\turing_0$ for $2a+2-t$ steps to obtain  element $e_{2a+2-t}$ of  exploration sequence of $\turing_0$
		\State output $-e_{2s+2-t} \bmod 3$
	\EndIf
\EndFor
\end{algorithmic}
\end{algorithm}

The Turing machine $\turing$ producing an exploration sequence $w$ with the desired properties is given in Algorithm~\ref{alg_tm_log_space_3_regular}. 
By construction, the sequence $w$ produced by $\turing$ is
\begin{align*}
e_1, e_2, \ldots, e_a, 0, (-e_a \bmod 3), (-e_{a-1} \bmod 3), \ldots, (- e_2 \bmod 3).
\end{align*}
We first show that this sequence corresponds to a closed walk in any 3-regular graph. 
Let an agent $A$ start at a vertex~$v_0$ in some graph 3-regular $G$, follow the exploration sequence $w$, and, for $i \in \{1,\ldots, a\}$, let $v_i$ be the vertex reached after following $w$ up to $e_i$.
Then the offset 0 takes the agent back from $v_a$ to $v_{a-1}$ and afterwards $-e_i \bmod 3$ takes agent $A$ from $v_{i-1}$ to $v_{i-2}$. 
Thus, at the end the agent returns to $v_0$, which yields the second claim. 

Moreover, the number of configurations $c_{\turing_0}$ of the Turing machine $\turing_0$, i.e., the number of possible combinations of state, head position, and tape contents, is bounded by $z^{\bigO(1)}$, because the working tape has length $\bigO(\log z)$. 
Hence, the length of $w$, i.e., $2a= 2 \cdot (12  z  c_{\turing_0}+1)$, is also bounded by $z^{\bigO(1)}$, which yields the third claim.
As the auxiliary variable $t$ ranges from $1$ to~$2a$ and running the Turing machine $\turing_0$ for $t$ steps can be implemented in $\bigO(\log z)$ space, the Turing machine $\turing$ can be implemented to run in $\bigO(\log z)$ space.

It is left to show is that an agent following $w$ in an arbitrary connected 3-regular graph with~$n$ vertices explores at least $\min\{z,n\}$ vertices.
For the sake of contradiction, assume there exists some 3-regular graph~$G$ on~$n$~vertices so that an agent~$\agent$ starting in a vertex~$v_0$ and following the exploration sequence $w$ produced by $\turing$ only visits a set of vertices $V_0$ with $|V_0|< \min\{z,n\}$. 
Let~$G_0$ be the subgraph of $G$ induced by $V_0$. 
Note that, since $|V_0|<n$ by assumption, at least one vertex in $G_0$ has degree less than 3. 
We now extend $G_0$ to a connected 3-regular graph with~$4z$ vertices as follows. 
First, we let $G_1$ be the graph $G_0$ after adding
an isolated vertex if $V_0$ is odd and we let~$V_1$ be the vertex set of $G_1$.
We further let $G_2$ be a cycle of length $4z-|V_1|$ with opposite vertices connected by an edge. 
Note that $4z$ and $|V_1|$ are even and $G_2$ is 3-regular. 
As long as $G_1$ contains at least one vertex of degree less than 3, we delete an edge $\{w,w'\}$ connecting opposite vertices in the cycle in $G_2$ and for $w$ and then $w'$ add an edge from this vertex to a vertex of degree less than 3 in $G_1$ (possibly the same). 
This procedure terminates when all vertices in~$G_1$ have degree~3, since $G_2$ contains $4z-|V_1|\geq 3 z \geq 3 |V_1|$ vertices and there cannot be a single vertex of degree~2 left in~$G_1$, as this would mean that the sum of all vertex degrees in~$G_1$ is odd.
 The labels in~$\{0,1,2\}$ at both endpoints of every edge not in~$G_0$ are chosen arbitrarily.
Let~$H$ be the resulting 3-regular graph with $4z$ vertices containing~$G_0$ as induced subgraph. 

By construction, the walk of an agent~$\agent$ starting in~$H$ at~$v_0$ and following~$w$ is the same as the walk in $G$ starting in~$v_0$ and following $w$. In particular, 
the agent~$\agent$ does not explore $H$. Let now $\agent_0$ be an agent following the exploration sequence $w_0$ produced by $M_0$ starting in vertex $v_0$ in $H$. 
As the first $a$ values of $w$ and $w_0$ coincide, the walk of agent~$\agent_0$ in $H$ up to step~$a$ is the same as that of agent~$\agent$. Recall that $a=3 \cdot 4  z   c_{\turing_0}+1$.
This implies that in the first $a$ steps there must be a vertex $v$ in $H$ visited twice
by agent $\agent_0$ (there are $4 z$ vertices in $H$) and in both visits, the label to the previous vertex (there are 3 possible labels) is the same and the Turing machine $\turing_0$ producing the exploration sequence $w_0$ is in the same configuration (there are $c_{\turing_0}$ possible configurations) in both visits. But this implies that the behaviour of $\agent_0$ in $H$ becomes periodic
and it only visits the set of vertices already visited in the first $a$ steps, i.e., the set of vertices $V_0$. 
We conclude that $\agent_0$ does not explore $H$, contradicting that $w_0$ is a universal exploration sequence for all 3-regular connected graphs on $4 z$ vertices.
\end{proof}

We proceed to give a similar result for non-regular graphs.

\begin{lemma}\label{lemma_uxs_general_graph}
For any $z \in \N$, there exists a $\bigO(\log z)$-space algorithm producing an exploration sequence 
${w\in \{-1,0,1\}^*}$ such that for all connected graphs $G$ with $n$ vertices the following hold:
\begin{enumerate}
\item an agent following $w$ in $G$ explores at least $\min\{z,n\}$ distinct vertices,
\item $w$ yields a closed walk in $G$,
\item the length of $w$ is bounded by $z^{\bigO(1)}$.
\end{enumerate}
\end{lemma}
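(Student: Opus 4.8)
The plan is to reduce to the $3$-regular case already handled in Lemma~\ref{lem_uxs_closed_3_reg_graph} via Kouck\'y's degree-reduction~\cite{koucky03}: locally replace every vertex $v$ of degree $d_v$ by a cycle $C_v$ of $3 d_v$ subvertices of degree $3$, so that any connected graph $G$ on $n$ vertices turns into a connected $3$-regular graph $\tilde G$ on $\tilde n = \sum_v 3 d_v = 6\lvert E\rvert$ subvertices, where the original edge at port $i$ of $v$ becomes a crossing edge incident to the port-$i$ subvertices of $C_v$. I would run the algorithm of Lemma~\ref{lem_uxs_closed_3_reg_graph} with parameter $z' := 3 z^2$ to obtain a sequence $\tilde w \in \{0,1,2\}^*$, and then transform $\tilde w$ into the desired $w \in \{-1,0,1\}^*$ by simulating the walk of a \emph{virtual} agent on $\tilde G$ by a \emph{real} agent on $G$.

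The core of the transformation is to maintain the invariant that whenever the virtual agent sits at a port-$i$ subvertex of $C_v$, the real agent sits at $v$ with incoming port $i$; the current subvertex is thus represented only implicitly through the incoming port in $G$ and never stored as an index, which would cost $\Omega(\log d_v)$ space and is anyway meaningless for a universal sequence. Under this invariant each of the three moves available to the virtual agent translates into real offsets in $\{-1,0,1\}$: a crossing move amounts to leaving $v$ through port $i$, i.e.\ a single offset; a cycle move to an adjacent subvertex is realised by leaving through the neighbouring port (offset $\pm 1$) and immediately returning along the same edge (offset $0$), which resets the incoming port accordingly. The transducer performing this translation only has to remember which of the three edge types the virtual agent last used, i.e.\ $\bigO(1)$ extra states, so running the $\bigO(\log z')=\bigO(\log z)$-space generator of Lemma~\ref{lem_uxs_closed_3_reg_graph} together with the transducer stays within $\bigO(\log z)$ space. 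Since each virtual step expands into at most two real steps, the length of $w$ is at most $2\lvert \tilde w\rvert = (z^2)^{\bigO(1)} = z^{\bigO(1)}$, giving the third claim, and the second claim follows because the virtual walk returns to its start subvertex by Lemma~\ref{lem_uxs_closed_3_reg_graph} and the translation preserves the endpoint of the walk in $G$.

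For the first claim I would argue by contradiction: suppose an agent following $w$ in $G$ visits only a set $V_0$ with $m := \lvert V_0\rvert < \min\{z,n\}$. The key observation is that the realisation of \emph{every} move within a cycle $C_v$ forces the real agent to step onto an incident edge of $v$ and back, so a subvertex of $C_v$ can be reached only after traversing an original edge incident to $v$, whose other endpoint therefore also lies in $V_0$. Charging each visited subvertex to such an edge of the induced subgraph $G[V_0]$, and using that each edge is charged only a constant number of times, bounds the number of distinct visited subvertices by $6\lvert E(G[V_0])\rvert \le 3 m(m-1) < 3 z^2 = z'$, a bound that is crucially independent of the maximum degree. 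This contradicts Lemma~\ref{lem_uxs_closed_3_reg_graph}, which guarantees that the virtual walk visits at least $\min\{z',\tilde n\}$ distinct subvertices: this is either $z'=3z^2$, an immediate contradiction, or $\tilde n$, in which case every subvertex and hence every original vertex is visited, forcing $m=n\ge\min\{z,n\}$, again a contradiction.

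The step I expect to be the main obstacle is making the offset-based translation precise while keeping everything within $\bigO(\log z)$ space: verifying that the $\bigO(1)$-state invariant ``subvertex $=$ incoming port'' is maintained across all three move types and across cycle wrap-around modulo the varying (and unknown) degree $d_v$, that the induced real offsets genuinely stay in $\{-1,0,1\}$, and that the auxiliary back-and-forth steps introduced for cycle moves do not disturb closedness. The counting behind the first claim, by contrast, is short once one sees that visited subvertices can be charged to edges of $G[V_0]$, which is exactly what removes the dependence on the maximum degree and lets a single fixed length-$z^{\bigO(1)}$ sequence work for graphs of arbitrarily large degree.
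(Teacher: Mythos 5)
Your proposal is correct and follows essentially the same route as the paper: Kouck\'y's degree reduction to a $3$-regular graph, invoking Lemma~\ref{lem_uxs_closed_3_reg_graph} with parameter $3z^2$, and translating the walk on the regular graph into offsets in $\{-1,0,1\}$ (crossing move $\mapsto$ offset $0$, cycle move $\mapsto$ offset $\pm 1$ followed by $0$) while keeping the current subvertex implicit in the incoming port, exactly as the paper's invariant does. The only cosmetic difference is the final count: you charge visited subvertices to edges of $G[V_0]$, whereas the paper splits into the two cases that either $z$ distinct original vertices or $3z$ subvertices of a single original vertex are visited --- both arguments rest on the same invariant and yield the same $3z^2$ threshold.
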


\begin{figure}
\centering
\begin{subfigure}[b]{0.2\linewidth}
\centering
\tikzstyle{graphnode}=[scale=0.8,circle,draw,minimum size=20pt]
\tikzstyle{dnode}=[scale=0.8]
\newcommand*{\x}{1.1}
\newcommand*{\y}{1.1}
\begin{tikzpicture}[scale=1.3]
\node (v) at (0,0)[graphnode] {$v$};
\node (w) at (1,0)[graphnode] {$w$};
\draw (v) -- node [above,pos=0.2, dnode] {$1$} node [below,pos=0.8, dnode] {$0$}(w);
\draw (v) -- (-0.6,0);
\draw[dotted] (v) -- (-0.8,0);
\draw (w) -- ($(1,0)+(60:0.6)$);
\draw[dotted] (w) -- ($(1,0)+(60:0.8)$);
\draw (w) -- ($(1,0)+(-60:0.6)$);
\draw[dotted] (w) -- ($(1,0)+(-60:0.8)$);
\end{tikzpicture}
~\\[1.2cm]
\subcaption{Original graph $G$.}
\end{subfigure}
\hspace{0.05\linewidth}
\begin{subfigure}[b]{0.70\linewidth}
\centering
\tikzstyle{graphnode}=[scale=0.8,circle,draw,minimum size=20pt]
\tikzstyle{dnode}=[scale=0.8]
\newcommand*{\x}{0.9}
\newcommand*{\y}{1.15}
\newcommand*{\z}{3.5}
\begin{tikzpicture}[scale=1.3]
  [auto=left]
\foreach \i in {0,...,5}{
	\node(v\i) at (-\i*60+60:\x) [graphnode]{};
	\node(d\i) at (-\i*60+60:\x) [dnode]{$v,\i$};
	\ifthenelse{\i=3 \OR \i=5}{}{
		\node[dnode] at (-\i*60+65:\x+0.35) {2};
	}
	\ifthenelse{\i=3}{
		\node[dnode] at (-\i*60+70:\x+0.30) {2};
	}{};
	\ifthenelse{\i=5}{
		\node[dnode] at (-\i*60+50:\x+0.30) {2};
	}{};
	\ifthenelse{\i=1 \OR \i=3 \OR \i=5}{}{
		\draw (v\i) -- (-\i*60+60:\x+0.6);
		\draw[dotted] (v\i) -- (-\i*60+60:\x+0.8);
	}
	\node[dnode] at (-\i*60+80:\x) {1};
	\node[dnode] at (-\i*60+104:\x-0.24) {0};
}
\foreach \i [count=\j] in {0,...,4}{
	\draw (v\i) -- (v\j);
}
\draw (v5) -- (v0);
\foreach \i in {0,...,8}{
	\node(w\i) at ($(\i*40+180:\y)+(\z,0)$) [graphnode]{};
	\node(dd\i) at ($(\i*40+180:\y)+(\z,0)$) [dnode]{$w,\i$};
	\ifthenelse{\i=3}{
		\node[dnode] at ($(\i*40+172:\y+0.30)+(\z,0)$) {2};
	}{
		\node[dnode] at ($(\i*40+185:\y+0.35)+(\z,0)$) {2};
	}
	\ifthenelse{\i=0 \OR \i=3 \OR \i=6}{}{
		\draw (w\i) -- ($(\i*40+180:\y+0.6)+(\z,0)$);
		\draw[dotted] (w\i) -- ($(\i*40+180:\y+0.8)+(\z,0)$);
	}
	\node[dnode] at ($(-\i*40+75:\y+0.05)+(\z,0)$) {0};
	\node[dnode] at ($(-\i*40+86:\y+-0.20)+(\z,0)$) {1}; 
}
\foreach \i [count=\j] in {0,...,7}
{
	\draw (w\i) -- (w\j);
}
\draw (w8) -- (w0);

\draw (v1) -- (w0);
\draw[rounded corners] (v5) -- ($(v5)+(60:1.35)$) -- ($(w6)+(120:1.1)$) -- (w6);
\draw[rounded corners] (v3) -- ($(v3)+(-60:1.35)$) -- ($(w3)+(-120:1.1)$) -- (w3);
\end{tikzpicture}
\subcaption{$3$-regular graph $G_{\text{reg}}$.}
\end{subfigure}
\caption{Example for the transformation of a graph $G$ to a 3-regular graph $G_\text{reg}$. A vertex $v$ of degree~2 is transformed to a cycle containing 6 vertices and for the edge $\{v,w\}$, three edges are added to the graph.}
\label{fig_transformation_to_3_regular}
\end{figure}
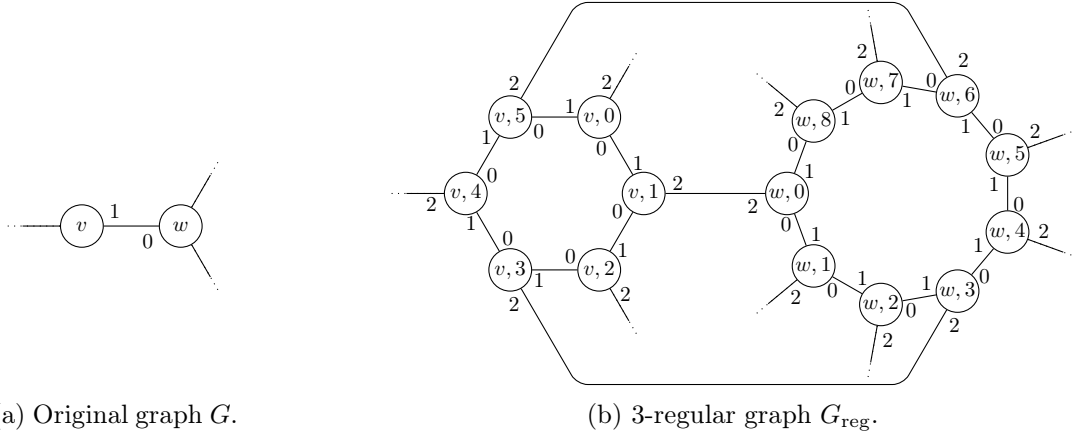

\begin{proof}
Let $\turing_{\text{reg}}$ be the Turing machine of Lemma~\ref{lem_uxs_closed_3_reg_graph}
 with a tape of length bounded by $\bigO( \log z)$ producing a universal exploration sequence $w_{\text{reg}}\in \{0,1,2\}^*$ such that
an agent following $w_{\text{reg}}$ 
 in some 3-regular graph with $n$ vertices visits at least $\min\{3 z^2, n\}$ distinct vertices.

To prove the statement, we transform this universal exploration sequence for 3-regular graphs to a universal exploration sequence universal for general graphs by using a construction taken from Kouck\'y~\cite[Theorem 87]{koucky03}. In this construction,
an arbitrary graph~$G$ with $n$~vertices is transformed into a 3-regular graph $G_\text{reg}$ as follows:
We replace every vertex $v$ of degree $d_v$ by a circle of $3 d_v$ vertices $(v,0),\ldots, (v,3d_v-1)$, where the edge $\{(v,i), (v,i+1 \bmod 3d_v)\}$ has port number~0 at $(v,i)$ and port number~1 at $(v,i+1 \bmod 3d_v)$, see also Figure~\ref{fig_transformation_to_3_regular} for an example of this construction. For any edge $\{v,w\}$ in $G$ with port number $i$ at $v$ and $j$ at $w$, we add the three edges $\{(v,i),(w,j)\},\{(v,i+d_v),(w,j+d_w)\},\{(v,i+2d_v),(w,j+2d_w)\}$ with port numbers~2 at both endpoints to $G_\text{reg}$. 

\begin{algorithm}[tb]
\floatname{algorithm}{Algorithm}
\caption{Turing machine $\turing$ computing exploration sequence for arbitrary graphs.}
\label{alg_tm_log_space}

\begin{algorithmic}[1]
\State output $0,0$ 
\State $i  \gets 0$
\While{$\turing_{\text{reg}}$ has not terminated} 
	\State obtain next offset $w_{\text{reg}}(i)$ from $\turing_{\text{reg}}$
	\State compute edge label $l_i$ in $G_\text{reg}$ \label{line:compute_label}
	\If{$l_i = 0$}
		\State output $1,0$ \label{line:sequence10}
	\ElsIf{$l_i = 1$}
		\State output $-1,0$ \label{line:sequence-10}
	\ElsIf{$l_i = 2$}
		\State output $0$ \label{line:sequence0}
	\EndIf
	\State $i  \gets i+1$
\EndWhile
\end{algorithmic}
\end{algorithm}

Observe that there are only two labelings of edges in $G_\text{reg}$, edges with port number 2 at both endpoints and edges with port numbers 0 and 1. In particular,
one port number of an edge can be deduced from the other port number. As a consequence, 
given the previous edge label and the edge offsets from the exploration sequence~$w_{\text{reg}}$ produced by $\turing_{\text{reg}}$, the next edge label can be computed without knowing the edge label of the edge by which the vertex was entered. In other words, we can transform the sequence of edge label offsets given by~$w_{\text{reg}}$ to a traversal sequence, i.e., a sequence of absolute edge labels $l_0,l_1,\ldots$ of $G_\text{reg}$.

We proceed to define the Turing machine
 $\turing$ producing an  exploration sequence $w\in\{-1,0,1\}^*$ with the desired properties as shown in Algorithm~\ref{alg_tm_log_space}.
First of all, note that the next edge label $l_i$ in~$G_\text{reg}$ can be computed from the last edge label in~$G_\text{reg}$ and the offset $w_{\text{reg}}(i)$ in constant space (line~\ref{line:compute_label} of Algorithm~\ref{alg_tm_log_space_3_regular}). 
Thus, $\turing$ can be implemented in $\bigO(\log z)$ space.
By assumption, the length of the exploration sequence produced by  $\turing_{\text{reg}}$ is bounded by $z^{\bigO(1)}$. Hence, also the length of the exploration sequence produced by $\turing$ is bounded by $z^{\bigO(1)}$.

Let $A_\text{reg}$ be an agent following $w_{\text{reg}}$  in  $G_\text{reg}$ and $A$ be an agent following the exploration sequence~$w$ produced by $\turing$ in $G$. 
What is left to show is that $A$ traverses $G$ in a closed walk and visits at least $\min\{z,n\}$ distinct vertices.
In order to show this, we first establish the following invariants that hold after every iteration $i$ of the while-loop in  Algorithm~\ref{alg_tm_log_space}:
\begin{enumerate}
\item If agent $A_\text{reg}$ is at vertex $(v_i,a_i)$ in $G_\text{reg}$ after $i$ steps, then  after following the exploration sequence output by $\turing$ up to the end of iteration $i$ agent $A$ is at $v_i$ and ${a_i \bmod d_{v_i}}$ is the label of the edge to the previous vertex.
\item If $(v_i,a_i)$ is visited by  $A_\text{reg}$ in $G_\text{reg}$, then in $G$ both $v_i$ and the neighbor incident to the edge with label~${(a_i\bmod d_{v_i})}$ are visited by $A$. 
\end{enumerate}
We show the invariants by induction. 
The starting vertex of $A_\text{reg}$ in $G_\text{reg}$ is $(v_0,0)$ and the starting vertex of $A$ in $G$ is $v_0$. 
Note that at the beginning the Turing machine $\turing$ outputs 0,0 so that in $G$ agent $A$ visits the neighbor of $v_0$ incident to the edge $0$ and then returns to $v_0$. 
Thus, both invariants hold before the first iteration of the while-loop.

Now assume that before iteration $i$ both invariants hold. We show that then they also hold after iteration $i$. 
If agent $A_\text{reg}$ is at the vertex $(v,a)$ after $i-1$ steps and the edge traversed by $A_\text{reg}$ in step~$i$ has label 0, i.e., $l_i=0$, then $A_\text{reg}$ moves to vertex $(v, (a+1) \bmod 3 d_v)$ 
by the definition of $G_{\text{reg}}$, see also Figure~\ref{fig_transformation_to_3_regular}. By assumption, agent $A$ is at vertex $v$
in $G$ and the last edge label is $a \bmod d_v$. Thus, if agent $A$ follows the exploration sequence $1,0$ output by $\turing$ in iteration $i$ (line~\ref{line:sequence10}~of~Algorithm~\ref{alg_tm_log_space}), then it first traverses the edge labeled $(a+1) \bmod d_v$ and then returns to $v$. This means that after iteration $i$, the current vertex of $A$ in $G$ is $v$ and the edge label to the previous vertex is $(a+1) \bmod d_v =  ( (a+1) \bmod 3 d_v) \bmod d_v$.
Moreover, agent $A$ visited both $v$ and the neighbor of $v$ incident to the edge with label $(a+1) \bmod d_v$. 
Thus, both invariants hold after iteration $i$ in this case.

The case that $l_i=1$ is analogous except that edges with label $l_i=1$ in $G_{\text{reg}}$ lead from a vertex $(v,a)$ to a vertex $(v, (a-1) \bmod 3d_v)$. The equivalent movement of $A$ in $G$ is achieved by the sequence $-1,0$ (line~\ref{line:sequence-10} in Algorithm~\ref{alg_tm_log_space_3_regular}). 

So assume that agent $A_\text{reg}$ in step $i$ traverses an edge with label $l_i=2$ from a vertex $(v,a)$ to a vertex $(v',a')$. This means that there is an edge $\{v,v'\}$ in $G$ with port number $a \bmod d_v$ at $v$ and port number $a' \bmod d_{v'}$ at $v'$. By assumption, at the beginning of iteration $i$ agent $A$ is at $v$ and $a \bmod d_v$ is the label of the edge to the previous vertex. So if $A$ follows the exploration sequence~$0$ output in iteration $i$ (line~\ref{line:sequence0} of Algorithm~\ref{alg_tm_log_space}), then it moves to $v'$. Now the label to the previous vertex at $v'$ is $a' \bmod d_{v'}$ and $A$ visited both $v$ and $v'$ so that both invariants hold again.

Finally, for the second property in the lemma, we know that  the traversal of agent $A_{\text{reg}}$ in $G_{\text{reg}}$ is a closed walk by Lemma~\ref{lem_uxs_closed_3_reg_graph} and hence
the traversal of $A$ in $G$ also is a closed walk by the first invariant. 

What is left to show is that $A$ visits at least $\min\{z,n\}$ distinct vertices in $G$. If  $G_{\text{reg}}$ has at most~$3 z^2$ vertices, then $A_{\text{reg}}$
visits all vertices in $G_{\text{reg}}$ by assumption and thus
  $A$ also visits all vertices in $G$ by the second invariant. Otherwise,
  we know that $A_{\text{reg}}$ visits at least $3 z^2$ distinct vertices
in~$G_{\text{reg}}$. Note that this implies $z<n$ as $G_{\text{reg}}$
contains at most $3 n (n-1)$ vertices. 
  
  Assume, for the sake of contradiction, that $A$ visits less than $z$ vertices in $G$.
Let $\bar{V}_{\text{reg}}$ be the set of vertices visited by $A_{\text{reg}}$ in $G_{\text{reg}}$.
As $|\bar{V}_{\text{reg}}|\geq 3 z^2$ by assumption, at least one
of the two following cases occurs:
\begin{enumerate}
\item The cardinality of $\bar{V}:=\{ \ v  \mid 
(v,j)\in \bar{V}_{\text{reg}} \text{ for some } j\ \}$ is at least $z$.
\item There is a vertex $\bar{v}$ in $G$ such that $M_{\bar{v}}:=\{ \ j  \mid 
(\bar{v},j)\in \bar{V}_{\text{reg}}\ \}$ has cardinality $\geq 3 z$.
\end{enumerate}
We show that both cases lead to a contradiction.

Note that by the second invariant agent $A$ visits all vertices in $\bar{V}$. 
Thus, if $|\bar{V}|\geq z$, then $A$ visits at least $z$ distinct vertices in $G$, a contradiction.

Assume the second case occurs and let $\bar{v}$ in $G$ be a vertex such that $|M_{\bar{v}}|\geq 3 z$. 
Then we have $|\{ j \mod d_{\bar{v}} \mid 
j \in M_{\bar{v}} |\geq z$ implying that agent $A$ visits at least $z$ neighbors of $\bar{v}$ in $G$ by the second invariant. This again is a contradiction. 
\end{proof}

To make the results above usable for our agents with pebbles, we need more structure regarding the memory usage of the agent. 
To this end,  we formally define a walking Turing machine with access to pebbles which we will refer to as a \emph{pebble machine}.
Formally, we can view such a walking Turing machine as a specification of the general agent model with pebbles described in \S~\ref{sec:prelim:agents:pebbles}, where the states of the agent correspond the state of the working tape, the position of the head, and the state of the Turing machine.

\begin{definition}\label{def-pebble-machine}
Let $s,p,m \in \N$. 
An \emph{$(s,p,m)$-pebble machine} $T = (Q,\bar{Q}, P, m, \delta_{\text{in}}, \delta_{\text{TM}}, \delta_{\text{out}}, q^*)$ is an agent $A = (\Agentstate, \bar{\Agentstate}, \delta, \agentstartstate)$ with a set $P = \{1,\dots,p\}$ of $p$ pebbles and the following properties:
\begin{enumerate}
	\item The set of states is $\Agentstate = Q \times \{0,1\}^m \times \{0,\dots,m-1\}$, where each state consists of a Turing state, the state of the working tape of length~$m$, and a head position on the tape.
	\item In the initial state $\agentstartstate$ the Turing state is $q^*$, the head position is $0$, and the tape has $0$ at every position.
	\item The agent's transition function $\delta\colon \Agentstate \times \N \times \N \times 2^P \times 2^P \to \Agentstate \times (\N \cup \{\bot\}) \times 2^P \times 2^P$ is computed as follows:
	\begin{enumerate}
	\item The agent first observes its local environment according to the function $\delta_{\text{in}}\colon Q \times \N \times \N \times 2^P \times 2^P \to Q$ that maps a vector $(q,d_v,l,P_A,P_v)$ consisting of the current Turing state, the degree $d_v$ of the current vertex, the label $l$ of the edge leading back to the vertex last visited, the set $P_A$ of carried pebbles and the set $P_v$ of pebbles located at the current vertex to a new Turing state $q'$.
	\item The agent does computations on the working tape like a regular Turing machine according to the function $\delta_{\text{TM}}\colon Q \times \{0,1\} \to Q \times \{0,1\} \times \{\mathsf{left}, \mathsf{right}\}$ that maps the tuple consisting of the current Turing state and the symbol at the current head position $(q,a)$ to a tuple $(q',a',d)$ meaning that the machine transitions to the new state $q'$, writes $a'$ at the current position of the head and moves the head in direction $d$; this process is repeated until a halting state $\bar{q} \in \bar{Q}$ is reached (note that we only consider Turing machines that eventually halt).
\item It performs actions according to the function $\delta_{\text{out}} \colon \bar{Q} \times 2^P \times 2^P \times \N \times \N  \to 2^P \times 2^P \times \N$ that maps a tuple $(q,P_A,P_v)$ containing the current Turing state $q$, the set of carried pebbles~$P_A$ and the set of pebbles $P_v$ at the current vertex $v$ to a tuple $(P_A',P_v',l')$ meaning that it drops and retrieves pebbles such that it carries $P_A'$, leaves $P_v'$ at $v$ and takes the edge locally labeled by $l'$.
	\end{enumerate} 
\end{enumerate}	
\end{definition}

When considering a pebble machine $T =(Q, \bar{Q}, P, m, \delta_{\text{in}}, \delta_{\text{TM}}, \delta_{\text{out}})$ we will call the Turing states~$Q$ simply \emph{states} and we will call the set of states $\Agentstate$ of the underlying agent model \emph{configurations}.
As the configuration of a pebble machine is fully described by the (Turing) state $q  \in Q_T$, the head position, and the state of the working tape, it has $sm2^m$ configurations. We further call a transition of the agent according to the transition function $\delta_{\text{TM}}$ a \emph{computation step}. Note that an agent remains at the same vertex and only changes its configuration when performing a computation step.

 In the following theorem, we explain how to place pebbles on a closed walk and use them as additional memory.

\begin{theorem}\label{theo_pebble_machine_simulation}
There are constants $c, c' \in \N$, such that for every $(s,p,2m)$-pebble machine $\pebblemachine$
 there exists a $(c s ,p+c,m)$-pebble machine $\pebblemachine'$ with the following properties:
 \begin{enumerate}
 \item  For every graph $G$ with $n< 2^{m/{c'}}$ vertices, the pebble machine $T'$ explores
 $G$ in a closed walk, collects all pebbles, returns to the starting vertex
 and terminates. The overall number of edge traversals and computation steps needed by the pebble machine~$\pebblemachine'$ is bounded by $2^{\bigO(m)}$.
 \item For every graph $G$ with $n\geq 2^{m/{c'}}$ vertices, $\pebblemachine'$ reproduces the walk of $\pebblemachine$ in $G$ while the positions of $p$ of the $p+c$ pebbles correspond to the positions of the $p$ pebbles of $T$. For the initialization, $\pebblemachine'$ needs  $2^{\bigO(m)}$ edge traversals and computations steps. Afterwards, the number of edge traversals and computation steps needed by the pebble machine~$\pebblemachine'$ to reproduce one edge traversal or computation step of $\pebblemachine$ is bounded by $2^{\bigO(m)}$. 
 \end{enumerate}
\end{theorem}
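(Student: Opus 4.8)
The plan is to have $\pebblemachine'$ generate, using its own $m$-bit tape, a single closed walk along which it stores the extra $m$ bits of tape that $\pebblemachine$ has but $\pebblemachine'$ lacks, by means of a constant number of additional pebbles. Concretely, I would instantiate Lemma~\ref{lemma_uxs_general_graph} with $z := 2^{\floor{m/c'}}$ for a large constant $c'$ to be fixed later. Since the resulting algorithm runs in $\bigO(\log z)=\bigO(m/c')$ space, choosing $c'$ large enough leaves $\pebblemachine'$ a constant fraction of its tape free for bookkeeping while still being able to produce the closed walk $w$ of length $z^{\bigO(1)}=2^{\bigO(m)}$ that, by the lemma, visits at least $\min\{z,n\}$ distinct vertices of $G$ from any starting vertex. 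The construction then splits according to whether $n<z=2^{m/c'}$ or $n\ge z$, which is exactly the threshold appearing in the statement.

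For the first claim ($n<2^{m/c'}$), following $w$ already visits $\min\{z,n\}=n$ distinct vertices, hence explores all of $G$; since $w$ is closed, $\pebblemachine'$ returns to $\startvertex$ and only has to pick up the constantly many pebbles it may have dropped before entering a halting state. The length of $w$ bounds the number of edge traversals and computation steps by $2^{\bigO(m)}$, as required. The point needing care here is that $\pebblemachine'$ must \emph{recognize} that it is in this regime; I would detect it as a by-product of the address-building step below, namely as the event that the walk is exhausted before $z$ distinct storage locations have been found.

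For the second claim ($n\ge 2^{m/c'}$), I would use the $\ge z$ distinct vertices of $w$ as an address space into which a constant number of dedicated pebbles are placed: the position of such a pebble among the $\ge z=2^{m/c'}$ distinct vertices stores $\log z=m/c'$ bits, so about $c'$ pebbles encode the missing tape bits (together with the bits kept directly on $\pebblemachine'$'s own tape this reconstitutes the $2m$-bit tape of $\pebblemachine$). I would partition $\pebblemachine'$'s $m$-bit tape into a small navigation/scratch region of $\bigO(m/c')$ cells --- enough to run the walk generator and to hold one decoded address at a time --- and a region holding the remaining tape bits of $\pebblemachine$. One step of $\pebblemachine$ is then simulated by loading the cell(s) it reads (those kept in pebbles are fetched by re-running $w$ and counting distinct vertices up to the relevant pebble), performing $\pebblemachine$'s transition on the scratch region, and writing back (a changed pebble-stored bit is realized by moving the corresponding pebble along $w$). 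Each such access re-traverses $w$ a constant number of times and hence costs $2^{\bigO(m)}$; the real pebbles and moves of $\pebblemachine$ are mirrored one-to-one, so $\pebblemachine'$ reproduces the walk of $\pebblemachine$ with $p$ of its $p+c$ pebbles tracking the $p$ pebbles of $\pebblemachine$. The extra states for the walk generator and for controlling these subroutines are $\bigO(1)$, giving $cs$ states, and the extra pebbles are $\bigO(1)$, giving $p+c$ pebbles.

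I expect the main obstacle to be making the pebble-encoded memory genuinely \emph{addressable} under anonymity and within the tiny scratch budget. Since vertices cannot be told apart, I cannot directly walk to ``the $k$-th distinct vertex''; instead I would fix a canonical set of at least $z$ storage locations as the first-visit steps of $w$, and implement a vertex-equality test by dropping a marker pebble and re-running $w$ to check whether the marker is encountered earlier, so that first visits --- and hence a consistent, re-readable address for each stored pebble --- can be computed with a constant number of marker pebbles and $\bigO(1)$ re-traversals of $w$. A second, related subtlety is re-anchoring: because the storage walk emanates from $\pebblemachine'$'s current vertex, every genuine edge traversal of $\pebblemachine$ forces $\pebblemachine'$ to transfer its $\bigO(1)$ storage pebbles from the walk based at the old vertex to the walk based at the new (adjacent) vertex, decoding and re-encoding each stored value in turn. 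Both operations have to be shown to stay within $2^{\bigO(m)}$ per simulated step and within the $\bigO(m/c')$-cell scratch region, which is where the bulk of the careful accounting --- and the choice of the constants $c$ and $c'$ --- will lie.
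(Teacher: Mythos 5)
Your proposal follows essentially the same route as the paper's proof: generate a closed walk via Lemma~\ref{lemma_uxs_general_graph} using only an $\bigO(m/c')$ fraction of the tape, detect the small-graph case by counting distinct vertices, encode the overflow tape bits in the positions of $\bigO(1)$ pebbles among the first $2^{\Theta(m)}$ distinct vertices of the walk (with first-visit addressing implemented by a marker pebble and re-traversals), and relocate the whole pebble-encoded memory to a new walk after each genuine edge traversal. One small inaccuracy: addressing the $k$-th distinct vertex costs up to $|w|$ re-traversals of $w$ (one equality test per candidate step), not a constant number, but since $|w|^2 = 2^{\bigO(m)}$ this does not affect the claimed bounds.
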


\begin{proof}
The general idea of the proof is that $\pebblemachine'$ places the constant number of additional pebbles 
on a closed walk $\omega$ in order to encode the tape content of the pebble machine $\pebblemachine$.
Using these pebbles, $\pebblemachine'$ can also count the number of distinct vertices on the closed walk $\omega$. If the closed
walk is too short, then  $\pebblemachine'$ already explored the graph and the first case occurs. Otherwise,
the closed walk is long enough to allow for a sufficient number of distinct positions of the pebble and we are in the second case of the statement of the theorem.

Let $Q$ be the set of states of $\pebblemachine$. 
We define the set of states of $\pebblemachine'$ to be $Q \times Q'$ for a set~$Q'$, i.e., every state of $\pebblemachine'$ is a tuple $(q,q')$,
where~$q$ corresponds to the state of $\pebblemachine$ in the current step of the traversal.
The pebble machine $\pebblemachine'$ observes the input according to $\delta_\text{in}$ and performs actions according to $\delta_\text{out}$ just as~$\pebblemachine$, while only changing the first component of the current state. $\pebblemachine'$ uses $p$ pebbles in the same way
as $\pebblemachine$ and possesses a set $\{p_\text{start},p_\text{temp},p_\text{next}, p_0,p_1,\ldots,p_{c-4}\}$ of additional pebbles. The pebble $p_\text{start}$ is dropped by $\pebblemachine'$ right after observing the input according to $\delta_\text{in}$ in order to mark the current location of $\pebblemachine$ during the traversal. 
The purpose of the pebbles $p_\text{temp}$ and $p_\text{next}$ will be explained later. 
The other pebbles $\{p_0,p_1,\ldots,p_{c-4}\}$ are placed along a closed walk~$\omega$ to simulate the memory of $\pebblemachine$, while the states $Q'$ and the tape of $\pebblemachine'$ are used to manage this memory.

To this end, we divide the tape of $\pebblemachine'$ into a constant number $c_0$ of blocks of size~$m/c_0$ each.  In the course of the proof, we will introduce a constant number of variables to manage the simulation of the memory of $\pebblemachine$ with pebbles. Each of these variables is stored in a constant number of blocks. The constant $c_0$ is chosen large enough to accommodate all variables on the tape of $\pebblemachine'$.
By Lemma~\ref{lemma_uxs_general_graph}, there is a constant $c_1$ such that for any $r \in \N$ there is a Turing
machine $\turing$ with at most $c_1$ states and a tape of length $c_1 \cdot r$ outputting an exploration
sequence that gives a closed walk of length at most $2^{c_1 \cdot r}$ visiting at least $\min\{2^r,n\}$~vertices in any graph with $n$ vertices.
Let $m_1 :=   m/(c_0 c_1)  $ and let $m_0\in \N$ be such that for all $m' \in \N$ with $m' \geq m_0$ we have
$ c_1 \leq 2^{m'/c_0}$ and $2^{m'/c_0} > 2m'$.

In the following, we show how the simulated memory is managed by providing algorithms in pseudocode (see Algorithms \ref{alg_moving_along_walk_aux} to \ref{alg_read_write_tape}).  These can be implemented on a Turing machine with a constant number of states $c_\text{Alg}$. Let $\smash{c = \max\bigl\{2^{2 m_0}, 2 c_0 c_1+3,c_\text{Alg}\bigr\}}$ and $c':=c_0 c_1$.
 Note that $c$ only depends on the constants $c_0$, $c_1$ and $c_\text{Alg}$, but not on $m$ or $p$. It is without loss of generality to assume $m \geq m_0$, because, for $m <m_0$, we can store the configuration of the tape of $\pebblemachine$ in the states $Q'$ of $\pebblemachine'$, since $c \geq 2^{2 m_0}$. 

We proceed to show that the computations on the tape of length~$2m$ performed by $\pebblemachine$
according to the transition function~$\delta_{\text{TM}}$ can be simulated using the pebbles~$\{p_\text{start},p_\text{temp},p_0,p_1,\ldots,p_{c-4}\}$. The proof of this result proceeds along the following key claims. 

\begin{enumerate}
\item We can find a closed walk $\omega$ containing $2^{m_1}$ distinct vertices so that $c-3$ pebbles placed along this walk can encode all configurations of the tape of~$\pebblemachine$.
\item We can move along $\omega$ while keeping track of the number of steps and counting the number of distinct vertices until we have seen $2^{m_1}$ distinct vertices.
\item We can read from and write to the memory encoded by the placement of the pebbles along~$\omega$. 
\item If the closed walk $\omega$ starts at vertex $v$ and~$\pebblemachine$ moves from vertex $v$ to vertex $v'$, we can move all pebbles to a closed walk $\omega'$ starting in $v'$ while preserving the content of the memory.
\end{enumerate}

\begin{figure}
\centering
\tikzstyle{dedge}=[->]
\newcommand*{\diffX}{1.6}
\newcommand*{\diffY}{1.2}
\tikzstyle{dnode}=[scale=0.8]
\begin{subfigure}[b]{0.45\linewidth}
\centering
\begin{tikzpicture}
\renewcommand{\diffX}{1.5}
\renewcommand{\diffY}{1.0}
\draw(-0.5*\diffX,-2*\diffY)--(3.5*\diffX,-2*\diffY);
\draw(-0.5*\diffX,-2.5*\diffY)--(3.5*\diffX,-2.5*\diffY);
\foreach \i in {0,...,12}
	\draw(-0.5*\diffX+0.333*\i*\diffX,-2*\diffY)--(-0.5*\diffX+0.333*\i*\diffX,-2.5*\diffY);
\foreach \i in {0,3,6,9,12}
	\draw(-0.5*\diffX+0.333*\i*\diffX,-1.7*\diffY)--(-0.5*\diffX+0.333*\i*\diffX,-2.5*\diffY);
\foreach \i in {0,...,3}
	\node(p) at (\i*\diffX,-1.7*\diffY)[dnode]{$p_\i$};
\foreach \i in {0,2,5,7,8,9}
	\node(n) at (-0.333*\diffX+0.333*\i*\diffX,-2.25*\diffY) [dnode]{0};
\foreach \i in {1,3,4,6,10,11}
	\node(n) at (-0.333*\diffX+0.333*\i*\diffX,-2.25*\diffY) [dnode]{1};
\end{tikzpicture}
~\\[1cm]
\subcaption{Tape memory}
\end{subfigure}
\hspace{0.07\linewidth}
\begin{subfigure}[b]{0.45\linewidth}
\centering
\begin{tikzpicture}[auto=left]
\node (v0) at (0,0) [graphnode] {0};
\node (v1) at (0+\diffX,0) [graphnode] {1};
\node (v2) at (0+2*\diffX,0) [graphnode] {2};
\node (v3) at (0+3*\diffX,0) [graphnode] {3};
\node (v4) at (0+3*\diffX,-\diffY) [graphnode] {4};
\node (v5) at (0+2*\diffX,-\diffY) [graphnode] {5};
\node (v6) at (0+2*\diffX,\diffY) [graphnode] {6};
\node (v7) at (\diffX,\diffY) [graphnode] {7};
\node(p0) at (2.4*\diffX,0.4*\diffY)[dnode]{$p_0$};
\node(p1) at (2.4*\diffX,1.4*\diffY)[dnode]{$p_1$};
\node(p2) at (3.4*\diffX,-0.6*\diffY)[dnode]{$p_2$};
\node(p3) at (3.4*\diffX,0.4*\diffY)[dnode]{$p_3$};
\node(pstart) at (-0.3*\diffX,0.5*\diffY)[dnode]{$p_\text{start}$};

\draw[dedge] (v0.15) -- (v1.165);
\draw[<-] (v0.-15) -- (v1.195);

\draw[dedge] (v1.15) -- (v2.165);
\draw[<-] (v1.-15) -- (v2.195);

\draw[dedge] (v2.15) -- (v3.165);
\draw[<-] (v2.-15) -- (v3.195);

\draw[dedge] (v3.255) -- (v4.105);
\draw[<-] (v3.285) -- (v4.75);

\draw[dedge] (v4.165) -- (v5.15);
\draw[<-] (v4.195) -- (v5.-15);

\draw[dedge] (v5.105) -- (v2.255);
\draw[<-] (v5.75) -- (v2.285);

\draw[dedge] (v2.105) -- (v6.255);
\draw[<-] (v2.75) -- (v6.285);

\draw[dedge] (v6.165) -- (v7.15);
\draw[<-] (v6.195) -- (v7.-15);

\node[circle,fill=black,inner sep=0pt,minimum size=4pt] (p0-2) at (v2.north east){};
\draw[very thin] (p0) -- (p0-2);

\node[circle,fill=black,inner sep=0pt,minimum size=4pt] (p1-2) at (v6.north east){};
\draw[very thin](p1) -- (p1-2);

\node[circle,fill=black,inner sep=0pt,minimum size=4pt] (p2-2) at (v4.north east){};
\draw[very thin](p2) --(p2-2);

\node[circle,fill=black,inner sep=0pt,minimum size=4pt] (p3-2) at (v3.north east){};
\draw[very thin](p3) --(p3-2);

\node[circle,fill=black,inner sep=0pt,minimum size=4pt] (pstart-2) at (v0.north west){};
\draw[very thin](pstart) --(pstart-2);
\end{tikzpicture}
\subcaption{Memory encoded by pebbles}
\end{subfigure}

\caption{Memory encoding by pebbles on a closed walk. The state of the tape of length $2m=12$ in (a) is encoded by the position of the $c-3=4$ pebbles in (b). The number of the vertices corresponds to the order of first traversal by the closed walk $\omega$ starting in $0$. The position of each pebble encodes $m_1=3$ bits.}
\label{fig_example_exploration_alg}
\end{figure}

\emph{1. Finding a closed walk~$\omega$.}
Lemma~\ref{lemma_uxs_general_graph} yields a Turing machine $\turing_\text{walk}$ with $c_1$ states
and a tape of length $m/c_0$ that produces an exploration sequence corresponding to 
a closed walk $\omega$ that contains at least $\min\{n,2^{m_1}\}$ distinct vertices and has length
at most $2^{c_1 m_1 } = 2^{m/c_0}$. 
We use a variable $R_\mathrm{walk}$ of size $m/c_0$ for the memory of $\turing_\text{walk}$, which is initially assumed to have all bits set to 0. 
If $2^{m_1} > n$, then the exploration sequence produced by $\turing_\text{walk}$ is a walk exploring~$G$.
Note that by definition $m/{c'}=m_1$ so this happens exactly when the first case in the theorem occurs.
Below we will show how to count the number of unique vertices on the closed walk of  $\turing_\text{walk}$.
Hence, the pebble machine $T'$ can initially  walk along the closed walk~$\omega$ counting the
number of distinct vertices. If this number is smaller than~$2^{m_1}$, we know that
we have visited all vertices of $G$ so that we can collect all pebbles and return to the pebble $p_\text{start}$, which has not been moved and therefore marks the starting vertex of $T$. We show at the end of the
proof that this takes at most $2^{\bigO(m)}$ edge traversals and computation steps.

From now on, we can therefore assume that $\omega$ contains at least $2^{m_1}$ distinct vertices. 
We need to show that $c-3$ pebbles placed along the walk $\omega$ can be used to encode all of the $2^{2m}$ configurations of the tape of~$\pebblemachine$.
Figure~\ref{fig_example_exploration_alg} shows how each pebble encodes a certain part of the tape of~$\pebblemachine$. The idea is that each pebble can be placed on one of $2^{m_1}$ different vertices, thus encoding exactly $m_1$ bits. 
We divide the tape of length $2m$ into $2m/m_1 = 2 c_0 c_1$
parts of size $m_1$ each, such that the position of pebble $p_i$ encodes the bits $\{i m_1,\ldots, (i+1)m_1 -1\}$, where we assume the bits of the tape of $\pebblemachine$ to be numbered $0,1\ldots, 2m -1$.
As $c \geq 2 c_0 c_1 +3$, we have enough pebbles to encode the configuration of the tape of~$\pebblemachine$.
 
\begin{algorithm}[tb]
\floatname{algorithm}{Algorithms}
\caption{Auxiliary functions for moving along the closed walk $\omega$.}
\label{alg_moving_along_walk_aux}

\begin{algorithmic}
\Function{Step()}{}
	\State traverse edge according to value of exploration sequence output by $\turing_\text{walk}$ 
	\State $R_\mathrm{steps} \gets R_\mathrm{steps}+1$
\EndFunction
\end{algorithmic}

\begin{algorithmic}
\Function{FindPebble}{$p_i$}
\While{not \textsc{observe}($p_i$)}
	\State \textsc{Step}()
\EndWhile
\EndFunction
\end{algorithmic}

\begin{algorithmic}
\Function{Restart()}{}
\State \textsc{FindPebble}($p_\text{start}$)
\State $R_\mathrm{steps} \gets 0$
\State $R_\mathrm{id} \gets 0$
\State $R_\mathrm{walk} \gets 0$
\EndFunction
\end{algorithmic}

\end{algorithm}

 \begin{algorithm}[tb]
\floatname{algorithm}{Algorithm}
\caption{Moving along the closed walk $\omega$ while updating $R_\mathrm{steps}$ and  $R_\mathrm{id}$.}
\label{alg_moving_along_walk}

\begin{algorithmic}
\Function{NextDistinctVertex()}{}
\If{$R_\mathrm{id}= 2^{m_1} -1$ }
		\State \textsc{Restart}()
		\State \Return
\EndIf
\State $R_\mathrm{id} \gets R_\mathrm{id} + 1$
\State $R_\mathrm{steps}' \gets R_\mathrm{steps}$
\Repeat
	\State \textsc{Step}()
	\State $R_\mathrm{steps}'\gets R_\mathrm{steps}'+1$
	\State \textsc{drop}($p_\text{temp}$)
	\State $R_\mathrm{walk}' \gets R_\mathrm{walk}$
	\State \textsc{Restart}()
	\State \textsc{FindPebble}($p_\text{temp}$)
	\State \textsc{pickup}($p_\text{temp}$)
	\State $R_\mathrm{walk} \gets R_\mathrm{walk}'$ 
\Until{$R_\mathrm{steps}= R_\mathrm{steps}'$} 
	
\EndFunction
\end{algorithmic}
\end{algorithm}

\emph{2. Navigating~$\omega$.}
Let $R_\mathrm{steps}$ be a variable counting the number of steps along~$\omega$ and~$R_\mathrm{id}$ be a variable for counting the number of unique vertices visited along~$\omega$ after starting in the vertex marked by~$p_\mathrm{start}$. 
Note that $R_\mathrm{id}$ gives a way of associating a unique identifier to the first $2^{m_1}$ distinct vertices along $\omega$. As $m_1 \leq m/c_0$ holds,  $m/c_0$ tape cells suffice for counting the first $2^{m_1}$ distinct vertices along~$\omega$. The overall number of steps along the closed walk is bounded by $2^{m/c_0}$ and therefore $m/c_0$ tape cells also suffice for counting the steps along~$\omega$.

It remains to show that we can move along the closed walk $\omega$ while updating $R_\mathrm{steps}$ and $R_\mathrm{id}$, such that, starting
from the vertex marked by $p_\text{start}$, the variable  $R_\mathrm{steps}$ contains the number of steps taken and $R_\mathrm{id}$ contains the number of distinct vertices visited. 
Let \textsc{drop($p_i$)} denote the operation of dropping pebble~$p_i$ at the current location, \textsc{pickup($p_i$)} the operation of picking up~$p_i$ at the current location if possible, and let \textsc{observe($p_i$)} be ``true'' if pebble $p_i$ is located at the current position. 
Consider the auxiliary functions shown in Algorithms~\ref{alg_moving_along_walk_aux}.
The function \textsc{Step()} moves one step along~$\omega$ and updates $R_\mathrm{steps}$ accordingly.
The function \textsc{FindPebble($p_i$)} moves along~$\omega$ until it finds pebble~$p_i$. 
The function \textsc{Restart()} goes back to the starting vertex marked by  $p_\text{start}$, sets both variables $R_\mathrm{steps}$ and $R_\mathrm{id}$ to~$0$, and restarts $\turing_\text{walk}$ by setting the variable $R_\mathrm{walk}$ to~0.
Finally, the function \textsc{NextDistinctVertex()} in Algorithm~\ref{alg_moving_along_walk} does the following: If the number of distinct vertices visited is already $2^{m_1}$, then we go back to the start. 
Otherwise, we continue along~$\omega$ until we encounter a vertex we have not visited before. 
We repeatedly traverse an edge, drop the pebble $p_\text{temp}$, store the number of steps until reaching that vertex, then we restart from the beginning and check if we can reach that vertex with fewer steps. 
If not, we found a new distinct vertex. Note that we use the auxiliary variables $R_\mathrm{steps}'$ and $R_\mathrm{walk}'$, which 
both need a constant number of blocks of size~$m_0/c_0$.

\begin{algorithm}[tb]
\floatname{algorithm}{Algorithms}
\caption{Reading and changing positions of pebbles.}
\label{read_change_pebble_position}

\begin{algorithmic}
\Function{GetPebbleId}{$p_i$} 
\State \textsc{Restart}()
\While{not \textsc{observe}($p_i$)}
	\State \textsc{NextDistinctVertex}()
\EndWhile
	\Return $R_\mathrm{id}$
\EndFunction
\end{algorithmic}

\begin{algorithmic}
\Function{PutPebbleAtId}{$p_i$,id} 
\State \textsc{FindPebble}($p_i$)
\State \textsc{pickup}($p_i$)
\State \textsc{Restart}()
\While{id$>$0}
	\State $\text{id} \gets \text{id}-1$
	\State \textsc{NextDistinctVertex}()
\EndWhile
\textsc{drop}($p_i$)
\EndFunction
\end{algorithmic}

\end{algorithm}

\begin{algorithm}[tb]
\floatname{algorithm}{Algorithms}
\caption{Reading and writing one bit for the simulated memory.}
\label{alg_read_write_tape}

\begin{algorithmic}
\Function{ReadBit()}{}
\State $i \gets  \left \lfloor R_\mathrm{head}/m_1  \right \rfloor$
\State $j \gets R_\mathrm{head} - m_1 \cdot i$
\State $\text{id} \gets \text{\textsc{GetPebbleId}}(p_i)$
\State \Return $j$-th bit of id (in binary)
\EndFunction
\end{algorithmic}

\begin{algorithmic}
\Function{WriteBit}{$b$} 
\State $i \gets  \left \lfloor R_\mathrm{head}/m_1  \right \rfloor$
\State $j \gets R_\mathrm{head} - m_1 \cdot i$
\State $\text{id} \gets \text{\textsc{GetPebbleId}}(p_i)$
\If{$b = 1$ \textbf{and} \textsc{ReadBit}() $= 0$}
	\State $\text{id} \gets \text{id} + 2^j$
\ElsIf{$b = 0$ \textbf{and} \textsc{ReadBit}() $= 1$}
		\State $\text{id} \gets \text{id} - 2^j$
\EndIf
\State \textsc{PutPebbleAtId}($p_i$,id)
\EndFunction
\end{algorithmic}
\end{algorithm}

\emph{3. Reading from and writing to simulated memory.}
We show how to simulate the changes to the tape of $\pebblemachine$ by changing the positions of the pebbles along~$\omega$. 
The transition function~$\delta_{\text{TM}}$ of $\pebblemachine$ determines how $\pebblemachine$ does computations on its tape and, in particular, how $\pebblemachine$ changes its head position. 
We use a variable $R_\mathrm{head}$ of size~$m/c_0$ to store the head position. By assumption, $m \geq m_0$ and therefore $2^{m/c_0}>2m$, i.e., the size of $R_\mathrm{head}$ is sufficient to store the head position. In order to simulate one transition of $\pebblemachine$ according to $\delta_{\text{TM}}$,
we need to read the bit at the current head position and then write to the simulated memory and change
the head position accordingly. 
Reading from the simulated memory is done by the function \textsc{ReadBit()} and writing of a bit $b$ to the simulated memory by the function \textsc{WriteBit($b$)} (cf.~Algorithms~\ref{alg_read_write_tape}).

First, let us consider the two auxiliary functions \textsc{GetPebbleId($p_i$)} and \textsc{PutPebbleAtId($p_i$,} id) (cf.~Algorithms~\ref{read_change_pebble_position}).
As the name suggests, the function \textsc{GetPebbleId($p_i$)} returns the unique identifier associated to the vertex marked by~$p_i$. Recall that vertices are indistinguishable. Here, unique identifier refers to the number of distinct vertices on the walk $\omega$ before reaching the vertex marked with $p_i$ for the first time.
Given an identifier id, we can use the function \textsc{PutPebbleAtId($p_i$,} id) for placing pebble $p_i$ at the unique vertex corresponding to id.  
By the choice of our encoding, if $R_\mathrm{head}= i \cdot m_1 + j$ with $j \in \{0,\ldots, m_1-1\}$, then the $j$-bit of the binary encoding of the position of pebble $p_i$ encodes the contents of the tape cell specified by~$R_\mathrm{head}$. 
Thus, for reading from the simulated memory, we have to compute $i$ and $j$ and determine the position of the corresponding pebble in the function \textsc{ReadBit()}. For the function \textsc{WriteBit($b$)}, we also compute $i$ and $j$. Then, we move the pebble~$p_i$
by $2^j$ unique vertices forward if the bit flips to~$1$ or by $2^j$ unique vertices backward if the bit flips to~$0$.

\emph{4. Relocating~$\omega$.}
When~$\pebblemachine$ moves from a vertex $v$ to another vertex~$v'$, the walk $\omega$ and the pebbles on it need to be relocated. Recall that $\pebblemachine'$ marked the current vertex $v$ with the pebble~$p_\text{start}$. After having computed the label of the edge to~$v'$,  $\pebblemachine'$ drops the pebble $p_\text{next}$ at $v'$. Then $\pebblemachine'$ moves the pebbles placed along the walk~$\omega$ to the corresponding positions along a new walk $\omega'$ starting at~$v'$ in the following way. 
We iterate over all $c-3$ pebbles and for each pebble $p_i$ we start in~$v$, determine the identifier id of the vertex marked by~$p_i$ via \textsc{GetPebbleId($p_i$)}, pick up~$p_i$, move to $p_\text{next}$ and place~$p_i$ on $\omega'$ using the function \textsc{PutPebbleAtId($p_i$}, id).
In this call of \smash{\textsc{PutPebbleAtId($p_i$}, id)} all occurrences of $p_\text{start}$ are replaced by $p_\text{next}$. 
This way, we can carry the memory simulated by the pebbles along during the graph traversal.

Thus, we have shown that in the second case $\pebblemachine'$ can simulate the traversal of $\pebblemachine$ in $G$ while using a tape with half the length, but $c$ additional pebbles and a factor of $c$ additional states.

Finally, we bound the number of edge traversals and computation steps in both cases.
First, we bound the number of edge traversals 
that~$\pebblemachine'$ needs for simulating one computation step of $\pebblemachine$. 
Recall that $\pebblemachine'$ needs at most $2^{m/c_0}\leq 2^m$ edge traversals for moving once along the whole closed walk $\omega$. A call of the function \textsc{Step()} corresponds to one edge traversal, a call of \textsc{FindPebble($p_i$)} thus corresponds to at most $2^m$ edge traversals and also a call of \textsc{Restart()} corresponds to at most $2^m$ edge traversals. 
Moreover, one iteration of the loop in \textsc{NextDistinctVertex()} accounts for at most $2^m$ edge traversals and therefore executing the whole function results in at most $2^m \cdot 2^m = 2^{2m}$ edge traversals.
This means that one call of \textsc{GetPebbleId($p_i$)} or \textsc{PutPebbleAtId($p_i$,id)} incur at most $2^{\bigO(m)}$ edge traversals and this also holds for \textsc{ReadBit()} and \textsc{WriteBit($b$)}. 
Hence, for every computation step performed by $\pebblemachine$ according to~$\delta_{\text{TM}}$, the pebble machine $\pebblemachine'$ performs actions according to \textsc{ReadBit()} and \textsc{WriteBit($b$)} and overall does at most $2^{\bigO(m)}$ edge traversals.
The above argument also shows that at most $2^{\bigO(m)}$ edge traversals are necessary to count the number of
distinct vertices on the closed walk $\omega$ at the beginning.

Next, let us bound the number of edge traversals  that $\pebblemachine'$ needs for reproducing one edge traversal of $\pebblemachine$. This means that we need to count how many edge traversals are necessary to relocate all pebbles placed along the walk~$\omega$ to the new walk~$\omega'$. For every pebble~$p_i$, we call \textsc{GetPebbleId($p_i$)} which results in at most $2^m$ edge traversals, we pick up~$p_i$ and move to $p_\text{next}$ which again needs at most $2^m$ edge traversals, and place~$p_i$ on $\omega'$ using the function \textsc{PutPebbleAtId($p_i$}, id) which also needs $2^m$ edge traversals. Overall, this procedure is done for a constant number of pebbles and hence requires at most $2^{\bigO(m)}$ edge traversals.

Next we bound the number of computation steps of $\pebblemachine'$ by using the bounds on the number of edge traversals. 
Recall that the state of $\pebblemachine'$ is a tuple $(q,q')$, where $q$ corresponds to the state of~$\pebblemachine$. 
In the computation only the second component of the state of $\pebblemachine'$ changes and therefore there are only at most $c$ possible states. The tape length and number of possible head positions of the Turing machine is $m$. Since we may assume without loss of generality that $m \geq 2$, we can bound the number of distinct configurations of $\pebblemachine'$ in each computation by $2^{\bigO(m)}$. Hence, after every edge traversal $\pebblemachine'$ does at most $2^{\bigO(m)}$ computation steps.
This implies that in the first case of the statement of the theorem, the number of computation steps is bounded by
 $2^{\bigO(m)}$ because the number of edge traversals is bounded by $2^{\bigO(m)}$ as shown above.
Similarly, in the second case of the statement of the theorem the total number of computation steps after~$2^{\bigO(m)}$ edge traversals is bounded by $2^{\bigO(m)}$. Since $m \geq 2$ this means that also the sum of computation steps and edge traversals can be bounded by $2^{\bigO(m)}$ both for one computation step and one edge traversal of~$\pebblemachine$.
\end{proof}

Finally, we show that by recursively simulating a pebble machine by another pebble machine with half the memory but a constant number of additional pebbles we can explore any graph with at most $n$ vertices while using $\bigO(\log \log n)$ pebbles and only $\bigO(\log \log n)$ bits of memory.

%
%
 
\begin{theorem}
\label{theorem_loglog_algorithm}
Any connected  undirected graph on at most $n$ vertices can be explored by an agent in a polynomial number of steps using~$\bigO(\log\log n)$ pebbles and~$\bigO(\log\log n)$ bits of memory. The agent does not require $n$ as input and terminates at the starting vertex with all pebbles after exploring the graph.
\end{theorem}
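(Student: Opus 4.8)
The plan is to iterate the simulation of Theorem~\ref{theo_pebble_machine_simulation}, where each application halves the tape length at the cost of a constant number of extra pebbles and a constant factor in the number of states. First I would fix a base pebble machine $T_0$ with tape length $M$ and a constant number of pebbles that follows the closed-walk exploration sequence of Lemma~\ref{lemma_uxs_general_graph} for the parameter $z:=2^{M/c'}$: it walks along this closed walk and counts the distinct vertices it sees (using the distinct-vertex counting of Theorem~\ref{theo_pebble_machine_simulation}), halting in a distinguished \emph{success} state at the starting vertex once it has seen fewer than $z$ distinct vertices (so that $n<z$ and $G$ is fully explored) and in a \emph{failure} state otherwise. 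Writing $m_i:=M/2^i$, I would then set $T_{i+1}$ to be the $(c\,s_i,p_i+c,m_{i+1})$-pebble machine obtained by applying Theorem~\ref{theo_pebble_machine_simulation} to the $(s_i,p_i,2m_{i+1})$-machine $T_i$. An induction on $i$ shows that every $T_i$ explores all graphs with $n<z$ vertices and terminates at the starting vertex carrying all pebbles: in the first case of Theorem~\ref{theo_pebble_machine_simulation} this is immediate (and the machine reports success, since then $n<z$), while in the second case $T_{i+1}$ reproduces the walk of $T_i$, inherits $T_i$'s halting type once $T_i$ halts, and then spends a further $2^{\bigO(m_{i+1})}$ steps recollecting its $c$ extra pebbles from the current walk before returning to the start.

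Unrolling the recursion, $T_i$ has $c^i s_0$ states and $p_0+c\,i$ pebbles. I would stop at the level $k$ at which $m_k=M/2^k$ first becomes a constant, which needs $k=\bigO(\log M)$ levels; the executed machine $T_k$ then uses $\bigO(k)$ pebbles and, counting its constant tape, its head position and its $c^k s_0$ Turing states, only $\bigO(k)$ bits of memory. For the running time, Theorem~\ref{theo_pebble_machine_simulation} inflates the number of steps by a multiplicative $2^{\bigO(m_i)}$ and an additive $2^{\bigO(m_i)}$ per level, so the total number of steps of $T_k$ is at most $2^{\sum_i\bigO(m_i)}=2^{\bigO(M)}$ times the $2^{\bigO(M)}$ steps of the base machine, hence $2^{\bigO(M)}$, using $\sum_i m_i=\bigO(M)$.

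The one genuinely delicate point, and the main obstacle, is that $n$ is not known in advance, so I cannot fix $M$ (equivalently $k$) to guarantee $z=2^{M/c'}>n$. I would resolve this by an outer loop that runs the tower for $k=1,2,3,\dots$, at stage $k$ using top tape length $M=2^k m_*$ for a suitable constant $m_*$, so that $T_k$ always has constant bottom tape $m_k=m_*$, uses $\bigO(k)$ pebbles and $\bigO(k)$ bits of memory, and explores exactly the graphs with $n<z_k:=2^{M/c'}=2^{\Theta(2^k)}$ vertices. Running $T_k$ terminates at the starting vertex with all pebbles and, through the propagated base halting state, reports success precisely when $n<z_k$; in that case I halt, and otherwise I increment $k$. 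The distinct-vertex count at the base guarantees that success is reported if and only if the graph has genuinely been explored, so the loop halts at the first $k^{*}$ with $z_{k^{*}}>n$. For this $k^{*}$ one has $2^{k^{*}}=\Theta(\log n)$ and hence $M=\Theta(\log n)$, which yields $k^{*}=\bigO(\log\log n)$, the claimed $\bigO(\log\log n)$ pebbles and $\bigO(\log\log n)$ bits of memory, and a total running time $\sum_{k\le k^{*}}2^{\bigO(2^k m_*)}=2^{\bigO(\log n)}=n^{\bigO(1)}$. The loop counter $k$ and the recursion level cost only $\bigO(\log k^{*})=\bigO(\log\log\log n)$ further bits, absorbed into the memory bound, and since the base walk is closed and each simulator restores the starting vertex and recollects its pebbles at every halt, the agent indeed finishes at $v_0$ carrying all pebbles.
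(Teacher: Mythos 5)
Your proposal is correct and follows essentially the same route as the paper: a tower of applications of Theorem~\ref{theo_pebble_machine_simulation} that halves the tape at each level down to a constant, executed by the bottom machine with $\bigO(k)$ pebbles and $\bigO(k)$ bits of memory, wrapped in an outer loop over $k$ that doubles the top tape length until the distinct-vertex count certifies full exploration, with the same telescoping bounds on pebbles, memory, and running time. The only (cosmetic) difference is that you make the top machine explicitly walk and count along the closed walk of Lemma~\ref{lemma_uxs_general_graph}, whereas the paper takes the top machine to be trivial and lets the coverage check be triggered by the first case of Theorem~\ref{theo_pebble_machine_simulation} at whichever simulation level first covers the graph.
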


\begin{proof}
Let $c,c' \in \N$ be the constants from Theorem~\ref{theo_pebble_machine_simulation}. 
Let $r \in \N$ be arbitrary and consider a $(c,0, c'2^{r+1})$-pebble machine $T^{(r)}$ that simply terminates without making a computation step or edge traversal.
Applying Theorem~\ref{theo_pebble_machine_simulation} for the pebble machine $T^{(r)}$ gives a $(c^2, c, c'2^r)$-pebble machine $\smash{T^{(r)}_r}$ with the following properties. If $n < \smash{2^{2^r}}$, then $\smash{T^{(r)}_r}$ explores the graph and returns to the starting vertex. If, on the other hand, $n \geq \smash{2^{2^r}}$, then $T_r^{(r)}$ reproduces the walk of $T^{(r)}$ (which in this case is of course trivial). Note that these properties hold even though the number $n$ of vertices is unknown and, in particular, not given as input to $\smash{T_r^{(r)}}$.

Applying Theorem~\ref{theo_pebble_machine_simulation} iteratively, we obtain a
$(c^{r+2-i}, (r+1-i)c, c'2^{i})$-pebble machine $T^{(r)}_i$ for all $i \in \{0,\dots,r-1\}$ that reproduces the walk of  $T^{(r)}_{i+1}$ or it already explores the given graph and returns to the start vertex. 
For a graph $G$ with $n< \smash{2^{2^{r}}}$, $T^{(r)}_{r}$ explores $G$ and returns to the start with all pebbles and terminates. 
Thus for such a graph $G$ it does not matter which case occurs when applying~Theorem~\ref{theo_pebble_machine_simulation},  as in both cases we can conclude that $T_i^{(r)}$ for $i \in \{0,\dots,r-1\}$ explores the graph, returns with all pebbles to the start vertex and terminates.
  
If we have $n\geq\smash{ 2^{2^{r}}}$, then $n\geq 2^{2^{i}}$ holds for all $i \in \{0,\ldots,r-1\}$ and in particular $\smash{T^{(r)}_0}$ reproduces the walk of $T^{(r)}$ in $G$, i.e., remains at the starting vertex and terminates.

The desired pebble machine $T$ exploring any graph $G$ with $\bigO(\log\log n)$ pebbles and~$\bigO(\log\log n)$ bits of memory works as follows: We have a counter $r$, which is initially 1 and is increased by one after each iteration until the given graph $G$ is explored. In iteration~$r$, pebble machine $T$ does the same as the $(c^{r+2},(r+1) c, c')$-pebble machine $T^{(r)}_0$ until it terminates. The pebble machine $T$ terminates  as soon as for some $r \in \N$ the pebble machine  $\smash{T^{(r)}_0}$ recognizes that it explored the whole graph.
This happens when $r=\ceil{\log \log n }+1$. 
Hence, $T$ uses at most $\bigO( \log \log n)$ pebbles.

Concerning the memory requirement of $T$, note that $T$ needs
to store the state of  $\smash{T^{(r)}_0}$, the tape content of  $\smash{T^{(r)}_0}$
and the current value of $r$.  
There are $c^{r+2}$ states of the pebble machine $\smash{T^{(r)}_0}$, its tape length is  $c'$ and $r \leq
 \ceil{\log \log n}+1$ in every iteration, so that $T$ can be implemented with~$\bigO(\log\log n)$ bits of memory.

It is left to show is that the number of edge traversals of $\pebblemachine$ in the exploration of a given graph $G$ with $n$ vertices is polynomial in~$n$. 
To this end, we first show that the number of edge traversals of the pebble machine $\smash{T_0^{(r)}}$ is bounded by $n^{\bigO(1)}$ for all $r \in \{1,\dots,\lceil \log \log n\rceil+1\}$. 
Let $r \in \{1,\dots, \lceil \log \log n \rceil + 1\}$ be arbitrary and let $t_i$ denote the sum of the number of edge traversals and computation steps of $\smash{\pebblemachine^{(r)}_{i}}$ in the given graph $G$. 
The pebble machine $\smash{T^{(r)}_{r}}$ has a tape of length of $m = c'2^{r}$. Applying Theorem~\ref{theo_pebble_machine_simulation}, we get that either $\smash{T^{(r)}_{r}}$ explores $G$ and uses at most $2^{\bigO(m)}$ edge traversals and computation steps or $\smash{T^{(r)}_{r}}$ simulates the walk of a pebble machine that does not make a single edge transition and uses at most $2^{\bigO(m)}$ edges traversals and computation steps. In both cases, we obtain
\begin{align*}
t_{r} \leq 2^{\bigO(2^r)} \leq 2^{\bigO(2^{\log \log n})} = 2^{\bigO(\log n)} = n^{\bigO(1)}.
\end{align*}
This shows the desired bound for $t_r$. Furthermore, 
one computation step or one edge traversal of $\smash{\pebblemachine^{(r)}_i}$ leads to at most $\smash{2^{\bigO( c' \cdot 2^{i})} =2^{\bigO(1) 2^{i}}}$ edge traversals and computation steps of $\smash{\pebblemachine^{(r)}_{i-1}}$ 
by Theorem~\ref{theo_pebble_machine_simulation}. Hence, we obtain
\begin{align}
t_{i-1} \leq 2^{\bigO(1) 2^{i}} t_i \quad \quad  \forall \ i \in \{1,\ldots, \lceil \log \log n \rceil +1 \}. \label{ineq_edge_traversals}
\end{align}
By iterative application of Inequality~\ref{ineq_edge_traversals}, we obtain
\begin{align*}
t_0 
& \leq 2^{ \bigO(1) 2^{i}} t_1 
  \leq \ldots \leq 2^{\bigO(1) \sum_{i=1}^{\lceil \log \log n \rceil+1} 2^{i}} \cdot t_{\lceil \log \log n \rceil+1}
 \leq 2^{\bigO(1) 2^{\lceil \log \log n \rceil}} \cdot n^{\bigO(1)} \leq n^{\bigO(1)}.
\end{align*}
Thus, the number of edge traversals $t_0$ of $\pebblemachine^{(r)}_0$ is polynomial in $n$.
As $T$ performs at most  $n^{\bigO(1)}$ edge traversals according to $\smash{T_0^{(r)}}$  for at most
$\ceil{\log \log n}+1$ distinct values of $r$, the overall number of edge traversals of $T$
is also bounded by  $n^{\bigO(1)}$.
\end{proof}

Since an additional pebble is more powerful than a bit of memory (Lemma~\ref{lem:pebble>bit}), we obtain the following direct corollary of Theorem~\ref{theorem_loglog_algorithm}.

\begin{corollary}
\label{cor:constant_memory_pebbles}
Any connected undirected graph on at most $n$ vertices can be explored by an agent in a polynomial number of steps using $\bigO(\log \log n)$ pebbles and constant memory. The agent does not require $n$ as input and terminates at the starting vertex with all pebbles after exploring the graph.
\end{corollary}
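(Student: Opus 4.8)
The plan is to derive Corollary~\ref{cor:constant_memory_pebbles} from Theorem~\ref{theorem_loglog_algorithm} by trading the (at most logarithmically many) memory bits of the exploring agent for additional pebbles via Lemma~\ref{lem:pebble>bit}. By Theorem~\ref{theorem_loglog_algorithm} there is an agent $A$ that explores any graph on at most $n$ vertices in a polynomial number of steps, using $p = \bigO(\log\log n)$ pebbles and $\bigO(\log\log n)$ bits of memory, without knowing $n$ in advance and terminating at $v_0$ while carrying all of its pebbles. Since $\bigO(\log\log n)$ bits of memory correspond to $s = 2^{\bigO(\log\log n)} = (\log n)^{\bigO(1)}$ states, we have $\ceil{\log s} = \bigO(\log\log n)$.

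First I would apply Lemma~\ref{lem:pebble>bit} to $A$. This yields an agent $A'$ with only six states, i.e.\ constant memory, and $p + \ceil{\log s}$ pebbles that reproduces the walk of $A$ on every graph on at most $n$ vertices. The resulting number of pebbles is $p + \ceil{\log s} = \bigO(\log\log n) + \bigO(\log\log n) = \bigO(\log\log n)$, as required, and the six states amount to $\bigO(1)$ bits of memory. Because $A'$ performs at most three edge traversals for each edge traversal of $A$ and $A$ explores in a polynomial number of steps, the agent $A'$ also explores in a polynomial number of steps. As both $A$ and the reduction of Lemma~\ref{lem:pebble>bit} are oblivious to $n$, so is $A'$.

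It remains to secure the two finishing properties, namely that $A'$ halts at $v_0$ carrying \emph{all} of its pebbles. By the invariant underlying the construction of Lemma~\ref{lem:pebble>bit}, whenever $A$ sits at a vertex $v$ in state $\sigma$ and $A'$ is in its computation state, the pebbles carried by $A'$ are precisely the pebbles carried by $A$ together with the state-encoding pebbles $f(\sigma)$, while the remaining encoding pebbles $P'_{\Agentstate}\setminus f(\sigma)$ lie at $v$. When $A$ reaches its halting state at $v_0$ carrying all $p$ of its pebbles, every pebble of $A'$ is therefore located at $v_0$ (some carried, the rest dropped). Hence one additional pick-up in the halting transition of Lemma~\ref{lem:pebble>bit} lets $A'$ collect all pebbles, so that $A'$ terminates at $v_0$ carrying all pebbles.

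The only point requiring genuine care is the interplay between the reduction and the fact that $A$ does not know $n$: since $A$ gradually uses more states and pebbles as the implicit bound on the graph size grows, one must verify that the construction of Lemma~\ref{lem:pebble>bit} stays uniform. This is indeed the case, because its six control states are fixed independently of $s$, while the set of state-encoding pebbles may grow in lockstep with the memory that $A$ actually uses; on any graph with $n$ vertices only $\bigO(\log\log n)$ of these pebbles are ever employed. I expect this uniformity check to be the main (and otherwise minor) obstacle, with the parameter count, the polynomial time bound, and the termination argument being routine.
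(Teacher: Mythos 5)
Your proposal is correct and follows exactly the paper's route: the paper derives Corollary~\ref{cor:constant_memory_pebbles} as a direct consequence of Theorem~\ref{theorem_loglog_algorithm} via Lemma~\ref{lem:pebble>bit}, trading the $\bigO(\log\log n)$ memory bits for $\lceil\log s\rceil=\bigO(\log\log n)$ additional pebbles while keeping six control states. The extra care you take about termination with all pebbles and about uniformity in $n$ is sound but not spelled out in the paper, which states the corollary as immediate.
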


Since an additional agent is more powerful than a pebble (Lemma~\ref{lem:agent>pebble}), we obtain the following direct corollary of Theorem~\ref{theorem_loglog_algorithm}  and Corollary~\ref{cor:constant_memory_pebbles}.

\begin{corollary}
\label{cor:constant_memory_agents}
Any connected undirected graph on at most $n$ vertices can be explored in polynomial time by a set of $\bigO(\log \log n)$ agents with constant memory each. The agents do not require $n$ as input and terminate at the starting vertex after exploring the graph.
\end{corollary}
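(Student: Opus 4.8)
The plan is to chain Corollary~\ref{cor:constant_memory_pebbles} with the reduction of Lemma~\ref{lem:agent>pebble}, so that essentially nothing beyond bookkeeping remains. Corollary~\ref{cor:constant_memory_pebbles} furnishes a single agent $A$ with a constant number $s$ of states and $p \in \bigO(\log\log n)$ pebbles that explores any connected $n$-vertex graph in a polynomial number of edge traversals, terminates, and returns to the starting vertex carrying all of its pebbles. First I would feed this agent into Lemma~\ref{lem:agent>pebble}, which replaces it by a set $\mathcal{A} = (A_0,\dots,A_p)$ of $p+1$ agents, where $A_0$ inherits the $s$ states of $A$ and each remaining agent has only two states. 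Since $p+1 \in \bigO(\log\log n)$ and every agent stores only a constant number of states, each individual agent has $\bigO(1)$ memory while the number of agents is $\bigO(\log\log n)$, as required.

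For the running time, I would invoke the quantitative part of Lemma~\ref{lem:agent>pebble}: for every edge traversal of $A$, each agent in $\mathcal{A}$ performs at most one edge traversal. Because $A$ explores $G$ in a polynomial number of steps by Corollary~\ref{cor:constant_memory_pebbles}, the total number of edge traversals of $\mathcal{A}$ is likewise polynomial in $n$, so the exploration runs in polynomial time. The agents also require no a-priori knowledge of $n$, since $A$ does not.

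The only point requiring a little care is the termination condition. Here I would use the special structure guaranteed by Corollary~\ref{cor:constant_memory_pebbles}, namely that the single agent $A$ finishes at the starting vertex $v_0$ while carrying all pebbles. In the correspondence established in the proof of Lemma~\ref{lem:agent>pebble}, a carried pebble $i$ is realized by an agent $A_i$ that is colocated with $A_0$ and in state $c_i$; hence, at termination, all pebble-agents are located at $v_0$ together with $A_0$. As observed in the remark following Lemma~\ref{lem:agent>pebble}, this lets us sidestep the technical need for transitions out of halting states: I would augment each two-state agent with one additional halting state into which all agents pass once $A_0$ signals, via its own halting state, that exploration is complete and every agent has gathered at $v_0$. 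Thus all $\bigO(\log\log n)$ agents halt at the starting vertex. I expect no genuine obstacle in this argument; the only thing to watch is precisely this halting-state subtlety, which is already resolved by the return-to-start property of the underlying single-agent algorithm (cf.\ Theorem~\ref{theorem_loglog_algorithm}).
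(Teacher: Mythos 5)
Your proposal is correct and matches the paper's own (implicit) argument exactly: the paper derives this corollary directly by combining Corollary~\ref{cor:constant_memory_pebbles} with Lemma~\ref{lem:agent>pebble}, and the halting-state subtlety you raise is handled the same way in the remark following that lemma. Nothing further is needed.
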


\begin{remark}\label{remark_pebbles_encoding_transition}
The agent in  Theorem~\ref{theorem_loglog_algorithm} requires $\bigO(\log \log n)$ bits of memory and the agents in  Corollaries~\ref{cor:constant_memory_pebbles} and~\ref{cor:constant_memory_agents} only  $\bigO(1)$ bits of memory. 
An interesting question is how much memory is necessary to fully encode the transition function 
\begin{align*}
\delta \colon \Agentstate \times \N \times \N \times 2^P 	\times 2^P &\to \Agentstate \times (\N \cup \{\bot\}) \times 2^P \times 2^P,
\end{align*}
of an agent (see  \S~\ref{sec:prelim:agents:pebbles}). 
Naively encoding it as a table with a row for every possible state, vertex degree, previous edge
label and possible combination of $\bigO(\log \log n)$ pebbles/agents at the current vertex takes $(\log n)^{\bigO(1)}$ bits of memory.

However, we can obtain a much more compact encoding by exploiting the specific structure of our algorithm:
First of all, we never explicitly use the degree of the current vertex.
Moreover, the Turing machine from Lemma~\ref{lemma_uxs_general_graph} that we internally use produces an exploration sequence of the form~$\{-1,0,1\}^*$.
This means that our transition function can be expressed more concisely if we would allow in our model to specify transitions relative to the label of the previous edge.

Furthermore, our algorithm only interacts with a constant number of pebbles in every level of the recursion (cf.~Theorem~\ref{theo_pebble_machine_simulation}).
We can express the state of $\pebblemachine$ in the proof of Theorem~\ref{theorem_loglog_algorithm} as a vector, where each component encodes the state in a different level of the recursion.
In every transition, only two consecutive entries of this vector can change, as one level of recursion only interacts with the level of recursion below to access the simulated memory.  

Since there are only a constant number of states per recursive level, and only a constant number of pebbles involved, all transitions regarding two consecutive levels can be encoded in constant memory.
If we therefore explicitly encode all~$\bigO(\log\log n)$ levels of recursion and additionally 
allow to only give the edge label offset in the transition function, the entire transition function can be encoded with~$\bigO(\log\log n)$ bits of memory.
\end{remark}

\FloatBarrier


\section{Lower bound for collaborating agents}
\label{sec:lower_bound}

The goal in this section is to obtain a lower bound on the number $k$ of $s$-state agents needed for exploring any graph on at most $n$ vertices. 
To this end, we will construct a \emph{trap} for a given set of agents, i.e., a graph that the agents are unable to explore. The number of vertices of this trap yields a lower bound on the number of agents required for exploration. The graphs involved in our construction are 3-regular and allow a labeling such that the two port numbers at both endpoints of any edge coincide. We therefore speak of the label of an edge and assume the set of labels to be $\{0,1,2\}$.

Moreover, we call the sequence of labels $l_0, l_1, l_2, \ldots $ of the edges traversed by an agent
in a 3-regular graph $G$ starting at a vertex $v_0$ a \emph{traversal sequence} and say that the agent \emph{follows} the traversal sequence $ l_0, l_1, l_2, \ldots$ in $G$ starting in $v_0$.
Note that traversal sequences specify absolute labels to follow, whereas exploration sequences give offsets to the previous label in each step.
  
The most important building block for our construction are \emph{barriers}.
Intuitively, a barrier is a subgraph that cannot be crossed by a subset of the given set of agents.
To define barriers formally, we need to describe how to connect two 3-regular graphs.
Let $B$ be a 3-regular graph with two distinguished edges $\{u,v\}$ and $\{u',v'\}$ both labeled $0$, as shown in Figure~\ref{fig_barrier_example}.
An arbitrary 3-regular graph $G$ with at least two edges labeled $0$ can be connected to $B$ as follows:
We remove the edges $\{u,v\}$ and $\{u',v'\}$ from $B$ and two edges labeled 0 from $G$. 
We then connect each vertex of degree 2 in $G$ with a vertex of degree~2 in $B$ via an edge labeled 0.

\begin{figure}
\centering
\tikzstyle{graphnode}=[circle,draw,minimum size=2em,scale=0.8]
\tikzstyle{dnode}=[scale=0.8]
\newcommand*{\diffY}{0.6}
\newcommand*{\diffX}{0.5}
\newcommand*{\xOff}{6}
\begin{tikzpicture}
\node(g1) at (0,-2*\diffY) [ellipse,draw,minimum width=6em,minimum height=8em] {$B$};
\node (v1) at (\diffX,-3*\diffY) [graphnode] {};
\node (d1) at (\diffX,-3*\diffY) [dnode] {$v'$};
\node (v2) at (\diffX,-\diffY)  [graphnode] {$v$};

\node (v3) at (-\diffX,-3*\diffY) [graphnode] {};
\node (d3) at (-\diffX,-3*\diffY) [dnode] {$u'$};
\node (v4) at (-\diffX,-\diffY) [graphnode] {$u$};
\draw (v4)--node [dnode,above] {$0$} (v2);
\draw (v3)--node [dnode,below] {$0$} (v1);

\node(g2) at (0+\xOff,-2*\diffY) [ellipse,draw,minimum width=6em,minimum height=8em] {$B$};
\node(g2) at (0+0.54*\xOff,-2*\diffY) [ellipse,draw,minimum width=6em,minimum height=9em] {$G$};
\node (w1) at (\diffX+\xOff,-3*\diffY) [graphnode] {};
\node (d1) at (\diffX+\xOff,-3*\diffY) [dnode] {$v'$};
\node (w2) at (-\diffX+\xOff,-3*\diffY) [graphnode] {};
\node (d2) at (-\diffX+\xOff,-3*\diffY) [dnode] {$u'$};
\node (w3) at (\diffX+\xOff,-\diffY) [graphnode] {};
\node (d3) at (\diffX+\xOff,-\diffY) [dnode] {$v$};
\node (w4) at (-\diffX+\xOff,-\diffY) [graphnode] {$u$};
\node (u1) at (\diffX+0.52*\xOff,-3*\diffY) {};
\node (u2) at (0.45*\xOff,-3.3*\diffY) {};
\node (u3) at (\diffX+0.5*\xOff,-1*\diffY) {};
\node (u4) at (0.45*\xOff,-0.6*\diffY) {};
\draw (w1) to [bend left=50,looseness=1]  node [dnode,above] {$0$} (u2);
\draw (w2) to [bend left=50,looseness=1]  node [dnode,above] {$0$} (u1);
\draw (w3) to [bend right=50,looseness=1] node [dnode,above] {$0$} (u4);
\draw (w4) to [bend right=50,looseness=1] node [dnode,below] {$0$} (u3);
\end{tikzpicture}
\caption{The $r$-barrier $B$ on the left with two distinguished edges $\{u,v\}$, $\{u',v'\}$ can be connected to an arbitrary graph $G$, as shown on the right.}
\label{fig_barrier_example}
\end{figure}
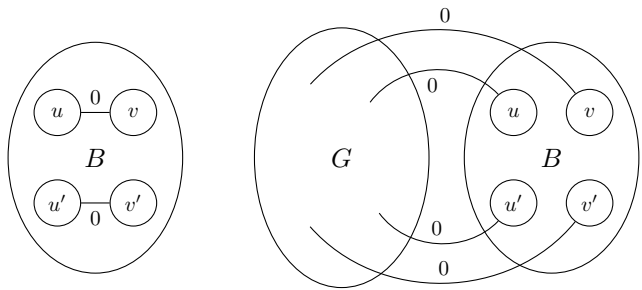 

\begin{definition}[$r$-barrier]
For $1 \leq r \leq k$, the graph $B$ is an $r$\emph{-barrier} for a set of $k$ $s$-state agents~$\agentset$ if for all graphs $G$ connected to $B$ as above, the following two properties hold:
\begin{enumerate}
\item For all subsets of agents $\agentset' \subseteq \agentset$ with $|\agentset'| \leq r$ and every pair $(a,b)$ in $ \{u,v\} \times \{u',v'\}$ the following holds: If initially all agents $\agentset$ are at vertices of~$G$, then no agent in the set $\agentset'$ can traverse $B$ from $a$ to $b$ or vice versa when only agents in $\agentset'$ enter the subgraph $B$ at any time during the traversal. We equivalently say that no subset of $r$ agents can traverse $B$ from $a$ to~$b$ or vice versa. 
\item Whenever a subset of agents $\agentset' \subseteq \agentset$ with $|\agentset'| = r+1$ enters the subgraph~$B$ during the traversal, all agents in $\agentset'$ leave $B$ either via $u$ and $v$ or via $u'$ and $v'$ if no other agents visit $B$ during this traversal. In other words, the set of agents~$\agentset'$ cannot split up such that a part of the agents leaves $B$ via~$u$ or~$v$ and the other part via~$u'$ or~$v'$.
\end{enumerate}

\end{definition}

A $k$-barrier immediately yields a trap for a set of agents.

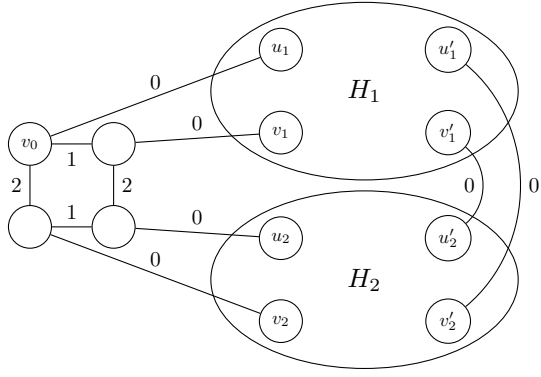
\begin{figure}
\centering
\tikzstyle{graphnode}=[circle,draw,minimum size=23pt,scale=0.7]
\usetikzlibrary{shapes}
\newcommand*{\x}{1.1}
\newcommand*{\y}{0.55}
\newcommand*{\yOff}{2.5}
\begin{tikzpicture}[auto=left]
\node (a1) at (-6*\x, -\y+0.5*\yOff) [graphnode] {};
\node (a2) at (-5*\x, -\y+0.5*\yOff) [graphnode] {};
\node (a3) at (-6*\x, \y+0.5*\yOff) [graphnode] {$v_0$};
\node (a4) at (-5*\x, \y+0.5*\yOff) [graphnode] {};

\draw (a1) to  node [dnode,above] {$1$} (a2);
\draw (a1) to node [dnode,left] {$2$} (a3);
\draw (a2) to   node [dnode,right] {$2$} (a4);
\draw (a3) to node [dnode,below] {$1$} (a4);

\node(g2) at (-2*\x,0) [ellipse,draw,minimum width=115pt,minimum height=67pt] {$H_2$};
\node(g2) at (-2*\x,\yOff) [ellipse,draw,minimum width=115 pt,minimum height=67pt] {$H_1$};
\node (v1) at (-1*\x, \y) [graphnode] {$u_2'$};
\node (v2) at (-1*\x, -\y) [graphnode] {$v_2'$};
\node (v3) at (-3*\x,\y) [graphnode] {$u_2$};
\node (v4) at (-3*\x,-\y) [graphnode] {$v_2$};

\node (w1) at (-1*\x,\y+\yOff) [graphnode] {$u_1'$};
\node (w2) at (-1*\x,-\y+\yOff) [graphnode] {$v_1'$};
\node (w3) at (-3*\x,\y+\yOff) [graphnode] {$u_1$};
\node (w4) at (-3*\x,-\y+\yOff) [graphnode] {$v_1$};
\draw (w1) to [bend left=50,looseness=1]  node [dnode] {$0$} (v2);
\draw (w2) to [bend left=50,looseness=1]  node [dnode,left] {$0$} (v1);

\draw (w3) tonode [dnode,above] {$0$} (a3);
\draw (w4) to  node [dnode,above] {$0$} (a4);
\draw (a1) to  node [dnode,above] {$0$} (v4);
\draw (a2) to node [dnode, above] {$0$} (v3);
\end{tikzpicture}
\caption{Constructing a trap given two $k$-barriers $H_1$ and $H_2$.}
\label{fig_trap_construction}
\end{figure}

\begin{lemma}
\label{lem_barrier_to_trap}
Given a $k$-barrier with $n$ vertices for a set of $k$ agents $\agentset$, we can construct a trap with $2n+4$ vertices for $\agentset$.
\end{lemma}

\begin{proof}
Let $H_1$ and $H_2$ be two copies of a $k$-barrier for the set of agents $\agentset$ with distinguished edges $\{u_i,v_i\}$, $\{u_i',v_i'\}$ of $H_i$. We connect the two graphs and four additional vertices, as shown in Figure~\ref{fig_trap_construction}. If the agents start in the vertex $v_0$, then none of the agents can reach~$u_1'$ or $v_1'$ via the $k$-barrier
$H_1$ or via the $k$-barrier $H_2$.
Thus the agents $\agentset$ do not explore the graph. The constructed trap for the set of agents $\agentset$ contains $2n+4$ vertices.
\end{proof}

Our goal for the remainder of the section is to construct a $k$-barrier for a given set of $k$ agents $\agentset$ and to give a good upper bound on the number of vertices it contains. This will give an upper bound on the number of vertices
of a trap by Lemma~\ref{lem_barrier_to_trap}. The construction of the $k$-barrier is recursive. We start with a 1-barrier which builds on the following
 useful result by Fraigniaud et al.~\cite{fraigniaud06} stating that, for any set of non-cooperative agents, there is a graph containing an edge which is not traversed by any of them.
A set of agents is \emph{non-cooperative} if the transition function $\delta_i$ of every agent $A_i$ is completely independent of the state and location of the other agents, i.e., $\delta_i$  is independent of $\vec \agentstate_{-i}$, see \S~\ref{sec:prelim:agents:set}.

\begin{theorem}[{\cite[Theorem~4]{fraigniaud06}}]\label{fraigniaud_trap_q_agents}
For any $k$ non-cooperative $s$-state agents, there exists a 3-regular graph $G$ on $\bigO(k s)$ vertices with the following property: There are two edges $\{v_1,v_2\}$ and $\{v_3,v_4\}$ in~$G$, the former labeled 0, such that none of the $k$ agents traverses the edge $\{v_3,v_4\}$ when starting in~$v_1$ or~$v_2$.
\end{theorem}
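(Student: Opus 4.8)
The plan is to exploit two features of the setting. First, an $s$-state agent in a $3$-regular graph is a deterministic device whose next move is determined solely by its state in $\Agentstate$ and the label $l\in\{0,1,2\}$ of the edge it just entered; I will call such a pair $(\sigma,l)$ a \emph{local configuration}, and there are at most $3s$ of them. Second, because the agents are non-cooperative, each agent's walk is independent of the others, so it suffices to build a single graph in which each of the $2k$ walks obtained by starting one agent at $v_1$ or at $v_2$ individually fails to reach a designated edge. It is precisely this independence that lets the size stay linear in $s$ rather than growing like a tower, as in the cooperative case.

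First I would trap a single agent in $\bigO(s)$ vertices. Starting the agent at $v_1$, I grow the graph \emph{on demand}: whenever the agent first enters a vertex I fix its three incident ports, sending the port the agent is about to use to a fresh vertex and leaving the other ports as not-yet-connected \emph{stubs}, so that every visited vertex already has degree $3$ and $\delta$ is well defined. Recording the local configuration in which the agent enters successive vertices and invoking pigeonhole, within $3s+1$ steps a local configuration must recur. Since the subsequent behaviour of the agent depends only on its local configuration, recurrence of a configuration lets me close the walk into a cycle $C$ of length $\bigO(s)$ on which the agent loops forever; the key point is that throughout this loop the agent only ever uses the cycle edges of $C$ and never a stub, so the stubs remain untraversed.

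I would then iterate this construction over all $k$ agents and both start vertices, crucially attaching new structure only to existing stubs. When I process a further walk I simulate it on the current partial graph; as long as it runs along edges that are already present its moves are forced, and the first time it steps onto a stub I resume the on-demand growth there and again close a cycle after $\bigO(s)$ fresh vertices. Extending the graph at a stub never alters the move of any previously processed agent, because an agent's decision at a degree-$3$ vertex depends only on its state and incoming port, not on where its other edges lead. After all $\bigO(k)$ walks are handled, every walk is confined to a region of $\bigO(ks)$ vertices and no walk ever traverses a stub. I complete $G$ to an exactly $3$-regular graph with symmetric labels in $\{0,1,2\}$ by pairing up the remaining stubs (adding a constant number of auxiliary vertices to repair parity and connectivity); since no agent ever steps onto a stub, each such closing edge is traversed by none of the $2k$ walks, and I designate one of them as $e_2=\{v_3,v_4\}$. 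Finally I fix the labelling so that the edge $e_1=\{v_1,v_2\}$ joining the two start vertices carries label $0$.

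The step I expect to be the main obstacle is realizing the ``close into a cycle'' operation, and the final stub-pairing, with a legal symmetric $3$-regular labelling: the three edges at every vertex must carry distinct labels from $\{0,1,2\}$, each edge must receive the same label at both endpoints, and when I glue the recurring configuration back into a loop I must avoid creating two equally-labelled edges at a shared vertex. Carrying out this gluing so that it both respects the labelling constraints and keeps the confinement region of size $\bigO(s)$ per agent --- essentially a crossing-sequence/configuration-repetition bookkeeping --- is the technical heart of the argument; once it is in place, the bound $\bigO(ks)$ follows immediately from the pigeonhole count $3s$ of local configurations together with the independent, non-cooperative treatment of the $2k$ walks.
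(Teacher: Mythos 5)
The paper does not prove this statement at all: it is imported verbatim as Theorem~4 of Fraigniaud et al.~\cite{fraigniaud06} and used as a black box (the only commentary the paper offers is the remark, after Lemma~\ref{lemma_1_barrier}, that the cited proof rests on the fact that in a $3$-regular symmetrically labeled graph the next move of an $s$-state agent depends only on the pair consisting of its state and the incoming label, so that behaviour repeats within $3s$ steps). Your outline is a reconstruction of exactly that argument -- pigeonhole on the at most $3s$ local configurations $(\sigma,l)$, independence of the $2k$ walks by non-cooperativity, and a final completion to a $3$-regular graph -- so in spirit it matches the proof the paper is relying on.

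However, the step you flag as ``the technical heart'' is a genuine gap, not just a deferred routine verification: the recurrence of a local configuration does \emph{not} by itself let you close the walk into a cycle. Concretely, suppose the configuration $(\sigma,l)$ occurring on arrival at $v_i$ recurs on arrival at $v_j$ with $i<j$, and you try to realize the loop by redirecting the last edge of the walk into $v_i$. That edge must carry label $l$ at $v_i$ (since $l$ is the incoming label of the recurring configuration), but $v_i$ already has an edge labeled $l$, namely the one from $v_{i-1}$ by which the agent first entered it. In a graph whose labels are symmetric and locally distinct these two edges must be one and the same, which forces $v_{j-1}=v_{i-1}$; that identification is not justified by the forward determinism of the automaton (configurations do not propagate backwards), so the naive gluing is inconsistent. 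Note also that your on-demand growth alone does not confine the walk: the label sequence $(0,1,2)^\infty$, for instance, generates an infinite path if you only ever attach fresh vertices. The correct repair is to extract the eventually periodic label sequence $c_1,c_2,\dots$ (which is determined by the agent alone, independently of the graph, precisely because its only input is its own previous output), realize the periodic part as a closed walk on a quotient of positions modulo the period -- an identification that \emph{is} label-consistent because positions $t$ and $t+q$ carry identical incoming and outgoing labels -- and then attach the preperiodic prefix. This is the actual content of the cited proof, and without it the $\bigO(s)$ confinement per walk, and hence the $\bigO(ks)$ bound, is not established.
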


\begin{figure}
\centering
\tikzstyle{graphnode}=[circle,draw,minimum size=2em,scale=0.8]
\tikzstyle{dnode}=[scale=0.8]
\newcommand*{\diffX}{1.1}
\newcommand*{\diffY}{1}
\begin{tikzpicture}[scale=.7,auto=left]

\node (a1) at (-5.5*\diffX,1*\diffY) [graphnode] {$v$};
\node (a2) at (-5.5*\diffX,-1*\diffY) [graphnode] {};
\node (a3) at (-4.5*\diffX,0) [graphnode] {};
\node (a4) at (-6.5*\diffX,0) [graphnode] {$u$};
\draw(a1)--(a3);
\draw(a2)--(a3);
\draw(a3)-- (a4);
\draw(a1)--node [dnode,above left]{$0$} (a4);
\draw(a4)--(a2);

\node (b1) at (5.5*\diffX,1*\diffY) [graphnode] {$u'$};
\node (b2) at (5.5*\diffX,-1*\diffY) [graphnode] {};
\node (b3) at (4.5*\diffX,0) [graphnode] {};
\node (b4) at (6.5*\diffX,0) [graphnode] {$v'$};
\draw(b1)--(b3);
\draw(b2)--(b3);
\draw(b3)--(b4);
\draw(b4)--(b2);
\draw(b1)--node [dnode,above right] {$0$}(b4);

\node(g1) at (-2.1*\diffX,0) [ellipse,draw,minimum width=80pt,minimum height=80pt] {$H$};
\node(g2) at (2.1*\diffX,0) [ellipse,draw,minimum width=80pt,minimum height=80pt] {$H'$};
\node (v1) at (-3*\diffX,\diffY) [graphnode] {$v_3$};
\node (v2) at (-3*\diffX,-\diffY) [graphnode] {$v_4$};
\node (v3) at (-1.2*\diffX,\diffY) [graphnode] {$v_1$};
\node (v4) at (-1.2*\diffX,-\diffY) [graphnode] {$v_2$};
\node (vv1) at (3*\diffX,\diffY) [graphnode] {};
\node (vv2) at (3*\diffX,-\diffY) [graphnode] {};
\node (vv3) at (1.2*\diffX,\diffY) [graphnode] {};
\node (vv4) at (1.2*\diffX,-\diffY) [graphnode] {};
\node (lvv1) at (3*\diffX,\diffY) [dnode] {$v'_3$};
\node (lvv2) at (3*\diffX,-\diffY) [dnode] {$v'_4$};
\node (lvv3) at (1.2*\diffX,\diffY) [dnode] {$v'_1$};
\node (lvv4) at (1.2*\diffX,-\diffY) [dnode] {$v'_2$};

\draw(a1)--node [dnode,above] {$l$} (v1);
\draw(a2)--node [dnode,below] {$l$} (v2);
\draw(b1)--node [dnode,above] {$l$} (vv1);
\draw(b2)--node [dnode,below] {$l$} (vv2);

\draw (v3)--node [dnode] {$0$} (vv3);
\draw (v4)--node [dnode,below] {$0$} (vv4);
\end{tikzpicture}
\caption{A 1-barrier $B$ for $\agentset$ for the case that $l\in \{1,2\}$.}
\label{fig_0_barrier_construction}
\end{figure}

We proceed to generalize this construction towards arbitrary starting states and collaborating agents.

\begin{lemma}
\label{lemma_1_barrier}
For every set of $k$ collaborating $s$-state agents $\agentset$, there exists a \mbox{$1$-barrier $B$} with $\bigO(k s^2)$ vertices. Moreover, $B$ remains a~$1$-barrier even if for all $i \in \{1,\ldots,k\}$ agent~$A_i$ starts in an arbitrary state~$\agentstate \in \Agentstate_i$ instead of the starting state~$\agentstartstate_i$.
\end{lemma}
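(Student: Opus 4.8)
The plan is to build $B$ from two copies of the graph supplied by Theorem~\ref{fraigniaud_trap_q_agents}, glued together and flanked by two constant-size gadgets that create the distinguished edges $\{u,v\}$ and $\{u',v'\}$, exactly as in Figure~\ref{fig_0_barrier_construction}. Two reductions are needed before Theorem~\ref{fraigniaud_trap_q_agents} applies. First, although the agents in $\agentset$ cooperate, an agent that is alone at its vertex makes a transition that does not depend on any other agent; so while a single agent sits inside one of the two copies with no other agent sharing its vertex, it behaves like a fixed \emph{non-cooperative} agent. Second, such an agent may enter a copy in an arbitrary state. I would therefore introduce, for every agent $A_i$ and every state $\sigma\in\Agentstate_i$, the non-cooperative $s$-state agent $\hat{A}_{i,\sigma}$ obtained by feeding $A_i$ the empty neighbourhood $(\bot,\dots,\bot)$ and starting it in $\sigma$. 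Applying Theorem~\ref{fraigniaud_trap_q_agents} to these at most $ks$ agents yields a $3$-regular graph $H$ on $\bigO(ks\cdot s)=\bigO(ks^2)$ vertices with edges $\{v_1,v_2\}$ (labelled $0$) and $\{v_3,v_4\}$ such that, starting at $v_1$ or $v_2$ in any state, none of the $\hat{A}_{i,\sigma}$ ever traverses $\{v_3,v_4\}$. I take two copies $H,H'$ and connect them so that their $\{v_1,v_2\}$-sides face each other through two label-$0$ edges $v_1v_1'$ and $v_2v_2'$ (``the middle''), while the $\{v_3,v_4\}$-edge of each copy is rerouted through the flanking gadget carrying the distinguished edge; the resulting $B$ is $3$-regular with $\bigO(ks^2)$ vertices.

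For Property~1 (no single agent crosses $B$), the key observation is that rerouting the edge $\{v_3,v_4\}$ cannot help the agent: while a lone agent stays inside $H$, its run coincides step by step with its run in the original Fraigniaud graph up to the first moment it would use the port of the (now rerouted) edge $\{v_3,v_4\}$, and by the theorem applied to $\hat{A}_{i,\sigma}$ this moment never occurs once the agent has entered $H$ through the $\{v_1,v_2\}$-side. Hence a lone agent entering $H$ from the middle can never leave $H$ towards the flanking gadget, and symmetrically for $H'$. Since the two copies are oriented oppositely relative to the middle, this blocks both crossing directions: a left-to-right traversal would have to cross $H'$ from its $\{v_1',v_2'\}$-side to its $\{v_3',v_4'\}$-side, and a right-to-left traversal would have to cross $H$ in the analogous way, both of which are impossible. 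The matching of port labels (the connecting edges and every $\{v_1,v_2\}$-edge are labelled $0$) guarantees that the entering agent has the same local view as in the Fraigniaud graph, so the argument is insensitive to the state in which it enters, which is also what yields the robustness to arbitrary starting states claimed in the lemma.

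For Property~2 (two agents cannot split), I would argue through the last moment $t_0$ at which the two agents occupy a common vertex $w$ of $B$. After $t_0$ the two agents never again share a vertex, and, since no third agent ever enters $B$, each of them makes only non-cooperative transitions for the remainder of its stay in $B$; thus each is subject to the single-agent impossibility from Property~1, for \emph{every} state in which it might leave $w$. Writing $B$ as the disjoint union of a left part (the left gadget together with $H$) and a right part ($H'$ together with the right gadget), a lone agent started anywhere in the left part can never reach the right exit $\{u',v'\}$ without crossing $H'$ from its $\{v_1',v_2'\}$-side (impossible), and symmetrically a lone agent started in the right part can never reach the left exit. Consequently the part containing $w$ fixes a single side through which both agents must leave, so they cannot split.

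The main obstacle I expect is making the two reductions watertight, in particular the claim that a lone agent's behaviour inside a copy is faithfully captured by the non-cooperative agent $\hat{A}_{i,\sigma}$ and coincides with the Fraigniaud run until the rerouted edge would be used; this needs care about the incoming port label at the entry vertex (it is $0$, matching the label of $\{v_1,v_2\}$) and about re-entries, so that ``traversing'' and ``leaving'' $B$ are read as internal passages through $B$ rather than detours through $G$. Everything else—the constant gadgets, $3$-regularity, and the vertex count $\bigO(ks^2)$—is routine once these reductions are in place.
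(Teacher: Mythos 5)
Your proposal matches the paper's proof in all essentials: the same reduction from cooperating agents in arbitrary states to the $\bigO(ks)$ non-cooperative agents obtained by taking every agent $A_i$ in every possible starting state $\agentstate\in\Agentstate_i$, the same application of Theorem~\ref{fraigniaud_trap_q_agents} yielding a graph on $\bigO(ks^2)$ vertices, the same gluing of two oppositely oriented copies through the label-$0$ edges with constant-size diamond gadgets carrying the distinguished edges, and the same derivation of Properties~1 and~2 by reducing any crossing or split to a lone agent performing the forbidden $\{v_1,v_2\}\to\{v_3,v_4\}$ traversal in one of the copies. Your explicit treatment of the incoming port label $0$ at the entry vertices and your ``last common vertex'' phrasing of Property~2 are, if anything, slightly more detailed than the paper's own one-line versions of those steps.
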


\begin{proof}
Let $\agentset=\{A_1,\ldots, A_k\}$, let $\Agentstate_i$ be the set of states of $\agent_i$ and let $\smash{\agentstartstate_i}$ be its starting state. 
For all $i \in \{1,\ldots, k\}$ and all $\agentstate \in \Agentstate_i$, we define agent $\smash{\agent_i^{(\agentstate)}}$ to be the agent with the same behavior as~$\agent_i$, but starting in state $\agentstate$ instead of $\agentstartstate_i$. That is, $\smash{\agent_i^{(\agentstate)}}$ has the same set of states $\Agentstate_i$ as $\agent_i$ and it transitions according to the function~$\delta_i$ of $\agent_i$.
Moreover, let $S:= \{ \smash{\agent_i^{(\agentstate)}} \mid i \in \{1,\ldots, k\},\ \agentstate \in \Agentstate_i\}$.

Applying Theorem~\ref{fraigniaud_trap_q_agents} for the set of agents $S$ yields a graph $H$ with an edge $\{v_1,v_2\}$ labeled 0 and an edge $\{v_3,v_4\}$ labeled $l \in \{0,1,2\}$ so that any agent $\smash{\agent_i^{(\agentstate)}}$ that starts  in $v_1$ or $v_2$ does not traverse the edge $\{v_3,v_4\}$. 
Let $B$ be the graph consisting of two connected copies of $H$ and 8 additional vertices, as illustrated in Figure~\ref{fig_0_barrier_construction}. 
The edges $\{v_1,v_2\}$ and $\{v'_1,v'_2\}$ are replaced by $\{v_1,v'_1\}$ and $\{v_2,v'_2\}$, which are also labeled $0$. 
The edges $\{v_3,v_4\}$ and $\{v'_3,v'_4\}$ with label $l$ are deleted and $v_3$ and $v_4$ are connected each to one of the two two-degree vertices of a diamond graph by an edge with label $l$. The same connection to a diamond graph is added for $v'_3$ and $v'_4$ as shown in Figure~\ref{fig_0_barrier_construction}.
The edge labels of the two diamond graphs are arbitrary. Since each diamond graph has two vertices of degree three, each diamond graph has at least one edge with label $0$. We choose one edge with label $0$ and call the end vertices $u$ and $v$ (resp.~$u',v'$).
Note that in Figure~\ref{fig_0_barrier_construction} we have $l \in \{1,2\}$; for the case that $l=0$ the edge $\{u,v\}$ is the unique edge between the two vertices that are not adjacent to $v_3$ or $v_4$.

We claim that $B$ is a $1$-barrier for $\agentset$ with the distinguished edges $\{u,v\}$ and $\{u',v'\}$.
Assume for the sake of contradiction, that the first property does not hold, i.e.,  there is a graph~$G$ that can be connected to $B$ via the pairs of vertices $\{u,v\}$ and $\{u',v'\}$ so that if the agents $\agentset$ start in $G$ in an arbitrary state, there is an agent~$\agent_j$ that walks (without loss of generality) from $u$ to $u'$ in $B$ while there are no other agents in~$B$.
Then~$\agent_j$ in particular walks from $v'_1$ or $v'_2$ to $v'_3$ or $v'_4$ in $H'$ and starts this walk in a state $\agentstate\in \Agentstate_j$. 
But the traversal sequence of
$\agent_j$ in $H'$ is the same as that of $\smash{\agent_j^{(\agentstate)}}$ that starts at $v'_1$ or $v'_2$. This would imply that $\smash{\agent_i^{(\agentstate)}}$ traverses the edge $\{v_3,v_4\}$ in the original graph $H$ when starting in $v_1$ or $v_2$, which contradicts Theorem~\ref{fraigniaud_trap_q_agents}.

To prove the second property of a 1-barrier, assume that there is a set of two agents,
such that both enter $B$ during the traversal and one of them exits $B$ via~$u$ or~$v$ and the other via~$u'$ or~$v'$. 
But then again one of the agents must have traversed $H$ starting in $v_1$ or $v_2$ in a state $\agentstate$ and finally traversed the edge with label $l$ incident to $v_3$ or $v_4$ or similarly in~$H'$ with $v'_1,v'_2,v'_3,v'_4$. This leads to the same contradiction as above.

The whole proof does not use the specific starting states of the agents $\agentset$ and, in particular, the definition of $S$ is independent of the starting states of the agents. 
Consequently, $B$ is a 1-barrier for $\agentset$ even if we change the starting states of the agents.

Since every agent has $s$ states, we obtain that the cardinality of $S$ is bounded by $\bigO(ks)$ and, hence, the graph $B$ has $\bigO(ks^2)$ vertices by Theorem~\ref{fraigniaud_trap_q_agents}.
\end{proof}

The proof of Theorem~\ref{fraigniaud_trap_q_agents} in \cite{fraigniaud06}
uses the fact that when traversing a 3-regular graph the next state of an $s$-state agent only depends on the previous state and the label $l \in \{0,1,2\}$ of the edge leading back to the previous vertex.
Thus, after at most $3 s$ steps, the state of the agent and therefore also the next label chosen
need to repeat with a period of length at most $3 s$.
For cooperative agents, however, the next state and label that are chosen may also depend on the positions and states of the other agents.
We therefore need to account for the positions of all agents when forcing them into a periodic behavior. To this end, we will consider the relative positions of the agents with respect to a given vertex $v$. For our purposes, it is sufficient to define the relative position of an agent $A_i$ by the shortest traversal sequence leading from $v$ to the location of~$A_i$. 
This motivates the following definition.

\begin{definition}
The \emph{configuration} of a set of~$k$~agents~$\agentset=\{A_1,\ldots, A_k\}$ in a graph $G$ \emph{with respect to a vertex}~$v$ is a $(3k)$-tuple $(\agentstate_1,l_1,r_1,\agentstate_2,l_2,r_2,\ldots,\agentstate_k,l_k,r_k)$, where $\agentstate_i$ is the current state of $\agent_i$, $l_i$ is the label of the edge leading back to the previous vertex visited by $\agent_i$ and~$r_i$ is the
shortest traversal sequence from $v$ to $A_i$, where ties are broken in favor of lexicographically smaller sequences and where we set~$r_i = \bot$ if the location of~$\agent_i$ is~$v$.
\end{definition}

In order to limit the number of possible configurations, we will force the agents to stay close together. Intuitively, we can achieve this for any graph~$G$ by replacing all edges with $(k-1)$-barriers.
This way, only all agents together can move between neighboring vertices of the original graph~$G$. To formalize this, we first need to explain how edges of a graph can be replaced by barriers.
Since our construction may not be 3-regular, we need a way to extend it to a 3-regular graph.

\begin{definition}
Given a graph $G$, with vertices of degrees 2 and 3, we define the \emph{3-regular extension}~$\regExt{G}$ as the graph resulting from copying $G$ and connecting every vertex $v$ of degree 2 to its copy~$v'$. As the edges incident to $v$ and $v'$ have the same labels, it is possible to label the new edge $\{v,v'\}$ with a locally unique label in $\{0,1,2\}$.
\end{definition} 
 
Note that the 3-regular extension only increases the number of vertices of the graph by a factor of 2. 
Given a 3-regular graph $G$ and an $r$-barrier $B$ for a set of~$k$ agents~$\agentset$  with $k \geq r $, we replace edges of~$G$ using the following construction.
First, for every $l \in \{0,1,2\}$ we replace every edge $\{a,b\}$ labeled $l$ with the gadget $B(l)$ shown in Figure~\ref{fig_replace_edge_with_barrier}, and we call the resulting graph~$G_1(B)$. 
By construction, the labels of the edges incident to the same vertex in $G_1(B)$ are distinct. 
However, certain vertices only have degree 2. 
We take the 3-regular extension of $G_1(B)$ and define the resulting graph as $G(B):=\regExt{G_1(B)}$.

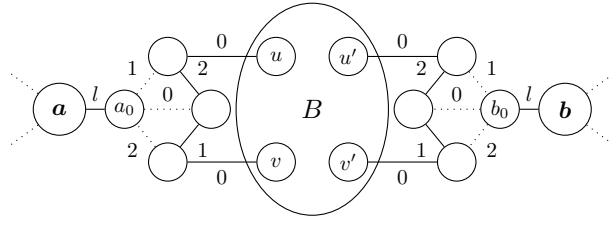
\begin{figure}
\centering
\tikzstyle{graphnode}=[circle,draw,minimum size=18pt,scale=0.8]
\usetikzlibrary{shapes}
\tikzstyle{dnode}=[scale=0.8]
\newcommand*{\diffX}{0.95}
\newcommand*{\diffY}{0.7}
\begin{tikzpicture}[auto=left]
\node(g1) at (0,0) [ellipse,draw,minimum width=57pt,minimum height=80pt] {$B$};
\node (v1) at (-0.5*\diffX,\diffY) [graphnode] {$u$};
\node (v2) at (-0.5*\diffX,-\diffY) [graphnode] {};
\node (d2) at (-0.5*\diffX,-\diffY) [dnode] {$v$};
\node (v3) at (0.5*\diffX,\diffY) [graphnode] {};
\node (lv3) at (0.5*\diffX,\diffY) [dnode] {$u'$};
\node (v4) at (0.5*\diffX,-\diffY) [graphnode] {};
\node (d4) at (0.5*\diffX,-\diffY) [dnode] {$v'$};

\node (w1) at (-2*\diffX,\diffY) [graphnode] {};
\node (w3) at (-2*\diffX,-\diffY) [graphnode] {};
\node (w4) at (-1.4*\diffX,0) [graphnode] {};
\node (w2) at (-2.6*\diffX,0) [graphnode] {};
\node (d2) at (-2.6*\diffX,0) [dnode] {$a_0$};

\draw (v1)--node [dnode,above] {$0$} (w1);
\draw (v2)--node [dnode] {$0$} (w3);
\draw[dotted] (w2)--node [dnode] {$1$} (w1);
\draw[dotted] (w3)--node [dnode] {$2$} (w2);
\draw (w4)--node [dnode] {$1$} (w3);
\draw (w1)--node [dnode] {$2$} (w4);
\draw[dotted] (w2)--node [dnode] {$0$} (w4);

\node (u1) at (2*\diffX,\diffY) [graphnode] {};
\node (u3) at (2.0*\diffX,-\diffY) [graphnode] {};
\node (u4) at (1.4*\diffX,0) [graphnode] {};
\node (u2) at (2.6*\diffX,0) [graphnode] {};
\node (d2) at (2.6*\diffX,0) [dnode] {$b_0$};

\draw (v3)--node [dnode,above] {$0$} (u1);
\draw (v4)--node [dnode,below] {$0$} (u3);
\draw[dotted] (u1)--node [dnode] {$1$} (u2);
\draw[dotted] (u2)--node [dnode] {$2$} (u3);
\draw (u3)--node [dnode] {$1$} (u4);
\draw (u4)--node [dnode] {$2$} (u1);
\draw[dotted] (u2)--node [dnode,above] {$0$} (u4);

\node (a) at (-3.5*\diffX,0)[circle,draw,minimum size=2em,scale=0.9] {$\bm{a}$};
\node (b) at (3.5*\diffX,0)[circle,draw,minimum size=2em,scale=0.9] {$\bm{b}$};
\draw(w2)--node [dnode,above] {$l$}(a);
\draw (u2)--node [dnode] {$l$}(b);
\draw[dotted](a)-- (-4.2*\diffX,0.7*\diffY);
\draw[dotted](a)-- (-4.2*\diffX,0.7*-\diffY);
\draw[dotted](b)-- (4.2*\diffX,0.7*\diffY);
\draw[dotted](b)-- (4.2*\diffX,0.7*-\diffY);
\end{tikzpicture}
\caption{An edge $\{a,b\}$ labeled $l$ is replaced with the gadget $B(l)$ containing an $r$-barrier $B$. Only the dotted edges incident to $a_0$ and $b_0$ that are not labeled $l$ are part of the gadget. Consequently, the gadget contains two vertices of degree 2. The vertices $a$ and $b$ are macro vertices of the graph $G(B)$.}
\label{fig_replace_edge_with_barrier}
\end{figure}

The graph $G(B)$ contains two copies of $G_1(B)$. 
To simplify exposition, we identify each vertex~$v$ with its copy $v'$ in $G(B)$. 
Then, there is a canonical bijection between the vertices in $G$ and the vertices in $G(B)$ which are not part of a gadget $B(l)$. 
These vertices can be thought of as the original vertices of~$G$, and we call them \emph{macro vertices}.

We now establish that the agents always stay close to each other in the graph $G(B)$.

\begin{lemma}
\label{lem_need_all_pebbles_to_cross_barrier}
Let $G$ be a connected $3$-regular graph and let~$B$ be a $(k-1)$-barrier for a set of $k$ agents~$\agentset$ with $s$ states each. Then, the following statements hold for the graph $G(B)$:
\begin{enumerate}
\item For all edges $\{v,v'\}$ in $G$ no strict subset $\agentset' \subsetneq \agentset$ of the agents can get from macro vertex~$v$ to macro vertex~$v'$ in $G(B)$ without all other agents also entering the gadget $B(l)$ between $v$ and $v'$, where $l \in \{0,1,2\}$.
\item At each step of the walk of $\agentset$ in $G$, there is some macro vertex $v$ such that all agents are at $v$ or in one of the surrounding gadgets  $B(0)$, $B(1)$ and $B(2)$.
\end{enumerate}
\end{lemma}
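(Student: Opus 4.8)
The plan is to derive both statements from the two defining properties of the $(k-1)$-barrier~$B$ together with one structural observation about the gadget $B(l)$ of Figure~\ref{fig_replace_edge_with_barrier}: inside $B(l)$ the only connection between the side attached to the macro vertex $v$ and the side attached to the neighboring macro vertex $v'$ runs through the embedded barrier~$B$ via its two distinguished edges. Hence any walk in $G(B)$ from $v$ to $v'$ must traverse~$B$ between its distinguished edges, and the edges added by the $3$-regular extension only connect the two copies of each gadget and thus create no shortcut between distinct macro vertices.

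For the first statement I would argue by contradiction. Suppose a strict subset $\agentset' \subsetneq \agentset$ gets from macro vertex $v$ to macro vertex $v'$ while some agent $A_j \in \agentset \setminus \agentset'$ never enters the gadget $B(l)$ between $v$ and $v'$. Let $S \subseteq \agentset$ be the set of agents that ever enter the embedded barrier $B$ during this traversal. Since $B \subseteq B(l)$, the agent $A_j$ does not enter $B$, so $A_j \notin S$ and therefore $|S| \le k-1$. By the structural observation, getting from $v$ to $v'$ forces some agent of $\agentset'$ to traverse~$B$ between its distinguished edges, and that agent lies in $S$. But $S$ is exactly the set of agents entering $B$, so the hypothesis of the first defining property of a $(k-1)$-barrier is met with the subset $S$ (which has size at most $k-1$), and the property asserts that no agent of $S$ can traverse~$B$ in either direction -- a contradiction. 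Hence all $k$ agents must enter $B$, and in particular every agent of $\agentset \setminus \agentset'$ enters the gadget.

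For the second statement I would set $N(v) := \{v\} \cup B(0) \cup B(1) \cup B(2)$ for the three gadgets incident to a macro vertex $v$ and prove by induction on the step count that all agents always lie in a common $N(v)$. The base case holds because all agents start at the same macro vertex. For the inductive step, note that the only vertices outside $N(v)$ adjacent to $N(v)$ are the neighboring macro vertices $w$, and each such $w$ is reachable only by emerging on the far side of the barrier $B$ inside the corresponding gadget. Thus an agent can leave $N(v)$ only by crossing such a barrier. By the reasoning of the first statement, any crossing forces all $k$ agents into that single barrier $B$, so while a crossing is in progress every agent sits inside one gadget, which lies in $N(v) \cap N(w)$; and by the second defining property of a $(k-1)$-barrier the $k$ agents all leave $B$ through the same pair of distinguished edges, hence all on the $v$-side or all on the $w$-side. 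Consequently the agents never split across non-overlapping neighborhoods: either they remain in $N(v)$, or they move together onto the $w$-side and the invariant holds with center~$w$.

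The main obstacle I anticipate is the temporal bookkeeping in the second statement: the barrier properties speak about an entire traversal episode, whereas the invariant must hold at \emph{every} step, so I must rule out that one agent is already on the far side of a barrier while another lags behind on the near side or in a different gadget. This is precisely what the two barrier properties prevent -- property~1 forbids any proper subset from crossing alone, forcing all $k$ agents to congregate inside the same barrier before anyone can emerge, and property~2 forbids the assembled group from splitting across the two sides -- so the agents progress between macro vertices strictly as a single block of $k$, which keeps the center $v$ well defined at each step.
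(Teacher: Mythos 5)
Your proposal is correct and follows essentially the same route as the paper's proof: part~1 is the same contradiction via the first barrier property (after observing that the $3$-regular extension adds no shortcut around the embedded copies of $B$, since all barrier vertices already have degree~3), and part~2 rests on the second barrier property preventing the agents from splitting across a gadget. Your explicit induction on the step count for part~2 is only a cosmetic repackaging of the paper's argument, which directly takes $v$ to be the macro vertex last visited; the level of rigor in handling the temporal bookkeeping matches the paper's.
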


\begin{proof}
For the sake of contradiction, assume that there is a strict subset of agents $\agentset' \subsetneq \agentset$ that walks from a macro vertex~$v$ in $G(B)$ to a distinct macro vertex $v'$ without the other agents entering the gadget between $v$ and $v'$ at any time during the traversal. The graph $G(B)$ contains two copies of $G_1(B)$, but all vertices in the \mbox{$(k-1)$-barriers} within $G_1(B)$ have degree 3. Thus, $\agentset'$ must have traversed some $(k-1)$-barrier $B$ while only agents
in $\agentset'$ enter $B$ at any time of the traversal. This is a contradiction, as $|\agentset'|\leq k-1$ and $B$ is a $(k-1)$-barrier. 
Therefore, the agents $\agentset$ need to all enter the gadget between $v$ and $v'$ to  to get from a macro vertex $v$ to a distinct macro vertex $v'$. This shows the first claim.

For the second part of the claim, note that because of Property~2 for the barrier $B$ the agents cannot split up into two groups such that after the traversal of the gadget between $v$ and $v'$ one group is at $v$ (or one of the vertices at distance at most 4 from $v$ which are not part of the barrier $B$) and the other group is at $v'$ (or one of the vertices at distance 4 from $v'$ which are not part of the barrier $B$). This implies that if we consider the positions of the agents after an arbitrary number of steps and let $v$ be the macro vertex last visited by an agent in $\agentset$, then all agents must be located at~$v$ or one of the three surrounding gadgets.
\end{proof}

\begin{figure}
\centering
\tikzstyle{graphnode}=[circle,draw,minimum size=18pt,scale=0.8]
\tikzstyle{gadget}=[rectangle,draw,minimum height=21pt,minimum width=32pt,scale=0.8]
\newcommand*{\x}{0.9}
\begin{tikzpicture}
\node(v0) at (0,0) [graphnode]{$v$};
\node(g1) at (0:\x) [gadget]{};
\node(d1) at (0:\x)[dnode] {$B(0)$};
\node(v1) at (0:2*\x) [graphnode]{};
\node(g2) at (120:\x) [gadget,rotate around={120:(0,0)}]{};
\node(d2) at (120:\x) [dnode]{$B(1)$};
\node(v2) at (120:2*\x) [graphnode]{};
\node(g3) at (240:\x) [gadget,rotate around={240:(0,0)}]{};
\node(d3) at (240:\x) [dnode]{$B(2)$};
\node(v3) at (240:2*\x) [graphnode]{};

\draw(v0)--(g1);
\draw(v0)--(g2);
\draw(v0)--(g3);
\draw(v1)--(g1);
\draw(v2)--(g2);
\draw(v3)--(g3);

\node(v4) at (10:2.7*\x){};
\node(v5) at (-10:2.7*\x) {};
\node(v6) at (110:2.7*\x){};
\node(v7) at (130:2.7*\x){};
\node(v8) at (230:2.7*\x){};
\node(v9) at (250:2.7*\x){};

\draw[dotted] (v1)--(v4);
\draw[dotted] (v1)--(v5);
\draw[dotted] (v2)--(v6);
\draw[dotted] (v2)--(v7);
\draw[dotted] (v3)--(v8);
\draw[dotted] (v3)--(v9);
\end{tikzpicture}
\caption{A macro vertex $v$ in a graph $G(B)$ surrounded by the three gadgets $B(0),B(1)$ and $B(2)$. }
 \label{fig_macro_vertex}
\end{figure}
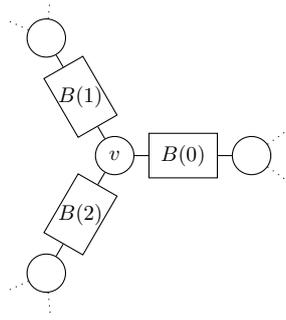

We will frequently consider the configuration of~$\agentset$ in a graph of the from $G(B)$ with respect to some macro vertex $v$. Recall from the definition that the graph $G(B)$ contains two copies of the graph $G_1(B)$ and actually there exists a macro vertex $v$ and a copy $v'$. Thus, when we talk about configurations of~$\agentset$ in $G(B)$ with respect to some macro vertex $v$, we mean that we consistently choose one of the copies $G_1(B)$ and consider the configuration of~$\agentset$ with respect to the macro vertex in this copy.

Let $B$ be a $(k-1)$-barrier for a set of $k$ cooperative $s$-state agents $\agentset=\{A_1,\ldots,A_k\}$ that all start in some macro vertex $v_0$ of $G(B)$. 
Iteratively, define $t_0=0$ and $t_i$ to be the first point in time after $t_{i-1}$, when one of the agents in $\agentset$ visits a macro vertex $v_i$ distinct from $v_{i-1}$. Then $v_i$ is a neighbor of $v_{i-1}$ in $G$ and by Lemma~\ref{lem_need_all_pebbles_to_cross_barrier}, all agents are at $v_i$ or one of the incident gadgets.
The sequence of macro vertices~$v_0, v_1, \ldots$, which is a sequence of neighboring vertices in $G$, yields a unique sequence of labels $l_0,l_1,\ldots$
of the edges between the neighboring vertices in $G$, which we call the \emph{macro traversal sequence} of~$\agentset$ starting in vertex $v_0$ in $G(B)$. Note that the macro traversal sequence may be finite. 

Consider the traversal sequence $l_0,l_1, \ldots$ of a single agent in a 3-regular graph~$G$ and the traversal sequence $l'_0,l'_1, \ldots$ of the same agent in another 3-regular graph~$G'$. If the state of the agent and label of the edge to the previous vertex 
in~$G$ after~$i$ steps is the same as the state in~$G'$ 
after~$j$~steps, then the traversal sequences coincide from that point on, i.e., $l_{i+h}=l'_{j+h}$ holds for all $h \in \N$. 
The reason is that the graphs we consider are 3-regular and the label of every edge $\{u,v\}$ is the same at $u$ and at~$v$. 
Therefore, once the state and label to the previous vertex are the same, the agent makes the same transitions as it can gain no new information while traversing the graph.
We want to obtain a similar result for a set of agents. 
However, in general it is not true that if the configuration of a set of agents in a graph $G$ after $i$ steps is the same as after $j$ steps in $G'$, then the next configurations and chosen labels of each agent coincide. 
This is because an agent can be used to mark a particular vertex and this can be used to detect differences in two 3-regular graphs~$G$ and $G'$. 
For instance, one agent could remain at a vertex $v$ while the other one walks in a loop that is only part of one of the graphs and this may lead to different configurations. 
That is why we consider graphs of the form $G(B)$. 
In these graphs, all macro vertices look the same, as they are surrounded by the same gadgets,
and the agents have to stay close together, making it impossible for the agents to detect a loop that is part of one of the graphs, but not the other. 
This intuition is formally expressed in the following technical lemma.

\begin{lemma}
\label{lemma_same_conf_same_labels}
 Let $B$ be a~$(k-1)$-barrier for a set of $k$ $s$-state agents~$\agentset$, and let $G$ and~$G'$ be two 3-regular graphs.
 Let $v_0,v_1,\ldots$ be the sequence of macro vertices visited by~$\agentset$ in~$G(B)$, let $l_0,l_1,\ldots$ be the corresponding macro traversal sequence, let $t_0=0$, and let~$t_i$ be the first time after~$t_{i-1}$ that an agent in~$\agentset$ visits~$v_i$. 
Let~$v'_0,v'_1,\ldots$ and $l'_0,l'_1,\ldots$ and $t'_i$ be defined analogously with respect to~$G'(B)$.
If there are $t \in \{t_i,\ldots, t_{i+1}-1\}$ and $t' \in \smash{\{t'_j,\ldots, t'_{j+1}-1\}}$ for some $i,\ j \in \N$, such that after $t$ steps in~$G(B)$ the configuration of~$\agentset$ with respect to~$v_i$ is the same as after $t'$ steps in~$G'(B)$ with respect to~$v'_j$,
  then:
\begin{itemize}
\item ~$l_{i+h} = l'_{j+h}$ holds for all~$h \in \N$,
\item the configuration of~$\agentset$ in $G(B)$ after $t_{i+h}$ steps with respect to~$v_{i+h}$ is the same as configuration of~$\agentset$ in $G'(B)$ after $t'_{j+h}$ steps with respect to~$v'_{j+h}$
 for all~$h \in \N$, $h>0$.
\end{itemize}  
\end{lemma}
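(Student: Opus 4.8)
The plan is to prove both conclusions simultaneously by exhibiting a step-by-step \emph{coupling} between the walk of $\agentset$ in $G(B)$ after time $t$ and the walk in $G'(B)$ after time $t'$, showing that the two walks proceed in lock-step as long as the agents' configurations agree. The structural fact that makes this possible is that all macro vertices look locally identical: in both $G(B)$ and $G'(B)$ every macro vertex is surrounded by exactly the three gadgets $B(0), B(1), B(2)$, so the ball of any fixed radius around a macro vertex in $G(B)$ is isomorphic, via a label-preserving isomorphism fixing the macro vertex, to the ball of the same radius around any macro vertex in $G'(B)$. By Lemma~\ref{lem_need_all_pebbles_to_cross_barrier}, during the interval $\{t_i, \ldots, t_{i+1}-1\}$ all agents remain at $v_i$ or in one of its three surrounding gadgets, so each agent's location lies within this common local neighborhood and its shortest traversal sequence $r_i$ from the reference macro vertex is a bounded sequence determined solely by the local structure.

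First I would prove single-step determinism: if at some time the configuration of $\agentset$ in $G(B)$ with respect to the current macro vertex equals the configuration in $G'(B)$ with respect to its current macro vertex, and neither walk reaches a new macro vertex at the next step, then after one further step the configurations still agree. This holds because the transition $\delta_i$ of each agent $\agent_i$ depends only on its own state $\agentstate_i$, the states of the agents sharing its vertex, the degree of its current vertex, and the label $l_i$ to the previously visited vertex. The matching configuration supplies the states and the labels $l_i$ directly; two agents share a vertex exactly when their relative positions $r_i$ coincide (the shortest traversal sequence from a fixed vertex is a function of the target vertex), so co-location is read off from the configuration; and the degree of each agent's vertex is fixed by the local structure, which is identical in both graphs. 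Consequently every agent selects the same port, moves identically in both graphs under the label-preserving local isomorphism, and its new state, new back-label, and new relative position agree. Since the moves chosen depend only on the agreeing configurations, the two walks select identical moves at every step and proceed in lock-step.

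Next I would handle the passage between macro vertices. A new macro vertex is reached precisely when some agent makes a move crossing from a gadget into a macro vertex; since the moves are identical in both walks, this crossing occurs at the same offset in both, so if $t_{i+1} = t + \Delta$ then also $t'_{j+1} = t' + \Delta$, and the label of the traversed macro edge is the same in both graphs, giving $l_i = l'_j$. At this moment I re-reference both configurations from $v_i,v'_j$ to the newly reached macro vertices $v_{i+1}, v'_{j+1}$. The relative positions $r_i$ change under re-referencing, but the change is a deterministic function of the crossing label $l_i = l'_j$ and the identical local structure, so the re-referenced configurations again agree. Applying the single-step argument on the next interval and iterating, an induction on $h$ yields $l_{i+h} = l'_{j+h}$ for all $h$ and agreement of the configurations at the macro arrival times $t_{i+h}$ and $t'_{j+h}$ for all $h > 0$. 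Finiteness is automatic: because the two walks are coupled move-for-move, the macro traversal sequences have the same length and agree entry-by-entry, whether finite or infinite.

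I expect the main obstacle to be the careful bookkeeping of the relative-position component $r_i$ across macro-vertex crossings: one must verify that, while the agents stay in the bounded neighborhood guaranteed by Lemma~\ref{lem_need_all_pebbles_to_cross_barrier}, the shortest traversal sequences (with lexicographic tie-breaking) are determined entirely by the local gadget structure and therefore transfer across the label-preserving isomorphism, and that the re-referencing map from $v_i$ to $v_{i+1}$ coincides with the one from $v'_j$ to $v'_{j+1}$. Once this locality of $r_i$ is established, the remainder is a routine step-by-step induction driven by the determinism of the transition functions.
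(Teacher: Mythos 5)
Your proposal is correct and follows essentially the same route as the paper: both arguments rest on the label-preserving local isomorphism between the neighborhoods of the current macro vertices (all macro vertices being surrounded by the same three gadgets), use Lemma~\ref{lem_need_all_pebbles_to_cross_barrier} to confine all agents to that common neighborhood, establish lock-step agreement of states, back-labels, and positions at every intermediate step between macro-vertex visits, and then induct on $h$ over the macro arrival times with a re-referencing of configurations at each crossing. Your additional bookkeeping of the relative-position component $r_i$ under re-referencing is the same point the paper handles by noting that the canonical isomorphism $\gamma$ respects agent positions.
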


\begin{proof}
In order to simplify the notation of the proof, we abuse notation and overwrite the definition of $t_i$ and $t'_j$ by setting $t_i:=t$, $t'_j:=t'$.
By induction on $h\in \N$, we show that the configuration of~$\agentset$ after~$t_{i+h}$ steps in $G(B)$ with respect to~$v_{i+h}$ is the same as the configuration of~$\agentset$ after~$t'_{j+h}$ steps in $G'(B)$ with respect to~$v'_{j+h}$. The induction step also shows that we have~$l_{i+h} = l'_{j+h}$ for all $h \in \N$. 

For~$h=0$ we have by assumption (and as we redefined $t_i$ and $t'_j$) that after $t_i$ steps in $G(B)$ the configuration of~$\agentset$ with respect to~$v_i$ is the same as after $t'_j$ steps in $G'(B)$ with respect to~$v'_j$.

Now, assume that the statement holds for some~$h \in \N$.
The idea of the proof is that, in between visits to macro vertices, the agents behave the same in the two graphs and, in particular, they traverse the same gadget~$B(l)$ in both settings in such that~$l_{i+h} = l'_{j+h}$.

The graphs~$G(B)$ and~$G'(B)$ locally look the same to the agents in~$v_{i+h}$ and~$v'_{j+h}$ as both macro vertices are surrounded by the same gadgets, as shown in Figure~\ref{fig_macro_vertex}. Formally, there is a canonical graph isomorphism~$\gamma$ from the induced subgraph of~$G(B)$ containing~$v_{i+h}$ and all surrounding gadgets to the induced subgraph of~$G'(B)$ containing~$v'_{j+h}$ and all surrounding gadgets. Moreover,
$\gamma$ respects the labeling and maps~$v_{i+h}$ to~$v'_{j+h}$.
As the configuration of~$\agentset$ after~$t_{i+h}$ steps
with respect to~$v_{i+h}$ is the same as the configuration of~$\agent$ after~$t'_{j+h}$ steps with respect to~$v'_{j+h}$, the
isomorphism also respects the positions of all the agents. 
As~$v_{i+h+1}$ is the first macro vertex visited after~$v_{i+h}$, all agents are
at~$v_{i+h}$ or any of the surrounding gadgets until the agents~$\agentset$ reach~$v_{i+h+1}$ by  Lemma~\ref{lem_need_all_pebbles_to_cross_barrier}. The same holds for~$v'_{j+h}$ and~$v'_{j+h+1}$. Iteratively, for~$c=0,1,\ldots $ the following holds until the agents reach the next macro vertex~$v_{i+h+1}$ or~$v'_{j+h+1}$:
\begin{enumerate}
\item For every agent $\agent \in \agentset$, the state of~$\agent$ and the edge label to the previous vertex after~$t_{i+h}+c$ steps in~$G(B)$ is the same as
the state of~$\agent$ and the edge label to the previous vertex after~$t'_{j+h}+c$ steps in~$G'(B)$.
\item The isomorphism~$\gamma$ maps the position of every agent $\agent \in \agentset$ after~$t_{i+h}+c$ steps in~$G(B)$ to the
position of~$\agent$ after~$t'_{j+h}+c$ steps in~$G'(B)$.
\end{enumerate}
This implies that macro vertices $v_{i+h}$ and~$v_{i+h+1}$  are connected with the same gadget as~$v'_{j+h}$ and~$v'_{j+h+1}$,
i.e.,~$l_{i+h} = l'_{j+h}$. Furthermore, there is~$\bar{c}$ such that~$t_{i+h+1}=t_{i+h}+\bar{c}$ and~$t'_{j+h+1}=t'_{j+h}+\bar{c}$.
Moreover, the configuration of~$\agentset$ with respect to~$v_{i+h+1}$ after~$t_{i+h+1}$ steps is the same as with respect to~$v'_{j+h+1}$ after~$t'_{j+h+1}$ steps.
\end{proof}

Let $2 \leq r \leq k$.
In order to construct an~$r$-barrier~$B'$ for a set~$\agentset$ of~$k$ cooperative~$s$-state agents given an~$(r-1)$-barrier~$B$, we need to examine the behavior of all subsets of~$r$ agents. There are~$\binom{k}{r}$ subsets of~$r$ agents and the behavior of two different subsets of~$r$ agents may be completely different. We denote these~$\smash{\binom{k}{r}}$ subsets of~$r$ agents by~$\agentset^{(r)}_{1},\ldots,  {\agentset^{(r)}_{\binom{k}{r}}}$.

Assume, we have an $(r-1)$-barrier $B$ for a set of $k$ agents~$\agentset$.
For $1 \leq j \leq \binom{k}{r}$, consider the behavior of only the subset of agents $\smash{\agentset^{(r)}_{j}}$ in a graph of the form~$G(B)$.
Let~$v_0,v_1,\ldots$ be the sequence of macro vertices,~$l_0,l_1,\ldots$ the corresponding macro label sequence, $t_0 = 0$, and~$t_i$ be the first time after~$t_{i-1}$ that an agent in~$\smash{\agentset^{(r)}_{j}}$ visits~$v_i$. 
Between steps~$t_{i-1}$ and~$t_i$ all agents are located at~$v_{i-1}$ or
one of the  surrounding gadgets~$B(0), B(1), B(2)$ by Lemma~\ref{lem_need_all_pebbles_to_cross_barrier}. 
Thus, the number of possible locations of the agents can be bounded in terms of the size of the gadgets $B(0)$, $B(1)$, and $B(2)$. In addition, every agent has at most $s$ states. Therefore the number of configurations of~$\smash{\agentset^{(r)}_{j}}$ with respect to~$v_i$ between steps~$t_{i-1}$ and~$t_i$ can also be bounded in terms of $s$ and the size of the gadgets. In particular, this bound is independent of the specific subset of agents~$\smash{\agentset^{(r)}_{j}}$.
For a sufficiently large number of steps, a configuration must repeat and, by applying Lemma~\ref{lemma_same_conf_same_labels} for $G=G'$, the macro label sequence becomes periodic.
The other crucial property that follows from Lemma~\ref{lemma_same_conf_same_labels} is that the macro label sequence is independent of the underlying 3-regular graph~$G$. As a consequence, we may denote by $\alpha_B$ the maximum over all  $j \in \bigl\{1  \ldots \binom{k}{r}\bigr\}$ of the number of steps in the macro label sequence  until~$\smash{\agentset^{(r)}_{j}}$ is twice in the same configuration in $G(B)$ with respect to two macro vertices, i.e.,
there are $a,\ b \leq \alpha_B$ such that
the configuration of $\smash{\agentset^{(r)}_{j}}$ at $t_a$ with respect to $v_a$ is the same as at $t_b$ with respect to $v_b$. Note that the value of $\alpha_B$ depends on the size of the barrier $B$ and thus also on the values of $s$ and $r$.

Given the definition of $\alpha_B$, we are now in position to present the construction of an~$r$-barrier given an~$(r-1)$-barrier. We will later bound $\alpha_B$ and, thus, the size of the $r$-barrier in Lemma~\ref{lemma_bound_configurations2}.

\begin{theorem}
\label{theo_induction_barrier}
Given an~$(r-1)$-barrier~$B$ with~$n$ vertices for a set~$\agentset$ of~$k$~agents with~$s$ states each, we can construct an~$r$-barrier~$B'$ for~$\agentset$ with the following properties:
\begin{enumerate}
\item We have $B'=H(B)$ for a suitable 3-regular graph $H$.
\item If $\{u,v\}$ and $\{u',v'\}$ are the two distinguished edges of $B'$, then any path from $u$ or $v$ to $u'$ or $v'$ contains at least 3 distinct barriers $B$.
\item The~$r$-barrier~$B'$ contains at most $\bigO(\binom{k}{r} \cdot n \cdot \alpha_B^2)$ vertices.
\end{enumerate}
\end{theorem}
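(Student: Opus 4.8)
The plan is to realize $B'$ in the form $B'=H(B)$ required by Property~1, where $H$ is a $3$-regular ``macro skeleton'' built, in analogy with the $1$-barrier of Lemma~\ref{lemma_1_barrier}, from an application of Theorem~\ref{fraigniaud_trap_q_agents}, but applied to suitable \emph{macro agents} rather than to the original agents. The whole point of replacing the edges of $H$ by gadgets containing the given $(r-1)$-barrier $B$ is that, by the argument behind Lemma~\ref{lem_need_all_pebbles_to_cross_barrier}, no proper subset of any fixed $r$-element set $\smash{\agentset^{(r)}_j}$ can cross a single gadget; hence these $r$ agents can only pass between macro vertices all together, so that the subset induces a well-defined macro traversal sequence $l_0,l_1,\dots$ on $H$. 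I would then design $H$ so that none of these induced macro walks can cross it, which is exactly what the $r$-barrier property needs.

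First I would turn each subset $\smash{\agentset^{(r)}_j}$ into a deterministic, finite-state, \emph{non-cooperative} macro agent acting on $H$. By Lemma~\ref{lemma_same_conf_same_labels} (applied with $G=G'$), as soon as the configuration of $\smash{\agentset^{(r)}_j}$ with respect to the current macro vertex repeats, the macro label sequence repeats as well; the same lemma shows that this sequence depends only on the configuration and not on the underlying graph $H$. Thus the next macro label and next macro configuration are a deterministic function of the current configuration, governed by absolute labels in $\{0,1,2\}$ exactly as in the model of Theorem~\ref{fraigniaud_trap_q_agents}, and by definition of $\alpha_B$ a repeat occurs within $\alpha_B$ steps, so at most $\bigO(\alpha_B)$ distinct configurations occur. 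Following the device of Lemma~\ref{lemma_1_barrier}, I would treat each of these $\bigO(\alpha_B)$ configurations as a possible macro \emph{starting state}, giving, over all subsets, a family of $\bigO(\binom{k}{r}\alpha_B)$ non-cooperative macro agents each with $\bigO(\alpha_B)$ states.

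Next I would feed this family into Theorem~\ref{fraigniaud_trap_q_agents} and assemble the output into a symmetric barrier precisely as in Lemma~\ref{lemma_1_barrier}: take two copies of the resulting $3$-regular graph, join them and attach diamond gadgets so that $H$ carries two distinguished edges $\{u,v\}$ and $\{u',v'\}$ and no macro agent started on one side can reach the other. Counting $\bigO(\binom{k}{r}\alpha_B)$ macro agents with $\bigO(\alpha_B)$ states each, Theorem~\ref{fraigniaud_trap_q_agents} yields $H$ with $\bigO(\binom{k}{r}\alpha_B^2)$ vertices; laying out the two copies and the diamonds so that every path between the two distinguished edges runs through at least three of the inserted barriers gives Property~2. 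Replacing each edge of $H$ by a gadget of size $\bigO(n)$ and passing to the $3$-regular extension $\regExt{\,\cdot\,}$ multiplies the vertex count only by $\bigO(n)$, so $B'=H(B)$ has $\bigO(\binom{k}{r}\,n\,\alpha_B^2)$ vertices, which is Property~3, while $B'=H(B)$ is Property~1 by construction.

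It remains to check the two $r$-barrier properties for $B'$. Property~1 is the heart of the matter: a subset of at most $r-1$ agents is blocked by the very first gadget it tries to cross (an $(r-1)$-barrier), whereas a subset of exactly $r$ agents is forced to move as a single macro agent, which cannot cross $H$ by the choice above. The main obstacle I anticipate is Property~2, the claim that a group of exactly $r+1$ agents cannot split and leave through different distinguished edges. Here I would exploit that crossing any gadget requires at least $r$ agents moving together, so at most one of the $r+1$ agents can ever be detached from the moving block, together with the three-barrier guarantee of Property~2: shuttling a block of $r$ agents past a barrier and returning for the straggler can never ferry the whole group across three successive barriers, so the agents cannot establish a presence on both sides. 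Making this ``at most one straggler across three barriers'' argument precise, and rigorously justifying the deterministic-macro-agent reduction, are the two steps I expect to demand the most care; the size estimate and Property~1 are then routine, given $\alpha_B$, whose value is bounded separately in Lemma~\ref{lemma_bound_configurations2}.
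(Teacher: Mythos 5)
Your proposal follows essentially the same route as the paper: use Lemma~\ref{lemma_same_conf_same_labels} and the definition of $\alpha_B$ to collapse each $r$-subset $\agentset^{(r)}_j$ into a single deterministic macro agent with $\bigO(\alpha_B)$ states acting on absolute labels, build a $1$-barrier-style graph $H$ that none of these macro agents can cross from any starting configuration, and set $B'=H(B)$. The only structural difference is that you apply the Fraigniaud et al.\ construction once to the pooled family of $\bigO(\binom{k}{r}\alpha_B)$ macro agents, whereas the paper invokes Lemma~\ref{lemma_1_barrier} separately for each $\bar{A}_j$ and chains the resulting $1$-barriers $H_1,\dots,H_{\binom{k}{r}}$; both yield the $\bigO(\binom{k}{r}\, n\, \alpha_B^2)$ bound, and your flagged ``at most one straggler across three barriers'' step for the second $r$-barrier property is precisely what the paper formalizes (the split must be $r$ versus $1$, so the $r$-block escorting the lone agent to the far side would afterwards have to traverse the entire barrier back, contradicting the first property).
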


\begin{proof} 
For~$j \in \bigl\{1,2,\ldots, \smash{\binom{k}{r}}\bigr\}$, consider a subset of~$r$ agents
$\smash{\agentset^{(r)}_j}$ starting at a vertex~$v_0$ in a graph~$G(B)$. 
Let~$t_0=0$ and for~$i=1,2,\dots$ iteratively define~$t_i$ to be the first point in time after~$t_{i-1}$, when an agent in~$\smash{\agentset^{(r)}_j}$ visits a macro vertex~$v_i$ distinct from~$v_{i-1}$. 
Then~$v_0,v_1,\ldots$ is the macro label sequence of~$\smash{\agentset^{(r)}_j}$ in~$G(B)$ with a corresponding macro label sequence~$l_0,l_1,\ldots$. 
After at most~$\alpha_B$ steps, the agents in~$\smash{\agentset^{(r)}_j}$ are twice in the same configuration with respect to two macro vertices, i.e., there are~$a,b \in \N$ with~$a< b \leq \alpha_B$ such that after~$t_a$ steps the configuration of~$\smash{\agentset^{(r)}_j}$ with respect to~$v_a$ is the same as after~$t_b$ steps with respect to~$v_b$. 
Note that~$\alpha_B$ is a bound on the maximum possible number of steps until the configuration repeats and therefore independent of the specific subset of agents~$\smash{\agentset^{(r)}_j}$. 
The possible configurations of~$\agentset$ at times $t_0, t_1, \ldots$ can hence be enumerated~$x_1,\ldots, x_{\alpha_B}$.

By Lemma~\ref{lemma_same_conf_same_labels}, the configuration of the set of agents~$\smash{\agent^{(r)}_j}$ uniquely determines the next label in the macro label sequence of~$\agent^{(r)}_j$, independently of the underlying graph~$G$.
We can therefore define a single agent~$\smash{\bar{A}_j}$
whose state corresponds to the configuration of the set of agents~$\smash{\agent^{(r)}_j}$ and whose label sequence is the macro label sequence of~$\smash{\agent^{(r)}_j}$. More precisely we define~$\smash{\bar{A}_j}$ as follows:
The set of states of~$\smash{\bar{A}_j}$ is~$\{\agentstate_1,\ldots,\agentstate_{\alpha_B}\}$. Moreover, in state~$\agentstate_h$ the agent~$\smash{\bar{A}_j}$ traverses the edge labeled~$l$ and transitions to~$\agentstate_{h'}$ if the set of agents~$\agentset^{(r)}_j$ in configuration~$x_{h}$ at a time~$t_i$ will traverse the gadget~$B(l)$ to the next vertex $v_{i+1}$ in the macro vertex sequence where it arrives in configuration~$x_{h'}$ at time~$t_{i+1}$ (this means that~$l=l_i$ is the next label in the macro label sequence of~$\smash{\agentset^{(r)}_j}$ in configuration~$x_h$). 
The starting state of~$\smash{\bar{A}_j}$ corresponds to the 
configuration, where all the agents in~$\smash{\agentset^{(r)}_j}$ are in their starting states and located at the same vertex. 
Note that the transition function~$\smash{\bar{\delta}}$ of $\smash{\bar{A}_j}$ described above is well-defined because, by Lemma~\ref{lemma_same_conf_same_labels}, the next label $l_i$ in the macro label sequence of~$\smash{\agentset^{(r)}_j}$ only depends on the configuration of~$\smash{\agentset^{(r)}_j}$ at $t_i$ and is independent of the underlying graph~$G$. 
By construction, the macro traversal sequence of~$\smash{\agentset^{(r)}_j}$ in~$G(B)$ is exactly the same as the traversal sequence of~$\bar{A}_j$ in~$G$, independently of the graph~$G$.
Applying Lemma~\ref{lemma_1_barrier} for the single agent~$\bar{A}_j$, we obtain a~$1$-barrier~$H_j$ with~$\bigO(\alpha_B^2)$ vertices that cannot be traversed by~$\bar{A}_j$, irrespective of its starting state.

\begin{figure}
\centering
\tikzstyle{graphnode}=[circle,draw,minimum size=16pt,scale=0.7]
\tikzstyle{dnode}=[scale=0.7]
\tikzstyle{dnode2}=[scale=0.9]
\newcommand*{\diffX}{0.35}
\newcommand*{\diffY}{0.55}
\newcommand*{\xOff}{1.8}
\begin{tikzpicture}
\node(spacingnode) at (0,2.5*\diffY) {};  
\node(g1) at (0,0) [ellipse,draw,minimum width=45pt,minimum height=57pt] {};
\node(1) at (0,0) [dnode2]{$H_1$};
\node (v1) at (-\diffX,\diffY) [graphnode] {};
\node (v2) at (-\diffX,-\diffY) [graphnode] {};
\node (v3) at (\diffX,\diffY) [graphnode] {};
\node (v4) at (\diffX,-\diffY) [graphnode] {};

\node (a1) at (-\diffX-0.4*\xOff,0) [graphnode] {};
\node (d0) at (-\diffX-0.4*\xOff,0)[dnode] {$v$};
\node (a2) at (-\diffX-0.65*\xOff,-\diffY) [graphnode] {};
\node (a3) at (-\diffX-0.65*\xOff,\diffY) [graphnode] {};
\node (a4) at (-\diffX-0.9*\xOff,0) [graphnode] {$u$};

\draw(a1) --node [dnode,above] {$0$}(a4);
\draw(a2) --node [dnode,below] {$0$}(v2);
\draw(a3) --node [dnode,above] {$0$}(v1);
\draw(a1) --node [dnode,above] {}(a2);
\draw(a1) --node [dnode,above] {}(a3);
\draw(a4) --node [dnode,above] {}(a2);
\draw(a4) --node [dnode,above] {}(a3);

\node(gg1) at (\xOff,0) [ellipse,draw,minimum width=45pt,minimum height=57pt] {};
\node(1) at (\xOff,0) [dnode2]{$H_2$};
\node (vv1) at (-\diffX+\xOff,\diffY) [graphnode] {};
\node (vv2) at (-\diffX+\xOff,-\diffY) [graphnode] {};
\node (vv3) at (\diffX+\xOff,\diffY) [graphnode] {};
\node (vv4) at (\diffX+\xOff,-\diffY) [graphnode] {};

\node(ggg1) at (2.2*\xOff,0) [ellipse,draw,minimum width=45pt,minimum height=57pt] {};
\node(1) at (2.2*\xOff,0) [dnode2]{$H_{\binom{k}{r}}$};
\node (vvv1) at (-\diffX+2.2*\xOff,\diffY) [graphnode] {};
\node (vvv2) at (-\diffX+2.2*\xOff,-\diffY) [graphnode] {};
\node (vvv3) at (\diffX+2.2*\xOff,\diffY) [graphnode] {};
\node (vvv4) at (\diffX+2.2*\xOff,-\diffY) [graphnode] {};

\node (b1) at (\diffX+2.6*\xOff,0) [graphnode] {};
\node (lb1) at (\diffX+2.6*\xOff,0) [dnode] {$u'$};
\node (b2) at (\diffX+2.85*\xOff,-\diffY) [graphnode] {};
\node (b3) at (\diffX+2.85*\xOff,\diffY) [graphnode] {};
\node (b4) at (\diffX+3.1*\xOff,0) [graphnode] {};
\node (d0) at (\diffX+3.1*\xOff,0) [dnode] {$v'$};

\draw(b1) --node [dnode,above] {$0$}(b4);
\draw(b2) --node [dnode,below] {$0$}(vvv4);
\draw(b3) --node [dnode,above] {$0$}(vvv3);
\draw(b1) --node [dnode,above] {}(b2);
\draw(b1) --node [dnode,above] {}(b3);
\draw(b4) --node [dnode,above] {}(b2);
\draw(b4) --node [dnode,above] {}(b3);

\node(u1) at (1.6*\xOff,\diffY){};
\node(u2) at (1.6*\xOff,-\diffY){};

\draw (v3)--node [dnode,above] {$0$}(vv1);
\draw (v4)--node [dnode,below] {$0$}(vv2);

\draw[dotted] (vv3)--node [dnode,above] {$0$}(u1);
\draw[dotted] (vv4)--node [dnode,below] {$0$}(u2);
\draw[dotted] (u1)--node [dnode,above] {$0$}(vvv1);
\draw[dotted] (u2)--node [dnode,below] {$0$}(vvv2);

\end{tikzpicture}
\caption{Connecting the graphs~$H_1,H_2,\ldots, \smash{H_{\binom{k}{r}}}$ to a graph~$H$, yields the~$r$-barrier~$H(B)$.}
\label{fig_row_of_barriers}
\end{figure}
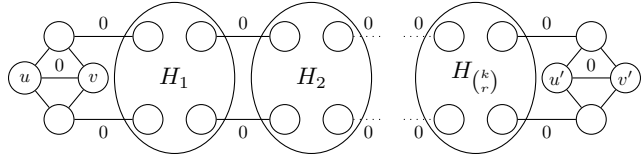

We now connect the graphs~$\smash{H_1,\ldots, H_{\binom{k}{r}}}$ as shown in Figure~\ref{fig_row_of_barriers}, and we let~$H$ denote the resulting graph.
We first show that the graph~$B':=H(B)$ is an~$r$-barrier for~$\agentset$ and and then show the three additional properties in the claim.

For the first property of an $r$-barrier, assume, for the sake of contradiction, that there is a subset of~$r$~agents~$\agentset^{(r)}_j$ and some graph~$G$ connected to~$H(B)$ such that the agents~$\agentset^{(r)}_j$ can traverse~$H(B)$ from~$u$ to~$u'$. Then there must be a consecutive subsequence~$w_0,w_1,\ldots, w_h$  of the macro vertex sequence of ~$\agentset^{(r)}_j$ during the traversal of~$H(B)$ with the following properties: The vertices~$w_1,\ldots, w_{h-1}$ are contained in~$H_j(B)$, $w_0$ and $w_{h}$ are not contained in $H_j(B)$, $w_1$ and $w_{h-1}$ (as vertices in the 1-barrier $H_j$) are incident to different distinguished edges (i.e., $\{u,v\}$ or $\{u',v'\}$ in Fig.~\ref{fig_0_barrier_construction}) of the 1-barrier $H_j$.
Thus, the set of agents~$\agentset^{(r)}_j$ starting in~$w_0$ in a suitable configuration~$x_i$ traverses the graph~$H_j(B)$ from $w_1$ to $w_{h-1}$. This means that for a suitable graph~$G'$ and starting state~$\agentstate_i$ the agent~$\bar{A}_j$ can traverse~$H_j$. But this is a contradiction as we constructed~$H_j$ as a 1-barrier for~$\bar{A}_j$ using Lemma~\ref{lemma_1_barrier} and the 1-barrier~$H_j$ is independent of the starting state of $\bar{A}_j$.

For the second property of an $r$-barrier, let $\agentset' \subseteq \agentset$ be a set of agents with $|\agentset'|=r+1$. 
Assume, for the sake of contradiction, that there is some graph~$G$ connected to~$H(B)$ such that after the agents of~$\agentset'$ (and no other agents) enter $H(B)$ a subset $\emptyset\neq \agentset'_1 \subsetneq \agentset'$ leaves $H(B)$ via $u$ or $v$ and the other agents $\agentset'_2:=\agentset' \setminus   \agentset'_1$ via $u'$ or $v'$. 
Since $B$ is an $(r-1)$-barrier, no set of at most $r-1$ agents can get from a macro vertex to a distinct macro vertex in~$H(B)$. 
Thus, we must have  $|\agentset'_1|\geq r$ or  $|\agentset'_2|\geq r$. 
Without loss of generality, we assume that the first case occurs, which implies $|\agentset'_1 |=r$ and $|\agentset'_2 |=1$. 
For the single agent in~$\agentset'_2$ to leave $H(B)$ via $u'$ or $v'$ at least $r-1$ agents from $\agentset'_1$ must be in a gadget adjacent to $u'$ or $v'$. 
But all these $r-1$ agents afterwards leave $H(B)$ via $u$ or $v$ and they need the remaining agent in $\agentset'_1$ to even get to a distinct macro vertex. 
But then the set of $r$ agents $\agentset'_1$ traverses the subgraphs $H_j(B)$ for all $j \in \{1, \ldots, \binom{k}{r} \}$, which again leads to a contradiction as in the proof for the first property (for $j$ such that $\agentset'_1 = \smash{\agentset_j^{(r)}}$). 

Finally, we obviously have $B'=H(B)$ for a 3-regular graph $H$ by construction and
the second additional property follows from the fact
that any path from $u$ or $v$ to $u'$ or $v'$ in $H$ has length at least 3.
Further, each~$H_j$ contains~$\bigO(\alpha_B^2)$ vertices and therefore~$H$ has at most~$\bigO(\binom{k}{r} \cdot \alpha_B^2)$ vertices. As~$B$ has~$n$ vertices, the number of
vertices of~$B'=H(B)$ is at most~$\smash{\bigO(\binom{k}{r} \cdot  n \cdot \alpha_B^2)}$, where we use that~$H$ is 3-regular and therefore the number of edges of $H$ that are replaced by a copy of $B$ is~$3/2$ times the number of its vertices.
\end{proof}

We now fix a set of~$k$ agents $\agentset$ with $s$ states each and let $B_1$ be the \mbox{$1$-barrier} given by Lemma~\ref{lemma_1_barrier} and~$B_r$ for $1<r\leq k$ be the $r$-barrier constructed recursively using Theorem~\ref{theo_induction_barrier}. 
Moreover, we let~$n_r$ be the number of vertices of $B_r$ and~$\alpha_r:=\alpha_{B_{r-1}}$ be the maximum number of steps in the macro label sequence that a set of $r$ agents from $\agentset$ can execute in a graph of the form  $G(B_{r-1})$ until their configuration repeats.

We want to bound the number of vertices $n_k$ of $B_k$ and thus, according to Lemma~\ref{lem_barrier_to_trap}, also the number of vertices of the trap for $\agentset$. By Theorem~\ref{theo_induction_barrier}, there is a constant $c\in \N$ such that $n_r  \leq  c\smash{\binom{k}{r} n_{r-1}  \alpha_r^2}$. 
In order to bound~$n_r$, we therefore need to bound~$\alpha_r$.

One possible way to obtain an upper bound on $\alpha_r$ is to use Lemma~\ref{lem_need_all_pebbles_to_cross_barrier} stating that there always is a macro vertex $v$ such that all agents are located at $v$ or inside one of the surrounding gadgets. 
Counting the number of possible positions within these three gadgets and states of the agents then gives an upper bound on $\alpha_r$.
For the tight bound in our main result, however, we need a more careful analysis of the recursive structure of our construction and also need to consider the configurations of the agents at specific times. 
We start with the following definition and a technical lemma.

\begin{figure}
\centering
\tikzstyle{graphnode}=[circle,draw,minimum size=16pt,scale=0.7]
\tikzstyle{graphnode2}=[circle,draw,minimum size=8pt,scale=0.7]
\usetikzlibrary{shapes}
\tikzstyle{dnode}=[scale=0.9]
\newcommand*{\offX}{7.5}
\newcommand*{\diffX}{0.3}
\newcommand*{\diffY}{0.4}
\newcommand*{\offA}{2}
\newcommand*{\offB}{0.7}
\newcommand*{\shrink}{0.7}
\begin{tikzpicture}[auto=left]
\node(b1) at (0.5*\offX,0) [ellipse,draw,minimum width=130pt,minimum height=90pt] {};
\node(d1) at  (0.5*\offX,-1) [dnode]{$B_{r-1}$};
\node (w1) at (3*\diffX,\diffY) [graphnode] {};
\node (w3) at (3*\diffX,-\diffY) [graphnode] {};
\node (w4) at (4*\diffX,0) [graphnode] {};
\node (w2) at (2*\diffX,0) [graphnode] {};
\node (v1) at (6.5*\diffX,\diffY) [graphnode] {};
\node (v2) at (6.5*\diffX,-\diffY) [graphnode] {};

\draw (v1)-- (w1);
\draw (v2)--(w3);
\draw (w2)--(w1);
\draw (w3)-- (w2);
\draw (w4)--(w3);
\draw (w1)--(w4);
\draw[dotted](v1)--(6.5*\diffX+0.5,\diffY-0.25);
\draw[dotted](v2)--(6.5*\diffX+0.5,-\diffY+0.25);
\draw[dotted](v2)--(6.5*\diffX+0.5,-\diffY-0.25);

\node (w11) at (\shrink*3*\diffX+\offA,\shrink*\diffY+\offB) [graphnode2] {};
\node (w12) at (\shrink*3*\diffX+\offA,\shrink*-\diffY+\offB) [graphnode2] {};
\node (w13) at (\shrink*4*\diffX+\offA,0+\offB) [graphnode2] {};
\node (w14) at (\shrink*2*\diffX+\offA,0+\offB) [graphnode2] {};
\node (w15) at (\shrink*6*\diffX+\offA+0.1,\shrink*\diffY+\offB) [graphnode2] {};
\node (w16) at (\shrink*6*\diffX+\offA+0.1,\shrink*-\diffY+\offB) [graphnode2] {};
\node(b2) at (\offA+1.83,\offB) [ellipse,draw,minimum width=45pt,minimum height=35pt] {};
\node(d1) at   (\offA+1.8,\offB)[dnode]{$B_{r-2}$};

\draw (w14)-- (v1);
\draw (w14)-- (w11);
\draw (w14)-- (w12);
\draw (w12)-- (w13);
\draw (w11)-- (w13);
\draw (w11)-- (w15);
\draw (w12)-- (w16);

\node (u1) at (-3*\diffX+\offX,\diffY) [graphnode] {};
\node (u3) at (-3*\diffX+\offX,-\diffY) [graphnode] {};
\node (u4) at (-2*\diffX+\offX,0) [graphnode] {};
\node (u2) at (-4*\diffX+\offX,0) [graphnode] {};
\node (v3) at (-6.5*\diffX+\offX,\diffY) [graphnode] {};
\node (v4) at (-6.5*\diffX+\offX,-\diffY) [graphnode] {};

\draw[dotted](v3)--(-6.5*\diffX+\offX-0.5,\diffY+0.25) ;
\draw[dotted](v3)--(-6.5*\diffX+\offX-0.5,\diffY-0.25) ;
\draw[dotted](v4)--(-6.5*\diffX+\offX-0.5,-\diffY+0.25);
\draw[dotted](v4)--(-6.5*\diffX+\offX-0.5,-\diffY-0.25);

\draw (v3)--node [dnode,above] {} (u1);
\draw (v4)--node [dnode,below] {} (u3);
\draw (u1)--node [dnode] {} (u2);
\draw (u2)--node [dnode] {} (u3);
\draw (u3)--node [dnode] {} (u4);
\draw (u4)--node [dnode] {} (u1);

\node (v0) at (0,0)[graphnode] {$v$};
\node (u0) at (\offX,0)[scale=0.7] {};
\node (u0) at (\offX,0)[graphnode] {};
\draw (w2)--node [dnode,above] {}(v0);
\draw (u4)--node [dnode] {}(u0);
\end{tikzpicture}
\caption{Recursive structure of $B(l)$ containing $i$-barriers for $i \in \{1,\ldots, r-1\}$.}
\label{fig_barrier_recursive}
\end{figure}
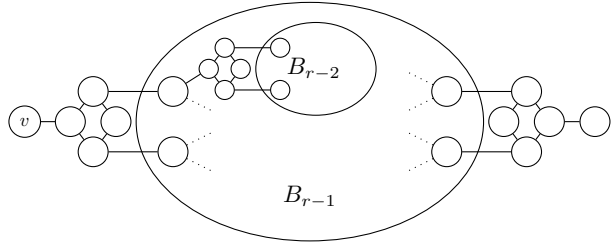

For $j \in \{1,\dots, r-1\}$, we say that a vertex~$w'$ is \emph{$j$-adjacent} to some other vertex~$w$ if there is a path $P$ from~$w$ to~$w'$ that does not traverse a~$j$-barrier~$B_j$, i.e., $P$ does not contain a subpath leading from one vertex of the distinguished edge $\{u,v\}$ to a vertex of the other distinguished edge $\{u',v'\}$ in $B_j$. 
As a convention, every vertex~$w$ is~$j$-adjacent to itself for all $j \in \{1,\dots,r-1\}$. 
Note that a vertex~$w'$ contained inside a~$j$-barrier may be~$j$-adjacent to some vertex $w$ outside the barrier if there is a path from~$w$ to~$w'$ that does not traverse a distinct~$j$-barrier.

\begin{lemma}
\label{lemma-bound-i-adjacent-vertices}
Let $v$ be a macro vertex in $G(B_{r-1})$. Then for $j \in \{1,\ldots, r-1\}$ the number of vertices that are  $j$-adjacent to $v$ is bounded by $2^{4(r-j)} n_j$.
\end{lemma}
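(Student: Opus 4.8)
The plan is to exploit the nested structure of the recursively built barriers and to run an induction over the barrier levels $j, j+1,\ldots, r-1$. The starting point is a \emph{nesting property}: for every $i\ge 2$, crossing a barrier $B_i=H_i(B_{i-1})$ (i.e.\ following a path from a vertex of one distinguished edge to a vertex of the other) forces a full crossing of some copy of $B_{i-1}$. This is exactly what Property~2 of Theorem~\ref{theo_induction_barrier} provides, since any such path meets at least three distinct copies of $B_{i-1}$ and, in order to connect the two distinguished edges, must traverse one of them from side to side. Iterating, crossing any $B_i$ requires crossing a copy of $B_j$ for every $j\le i$. Hence a path from $v$ witnessing that some vertex is $j$-adjacent may cross \emph{no} barrier $B_i$ with $i\ge j$ at all.

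First I would turn this into a confinement statement. In $G(B_{r-1})$ the macro vertex $v$ has degree $3$ and is surrounded by the three gadgets $B(0),B(1),B(2)$ of Figure~\ref{fig_macro_vertex}, each containing a single copy of $B_{r-1}$. Since the gadgets are the only connections between macro vertices and each contains a barrier, a path that never crosses a $B_j$ can leave the immediate neighborhood of $v$ only by entering one of these three copies of $B_{r-1}$ at a boundary vertex; inside such a copy $B_{r-1}=H_{r-1}(B_{r-2})$ the same reasoning confines it to the neighborhood of a single macro vertex of $H_{r-1}$, hence to at most three copies of $B_{r-2}$, and so on down the hierarchy until level $j$ is reached. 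The $3$-regular extension only glues degree-$2$ vertices inside gadgets to their copies and creates no new macro-vertex-to-macro-vertex shortcut, so it changes the branching by at most a constant factor.

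Accordingly I would let $N(i)$ be the maximum, over boundary vertices $w$ of a copy of $B_i$, of the number of vertices inside that copy that are $j$-adjacent to $w$ without leaving it, for $j\le i\le r-1$. The confinement argument yields the recursion $N(i)\le c_0+3\,N(i-1)$ for $i>j$, where $c_0$ is the constant number of auxiliary vertices of a macro vertex together with its three incident gadgets, and the base case $N(j)\le n_j$, since inside a single $B_j$ one cannot cross it and so reaches at most all of its $n_j$ vertices. The same branching bounds the top level by $a_j\le c_0+3\,N(r-1)$, where $a_j$ denotes the number of vertices $j$-adjacent to $v$. Unrolling the recursion gives
\begin{align*}
a_j \le c_0 + 3\,N(r-1) \le 3^{\,r-j} n_j + O\bigl(3^{\,r-j}\bigr) \le 3^{\,r-j}\bigl(n_j + O(1)\bigr) \le 2^{4(r-j)} n_j ,
\end{align*}
where the last step uses $n_j\ge n_1=\Omega(k s^2)\ge c_0$ together with $2\cdot 3^{\,r-j}\le 16^{\,r-j}$ for $r-j\ge 1$, i.e.\ the generous slack between the true branching base $3$ and the claimed base $2^4=16$.

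The main obstacle is the confinement step: one must argue rigorously that not crossing any $B_j$ genuinely traps a path within the neighborhood of one macro vertex at every level. This combines the nesting property above (crossing a high-level barrier forces crossing a low-level one) with the structural fact that gadgets are the only inter-macro-vertex connections, and it is the place where the precise reading of Property~2 of Theorem~\ref{theo_induction_barrier} must be pinned down. The secondary care is to ensure the $3$-regular extension does not inflate the constant branching factor beyond what the slack in $2^{4(r-j)}$ can absorb; once the per-level factor is kept comfortably below $2^4$, the remaining computation is a routine unrolling of the recursion.
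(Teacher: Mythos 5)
Your proposal is correct and follows essentially the same route as the paper: both arguments use the nesting of the recursive barrier construction to confine any path witnessing $j$-adjacency to a bounded number of barrier copies per level between $r-1$ and $j$, with each reachable copy of $B_j$ contributing at most $n_j$ vertices. The paper does the bookkeeping via a distance bound of $3(r-j)+1$ together with $3$-regularity rather than your explicit branching recursion, but both rely on the same generous slack in the base $2^4$ to absorb loose constants (your per-level branching factor is really closer to $12$ than $3$ once the diamond ends of the gadgets and the $3$-regular extension are counted, but this still fits).
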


\begin{proof}
In order to bound the number of~$j$-adjacent vertices, we examine the recursive structure of one of the gadgets~$B(l)$ incident to~$v$, as shown in Figure~\ref{fig_barrier_recursive}. By Theorem~\ref{theo_induction_barrier} 
an $(r-1)$-barrier $B'$ for $r\geq 3$ is constructed from a 3-regular graph $H$ and an $(r-2)$-barrier $B$ such that $B'=H(B)$. Hence, the gadget $B(l)$, which contains the barrier~$B_{r-1}$, also contains many copies of the barrier~$B_{r-2}$, which again
contain many copies of the barrier~$B_{r-3}$ (if $r\geq 4$) and so on. 

We first observe that the distance from~$v$ to any~$j$-adjacent vertex, which is not contained in a barrier~$B_{j}$, is at most~$3 (r-j)+1$. This observation is clear for~$j=r-1$ and follows for~$r-2, r-3, \ldots$ by examining the recursive structure given in Figure~\ref{fig_barrier_recursive}. As~$G(B_{r-1})$ is 3-regular, there are at most~$2^{3(r-j)+1}$ such vertices. Moreover, any~$j$-barrier~$B_{j}$ containing vertices that are~$j$-adjacent to~$v$,
in particular contains a vertex with a distance of exactly~$3(r-j)$ to~$v$. 
As~$G(B_{r-1})$ is 3-regular, there are at most~$2^{3(r-j)}$ vertices of distance exactly~$3(r-j)$ from~$v$ and therefore at most~$2^{3(r-j)}$ different~$j$-barriers, with~$n_j$ vertices each, containing~$j$-adjacent vertices.
Thus, there are at most~$2^{3(r-j)} n_{j}$ vertices that are~$j$-adjacent to~$v$
and contained in a barrier~$B_{j}$. Overall, the number of~$j$-adjacent vertices to~$v$ can therefore be bounded by
\begin{align*}
2^{3(r-j)} n_{j} + 2^{3(r-j)+1} \leq 2^{4(r-j)} n_j,
\end{align*}
where we used~$n_j\geq 2$ and~$j \leq r-1$.
\end{proof} 
 
The idea now is to consider the configuration of the agents with respect to a macro vertex $v_i$ exactly at the time $t$ when at least $\ceil{r/2}+1$ agents are $\ceil{r/2}$-adjacent to $v_i$. We then further use the fact that it is not possible to partition the agents $\agentset$ into two groups $\agentset'$ and $\agentset''$ with at most $i \geq \ceil{r/2}$ agents each that are separated on any path by at least two $i$-barriers. This yields the following bound on $\alpha_r$.

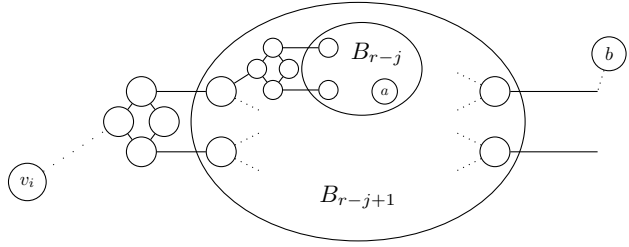
\begin{figure}
\centering

\tikzstyle{graphnode}=[circle,draw,minimum size=16pt,scale=0.7]
\tikzstyle{graphnode2}=[circle,draw,minimum size=8pt,scale=0.7]
\usetikzlibrary{shapes}
\tikzstyle{dnode}=[scale=0.9]
\newcommand*{\offX}{7.5}
\newcommand*{\diffX}{0.3}
\newcommand*{\diffY}{0.4}
\newcommand*{\offA}{2}
\newcommand*{\offB}{0.7}
\newcommand*{\shrink}{0.7}
\begin{tikzpicture}[auto=left]
\node(b1) at (0.5*\offX,0) [ellipse,draw,minimum width=125pt,minimum height=90pt] {};
\node(d1) at  (0.5*\offX,-1) [dnode]{$B_{r-j+1}$};
\node (w1) at (3*\diffX,\diffY) [graphnode] {};
\node (w3) at (3*\diffX,-\diffY) [graphnode] {};
\node (w4) at (4*\diffX,0) [graphnode] {};
\node (w2) at (2*\diffX,0) [graphnode] {};
\node (v1) at (6.5*\diffX,\diffY) [graphnode] {};
\node (v2) at (6.5*\diffX,-\diffY) [graphnode] {};

\draw (v1)-- (w1);
\draw (v2)--(w3);
\draw (w2)--(w1);
\draw (w3)-- (w2);
\draw (w4)--(w3);
\draw (w1)--(w4);
\draw[dotted](v1)--(6.5*\diffX+0.5,\diffY-0.25);
\draw[dotted](v2)--(6.5*\diffX+0.5,-\diffY+0.25);
\draw[dotted](v2)--(6.5*\diffX+0.5,-\diffY-0.25);

\node (w11) at (\shrink*3*\diffX+\offA,\shrink*\diffY+\offB) [graphnode2] {};
\node (w12) at (\shrink*3*\diffX+\offA,\shrink*-\diffY+\offB) [graphnode2] {};
\node (w13) at (\shrink*4*\diffX+\offA,0+\offB) [graphnode2] {};
\node (w14) at (\shrink*2*\diffX+\offA,0+\offB) [graphnode2] {};
\node (w15) at (\shrink*6*\diffX+\offA+0.1,\shrink*\diffY+\offB) [graphnode2] {};
\node (w16) at (\shrink*6*\diffX+\offA+0.1,\shrink*-\diffY+\offB) [graphnode2] {};
\node(b2) at (\offA+1.8,\offB) [ellipse,draw,minimum width=45pt,minimum height=35pt] {};
\node(d1) at   (\offA+2,\offB+0.2)[dnode]{$B_{r-j}$};

\draw (w14)-- (v1);
\draw (w14)-- (w11);
\draw (w14)-- (w12);
\draw (w12)-- (w13);
\draw (w11)-- (w13);
\draw (w11)-- (w15);
\draw (w12)-- (w16);

\node (u) at (\offA+2.1,\offB-0.3)[graphnode,scale=0.8] {$a$};
\node (v) at (-1.5*\diffX+\offX,\diffY+0.5)[graphnode] {$b$};
\node (v0) at (-2*\diffX,-2*\diffY)[graphnode] {$v_i$};

\node (v3) at (-6.5*\diffX+\offX,\diffY) [graphnode] {};
\node (v4) at (-6.5*\diffX+\offX,-\diffY) [graphnode] {};

\draw[dotted](v3)--(-6.5*\diffX+\offX-0.5,\diffY+0.25) ;
\draw[dotted](v3)--(-6.5*\diffX+\offX-0.5,\diffY-0.25) ;
\draw[dotted](v4)--(-6.5*\diffX+\offX-0.5,-\diffY+0.25);
\draw[dotted](v4)--(-6.5*\diffX+\offX-0.5,-\diffY-0.25);

\draw (v3)-- (-2.0*\diffX+\offX,\diffY);
\draw (v4)-- (-2.0*\diffX+\offX,-\diffY);
\draw[dotted] (v) -- (-2.0*\diffX+\offX,\diffY);

\draw[loosely dotted](v0) -- (w2);
\end{tikzpicture}
\caption{An $(r-j+1)$-barrier adjacent to $v_i$ containing $(r-j)$-barriers.}
\label{fig-barriers-separating-agents}
\end{figure}

\begin{lemma}
\label{lemma_bound_configurations2}
Let~$\agentset$ be a set of~$k$ agents, $s\geq 2$ and $r \in \{2,\dots,k\}$. 
We then have
 $$\alpha_r \leq s^{7 r^2} \cdot n_{\ceil{r/2}}^{\ceil{r/2}} \cdot n_{r-1} \cdot \prod_{j=\ceil{r/2}+1}^{r-1} n_j.$$
\end{lemma}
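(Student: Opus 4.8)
The plan is to bound $\alpha_r$ by the number of distinct \emph{configurations} that a fixed subset $\smash{\agentset^{(r)}_j}$ of $r$ agents can attain at suitably chosen times, since by Lemma~\ref{lemma_same_conf_same_labels} (applied with $G=G'$) any repetition of a configuration with respect to two macro vertices forces the macro label sequence to become periodic. Concretely, I would first argue that each macro interval $[t_i,t_{i+1})$ contains a \emph{good time}: a step at which at least $\ceil{r/2}+1$ of the agents are $\ceil{r/2}$-adjacent to $v_i$. This holds because advancing from one macro vertex to the next requires crossing the $(r-1)$-barrier inside the connecting gadget, which by the recursive construction (Theorem~\ref{theo_induction_barrier}) contains nested $\ceil{r/2}$-barriers; crossing a $\ceil{r/2}$-barrier needs more than $\ceil{r/2}$ agents (Lemma~\ref{lem_need_all_pebbles_to_cross_barrier}), so at that moment at least $\ceil{r/2}+1$ agents are clustered within the $\ceil{r/2}$-adjacent region of the current macro vertex. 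If there are at most $N$ distinct good-time configurations and every macro interval contains a good time, then after $N+1$ macro steps two good-time configurations coincide, forcing periodicity by Lemma~\ref{lemma_same_conf_same_labels}; hence $\alpha_r\le N$, and it remains to bound $N$.

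A configuration consists of the $r$ states (at most $s^r$ possibilities), the $r$ back-labels (at most $3^r$), and the $r$ positions, so the work lies in counting the position tuples at a good time. The key structural input is a refined \emph{no-splitting} property: for every $i$ with $\ceil{r/2}\le i\le r-1$, the $r$ agents can never be partitioned into two groups of size at most $i$ each that are separated by two $i$-barriers, since reaching such a separated position would require some group of at most $i$ agents to cross an $i$-barrier, contradicting the barrier property. I would use this to control how far the \emph{non-central} agents (those not $\ceil{r/2}$-adjacent to $v_i$, of which there are at most $\floor{r/2}-1$) can stray. Sorting the non-central agents $A_{(1)},\dots,A_{(m)}$ by increasing distance from $v_i$, the group of the $g=m-p+1$ farthest ones cannot be separated from the remaining $r-g\ge\ceil{r/2}+1$ agents by two $(r-g)$-barriers; hence $A_{(p)}$ lies within a bounded number of $(r-g)$-barriers of $v_i$, confining its position to a region of size $2^{\bigO(r)}\,n_{r-g}$ by Lemma~\ref{lemma-bound-i-adjacent-vertices} (the extra barrier slack only multiplies the region by a factor $2^{\bigO(r)}$). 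As $p$ ranges over the non-central agents, $r-g$ ranges over $r-m,\dots,r-1$, so these agents contribute the factors $\prod_{j=r-m}^{r-1}n_j$, which I identify with the extra factor $n_{r-1}$ together with a subproduct of $\prod_{j=\ceil{r/2}+1}^{r-1}n_j$.

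For the remaining $c=r-m\ge\ceil{r/2}+1$ central agents, each position lies in the $\ceil{r/2}$-adjacent region of $v_i$, of size $2^{\bigO(r)}\,n_{\ceil{r/2}}$ by Lemma~\ref{lemma-bound-i-adjacent-vertices}; since $n_j\ge n_{\ceil{r/2}}$ for all $j\ge\ceil{r/2}$, I can bound these $c$ positions by the exactly $c$ leftover factors, namely $n_{\ceil{r/2}}^{\ceil{r/2}}$ together with the $n_j$ not already used by the non-central agents. Multiplying the three contributions assigns exactly one factor $n_j$ per agent and telescopes into $n_{\ceil{r/2}}^{\ceil{r/2}}\cdot n_{r-1}\cdot\prod_{j=\ceil{r/2}+1}^{r-1}n_j$. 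Collecting the prefactors then gives the constant: $s^r\le s^{r^2}$ for states, $3^r\le s^{2r}\le s^{2r^2}$ for back-labels (using $s\ge2$), and a factor $2^{4(r-j)}\le 2^{4r}$ per agent, hence $2^{4r^2}\le s^{4r^2}$ over all $r$ agents, so everything is absorbed into $s^{7r^2}$, yielding the claimed bound.

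The main obstacle I expect is making the refined no-splitting property and the passage from ``not separated by two $i$-barriers'' to a concrete region-size bound fully rigorous, together with the index bookkeeping that assigns exactly one barrier level $n_j$ to each agent so that the factors telescope into the stated product; by contrast, the existence of a good time in each macro interval and the reduction of $\alpha_r$ to a configuration count via Lemma~\ref{lemma_same_conf_same_labels} are comparatively routine.
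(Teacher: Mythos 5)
Your proposal follows essentially the same route as the paper's proof: reduce $\alpha_r$ to counting configurations at the time in each macro interval when at least $\ceil{r/2}+1$ agents are $\ceil{r/2}$-adjacent to the current macro vertex, establish the nested no-splitting property (the paper phrases it as a downward induction showing that at least $r-j$ agents are $(r-j)$-adjacent, with the contradiction coming from two groups of size at most $r-j$ being separated by two $(r-j)$-barriers), bound positions via Lemma~\ref{lemma-bound-i-adjacent-vertices}, and absorb the permutation, state, back-label and $2^{\bigO(r)}$ factors into $s^{7r^2}$. Your sorting-by-distance formulation and the slightly different assignment of the factors $n_j$ to agents are cosmetically different but equivalent, and the pieces you flag as needing care (the separation-to-region-size step and the index bookkeeping, plus the $r!$ factor for which agent plays which role) all fit comfortably in the stated budget, exactly as in the paper.
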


\begin{proof}
Let~$\agentset^{(r)} \subseteq \agentset$ be an arbitrary subset of $r$ agents. In order to bound $\alpha_r$,
we consider the behaviour of this subset $\agentset^{(r)}$ of agents in a graph of the form~$G(B_{r-1})$. 
We let~$v_0$ be the starting vertex of the set of agents~$\agentset^{(r)}$ in~$G(B_{r-1})$ and let~$t_0=0$. 
Again, we iteratively define~$t_i$ be the first point in time after~$t_{i-1}$, when an agent in $\agentset^{(r)}$ visits a macro vertex~$v_i$ distinct from~$v_{i-1}$.

Because of the recursive structure of the barriers, see Figure~\ref{fig_barrier_recursive}, every macro vertex
is surrounded by $\lceil r/2 \rceil$-barriers and any path between two consecutive macro vertices~$v_{i-1}$ and~$v_{i}$ contains at least one barrier~$B_{\lceil r/2 \rceil}$ (note that $r\geq 2$ by assumption). 
In order to reach the vertex $v_i$ after visiting~$v_{i-1}$, at least $\lceil r/2 \rceil+1$ agents from $\agentset^{(r)}$
are necessary to traverse such an $\lceil r/2 \rceil$-barrier. Thus, at some time~$t \in \{t_{i-1},\dots,t_i-1\}$ at least~$\lceil r/2 \rceil+1$ agents must be at a vertex that is~$\lceil r/2 \rceil$-adjacent to~$v_{i}$, as otherwise the agents would not be able to reach~$v_i$.

The crucial observation at this point is that by Lemma~\ref{lemma_same_conf_same_labels} the number of possible configurations at this time $t$ also bounds $\alpha_r$, the number of possible steps in the macro label sequence after which a configuration of $\agentset^{(r)}$ with respect to a macro vertex must repeat. The reason is that whenever the set of agents $\agentset^{(r)}$ traverse a gadget $B(l)$ there has to be a time $t$ with the properties described above. 

Let~$\agentset_1$ denote the set of agents that are at a vertex that is~$\lceil r/2 \rceil$-adjacent to~$v_{i}$ at time~$t$, and let~$\agentset_2:= \agentset^{(r)} \setminus \agentset_1$.
We claim the following: For~$j \in \{1,\ldots, |\agentset_2|\}$, there are at least~$(r-j)$ agents that are located at a vertex which is~$(r-j)$ adjacent to~$v_i$.

For~$j= |\agentset_2|$, we have~$r-j = |\agentset_1| > \lceil r/2 \rceil$. Thus, the claim holds by definition of~$\agentset_1$, since there are~$r-j$ agents, namely the set of agents~$\agentset_1$, which are located at vertices which are~$\lceil r/2 \rceil$-adjacent to~$v_i$ and thus also $(r-j)$-adjacent to~$v_i$ because ~$r-j > \lceil r/2 \rceil$. 

Now, assume for the sake of contradiction that the claim holds for~$j$, but not for~$j-1$. This means that there is a subset of agents~$\agentset' \subset \agentset^{(r)}$ with~$|\agentset'|=r-j$ such that all agents in~$\agentset'$ are located at vertices which are~$(r-j)$-adjacent to~$v_i$. 
But for~$j-1$ the claim does not hold, which implies that all other agents~$\agentset'':=\agentset^{(r)} \setminus \agentset'$ are at vertices which are not~$(r-j+1)$-adjacent: 
If there was an agent~$A \in \agentset''$ at a vertex which is~$(r-j+1)$-adjacent, then~$\agentset' \cup \{A\}$ would be a set of~$(r-j+1)$ agents which are all at~$(r-j+1)$-adjacent vertices, which is a contradiction to the choice of~$j$.

But the path between any pair of vertices~$(a,b)$, such that~$a$ is~$(r-j)$-adjacent to~$v_i$ and~$b$ is not~$(r-j+1)$-adjacent to~$v_i$, contains at least two~$(r-j)$-barriers, see also Figure~\ref{fig-barriers-separating-agents}.
The reason is that $r-j+1 > \lceil r/2 \rceil\geq 1$ and, by Theorem~\ref{theo_induction_barrier}, any path from $u$ or $v$ to $u'$ or $v'$ contains at least three~$(r-j)$ barriers.
 Thus the set of agents~$\agentset'$ and~$\agentset''$ are separated by at least two~$(r-j)$-barriers on any path and~$|\agentset'|\leq r-j$ as well as~$|\agentset''|=j < r-j$ since~$j \leq \lceil r/2 \rceil-1$. But then  a set of at most $r-j$ agents must have traversed a barrier $B_{r-j}$ or a set of at most $r-j-1$ agents must have traversed the gadget between two macro vertices in $B_{r-j}$, which both is a contradiction.
 
We now use the bound on the number of~$j$-adjacent vertices from Lemma~\ref{lemma-bound-i-adjacent-vertices} together with the claims to bound~$\alpha_r$.
By the claim above, we can enumerate the agents in~$\agentset^{(r)}$ as~$A_1,A_2,\ldots, A_r$ so that:
\begin{enumerate}
\item For~$j \in \{1,\dots,|\agentset_1|\}$, $A_j \in \agentset_1$ and the location of~$A_j$ is~$ \lceil r/2 \rceil$-adjacent to~$v_i$.
\item For~$j \in \{|\agentset_1|+1,\dots,r-1\}$, $A_j \in \agentset_2$ and  the location of~$A_j$ is~$j$-adjacent to~$v_i$.
\item Agent~$A_r\in \agentset_2$ is at~$v_i$ or one of the surrounding gadgets by Lemma~\ref{lem_need_all_pebbles_to_cross_barrier}.
\end{enumerate}
There are~$r!$ possible permutations of the agents and each agent has~$s$ possible states. Using Lemma~\ref{lemma-bound-i-adjacent-vertices}, we can bound the number of possible locations at time $t$ of the agents in $\agentset_1$ by $\smash{( 2^{4(r- \ceil{r/2})} n_{\ceil{r/2}})^{|\agentset_1|}}$, the number of possible locations of the agents $\{A_{|\agentset_1|+1},\ldots, A_{r-1}\}$ by
$ \prod_{j=|\agentset_1|+1}^{r-1} 2^{4(r-j)} n_j$ and the number of possible locations of $A_r$ by $2^4 n_{r-1}$. 
Overall, we can thus bound the number of possible locations of the agents~$\agentset^{(r)}$ at $t$ with respect to~$v_i$ by
\begin{align*}
& \quad r! \cdot \left( 2^{4(r- \ceil{r/2})} n_{\ceil{r/2}}\right)^{|\agentset_1|} \left ( \prod_{j=|\agentset_1|+1}^{r-1} 2^{4(r-j)} n_j  \right)\, 2^4 n_{r-1} \\
& \leq r! \cdot \left(2^{4r}\right)^r \cdot  n_{\ceil{r/2}}^{|\agentset_1|} \cdot n_{r-1} \cdot  \prod_{j=|\agentset_1|+1}^{r-1} n_j
\leq  2^{5 r^2} \cdot  n_{\ceil{r/2}}^{\ceil{r/2}} \cdot n_{r-1} \cdot  \prod_{j=\ceil{r/2}+1}^{r-1} n_j,
\end{align*}
where we used $r! \leq r^r \leq 2^{r^2}$ and $n_{j-1} \leq n_{j}$ for all $j \in \{2,\ldots, r-1\}$.

In order to bound the number of configurations of the agents~$\agentset^{(r)}$
note that there are $s^r$~possible states of the agents and for each agent~$3$ possible edge labels to the previous vertex. 
Combining these bounds with the above bound on the number of locations of the agents, we obtain
the following bound on the number of configurations  of~$\agentset^{(r)}$ at $t$ with respect to~$v_i$:
\begin{align*}
 s^r \cdot 3^r \cdot 2^{5 r^2} \cdot  n_{\ceil{r/2}}^{\ceil{r/2}} \cdot n_{r-1} \cdot \prod_{j=\ceil{r/2}+1}^{r-1} n_j 
 \leq s^{7 r^2} \cdot n_{\ceil{r/2}}^{\ceil{r/2}} \cdot n_{r-1} \cdot \prod_{j=\ceil{r/2}+1}^{r-1} n_j.
\end{align*}
Here we used $s \geq 2$ and $r \geq 2$.
By the observation at the beginning of the proof, the number of possible configurations of~$\agentset^{(r)}$ at $t$ with respect to~$v_i$ also bounds $\alpha_r$.
\end{proof}

Using the bound on~$\alpha_r$ from Lemma~\ref{lemma_bound_configurations2}, we can bound the number of vertices of the barriers.

\begin{theorem}\label{theo_size_of_barrier}
For every set of~$k$ agents~$\agentset$ with~$s$ states each and every $r \leq k$, there is an $r$-barrier with at most~$\bigO(s^{k \cdot 2^{4 \cdot r}})$ vertices.
\end{theorem}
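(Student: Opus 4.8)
The plan is to prove the slightly more explicit statement that $n_r \le s^{C k 2^{4r}}$ for a universal constant $C$ (so the exponent is $\bigO(k 2^{4r})$, giving the claimed $\bigO(s^{k 2^{4r}})$ size), by strong induction on $r \in \{1,\dots,k\}$. I would pass to logarithms base $s$ and write $m_r := \log_s n_r$, turning the two available recursions into additive ones. Lemma~\ref{lemma_1_barrier} supplies the base case $m_1 = \log_s \bigO(k s^2) \le k + \bigO(1) \le 16 C k$ once $C$ is large enough, using $s \ge 2$. For the inductive step I would combine the recursion $n_r \le c \binom{k}{r} n_{r-1} \alpha_r^2$ following Theorem~\ref{theo_induction_barrier} with the bound on $\alpha_r$ from Lemma~\ref{lemma_bound_configurations2}, take $\log_s$ of both, and bound $\log_s\!\big(c\binom{k}{r}\big) \le k + \bigO(1)$ via $\binom{k}{r} \le 2^k$. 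This yields
\begin{align*}
m_r \;\le\; k + \bigO(1) + 14 r^2 + 3 m_{r-1} + 2\ceil{r/2}\, m_{\ceil{r/2}} + 2\!\!\sum_{j=\ceil{r/2}+1}^{r-1}\!\! m_j .
\end{align*}

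The heart of the argument is to show that, under the hypothesis $m_j \le C k 2^{4j}$ for $j < r$, the three recursive terms together contribute only a fixed fraction strictly below $1$ of the target $C k 2^{4r}$, leaving slack to absorb the additive $k + \bigO(1) + 14 r^2$. The decisive structural feature, which I would emphasize, is that the large power $\ceil{r/2}$ sits on $n_{\ceil{r/2}}$, whose index is only half of $r$: this gives $2\ceil{r/2}\, m_{\ceil{r/2}} \le 2\ceil{r/2}\, C k\, 2^{4\ceil{r/2}}$, of order $r\, 2^{2r}$ and hence a vanishing fraction of $2^{4r}$ (the worst case $r=3$ yields at most $\tfrac14 C k 2^{4r}$). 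Had this power instead fallen on $n_{r-1}$ the recursion would fail to close, which is exactly why the bound in Lemma~\ref{lemma_bound_configurations2} was set up so that the half-power lands on the small index. The remaining factors $n_{r-1},\dots,n_{\ceil{r/2}+1}$ form a geometric series: $3 m_{r-1} \le \tfrac{3}{16} C k 2^{4r}$, and $2\sum_{j=\ceil{r/2}+1}^{r-1} m_j \le 2 C k \sum_j 2^{4j} \le \tfrac{2}{15} C k 2^{4r}$, using $\sum_{j \le r-1} 2^{4j} \le \tfrac{16}{15} 2^{4(r-1)}$.

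Summing these contributions gives at most $\big(\tfrac{3}{16}+\tfrac14+\tfrac{2}{15}\big) C k 2^{4r} < 0.58\, C k 2^{4r}$, so it remains to verify $k + \bigO(1) + 14 r^2 \le 0.42\, C k 2^{4r}$ for all $2 \le r \le k$. Since $r \le k$ and $2^{4r} \ge 256$, the dominant term $14 r^2 \le 14 r k$ is controlled by $0.42\,Ck 2^{4r}$ once $C$ is a large enough absolute constant, and likewise for the $k$ and $\bigO(1)$ terms; this closes the induction and yields $n_r \le s^{C k 2^{4r}}$, i.e.\ the asserted bound. I expect the only genuine obstacle to be the exponent bookkeeping in this final estimate, namely verifying that the geometric-series domination together with the half-index power keep the recursive contribution bounded away from $1$ \emph{uniformly} in $r$, rather than drifting upward like $r$; every other step is routine.
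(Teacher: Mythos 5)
Your proof is correct and follows essentially the same route as the paper: the same recursion $n_r \le c\binom{k}{r}\, n_{r-1}\, \alpha_r^2$ combined with the bound on $\alpha_r$ from Lemma~\ref{lemma_bound_configurations2}, closed by induction on $r$ against the target exponent $k\cdot 2^{4r}$ (the paper manipulates the exponents multiplicatively, you work additively with $\log_s$, but the estimates and the role of the half-index power on $n_{\ceil{r/2}}$ are identical). One small caution: $s^{\bigO(k 2^{4r})}$ is not literally $\bigO(s^{k 2^{4r}})$, so you should note that your own constants already close the induction with $C=1$ --- the recursive contribution is below $0.58$ of the target and there is ample slack for $k+\bigO(1)+14r^2$ --- which is exactly how the paper states the bound after absorbing the constant via $c \le s^k$.
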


\begin{proof}
The existence of an~$r$-barrier follows from Lemma~\ref{lemma_1_barrier} and Theorem~\ref{theo_induction_barrier} and we further have the following bound on the number of vertices~$n_r$ of~$B_r$ for a sufficiently large constant~$c \in \N$:
\begin{align*}
n_1 \leq c k s^2 \quad \text{and} \quad n_{r} \leq c \binom{k}{r} n_{r-1} \alpha_r^2.
\end{align*}
It is without loss of generality to assume~$s \geq 2$ since otherwise a trap of constant size can trivially be found. Hence, we can plug in the bound on~$\alpha_r$ from Lemma~\ref{lemma_bound_configurations2}. For the asymptotic bound, we may assume~$c\leq s^k$ and we further have~$\binom{k}{r} \leq 2^k$. We therefore get
\begin{align}
 n_{r} & \leq s^k \cdot 2^k \cdot n_{r-1} \cdot \left(s^{7 \cdot r^2}\right)^2 
 \cdot  \left(n_{\ceil{r/2}}^{\ceil{r/2}}\right)^2 \cdot  n_{r-1}^{2} 
 \prod_{j=\ceil{r/2}+1}^{r-1} n_j^2 \notag \\
 & \leq s^{2k + 14 r^2} \cdot n_{\ceil{r/2}}^{(r+1)}\cdot  n_{r-1}^{3} \prod_{j=\ceil{r/2}+1}^{r-1} n_j^2. \label{ineq_nr_barrier}
\end{align}
We proceed to show inductively that~$n_r \leq s^{k \cdot 2^{4 \cdot r}}$ holds for all~$r \in \{1,\ldots, k\}$. For~$r=1$, we have~$n_1 \leq c k s^2 \leq s^{4 k} \leq s^{k \cdot 2^4}$.
Let us assume the claim holds for~$1,\dots,r-1$. From Inequality~\eqref{ineq_nr_barrier} we obtain
\begin{align*}
 n_{r} & \leq s^{2k + 14 r^2}  \cdot 
	 \left( s^{k \cdot  2^{4 \cdot \ceil{r/2}}} \right)^{r+1} \cdot
	  \left(s^{k \cdot 2^{4(r-1)}} \right)^3 \cdot
	\prod_{j=\ceil{r/2}}^{r-1} \left( s^{k \cdot 2^{4\cdot j}} \right)^2 \\
& = s^ {2k + 14 r^2  +  k \cdot (r+1) \cdot  2^{4 \cdot \ceil{r/2}} + 3 \cdot k \cdot 2^{4(r-1)}+ 2 k  \sum_{j=\ceil{r/2}+1}^{r-1} 2^{4 \cdot j}}.	
\end{align*}
Thus, it is sufficient to bound the exponent. As $r \geq 2$, we have~$\sum_{i=0}^{r-1} 2^{4 \cdot i} = (2^{4r}-1)/(2^4-1)  \leq  2 \cdot 2^{4(r-1)}$ as well as  $(r+1) \cdot  2^{4\ceil{r/2}} \leq 4 \cdot 2^{4(r-1)}$ and~$2k + 14 r^2 \leq 2 \cdot k \cdot 2^{4(r-1)}$. Hence, we obtain
\begin{align*}
& \quad 2k + 14 r^2 +  k \cdot (r+1) \cdot  2^{4 \cdot \ceil{r/2}} + 3 \cdot k\cdot 2^{4(r-1)} +  2 k  \sum_{j=\ceil{r/2}+1}^{r-1} 2^{4 \cdot j} \\ 
& \leq k \cdot  \left( 2 \cdot 2^{4(r-1)} +  4 \cdot 2^{4(r-1)} + 3 \cdot 2^{4(r-1)} + 4 \cdot 2^{4(r-1)} \right) \leq k \cdot  2^{4 \cdot r}.
\end{align*}
This shows~$n_r \leq s^{k \cdot 2^{4 r}}$, as desired.
\end{proof}

The bound for the barriers above immediately yields the bound for the trap for~$k$ agents.
 
\begin{theorem}
\label{theo_size_of_trap}
For any set~$\agentset$ of~$k$ agents with at most~$s$ states each, there is a trap with at most~$\bigO(s^{2^{5 k}})$ vertices.
\end{theorem}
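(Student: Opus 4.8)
The plan is to combine the size bound for $k$-barriers from Theorem~\ref{theo_size_of_barrier} with the barrier-to-trap reduction of Lemma~\ref{lem_barrier_to_trap}, and then to simplify the resulting exponent. First I would instantiate Theorem~\ref{theo_size_of_barrier} with $r = k$, which yields a $k$-barrier $B_k$ for the set $\agentset$ with at most $\bigO(s^{k \cdot 2^{4k}})$ vertices. Since $B_k$ is a barrier against every subset of the $k$ agents, Lemma~\ref{lem_barrier_to_trap} applies directly and turns $B_k$ into a trap with $2 n_k + 4$ vertices, where $n_k$ denotes the number of vertices of $B_k$.

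It then remains to show that this quantity is $\bigO(s^{2^{5k}})$. The additive $+4$ and the factor of $2$ are absorbed as constant prefactors by the $\bigO$-notation (using $s \geq 2$, which we may assume, as otherwise a constant-size trap exists), so it suffices to bound $s^{k \cdot 2^{4k}}$ by $s^{2^{5k}}$, i.e.\ to verify the exponent inequality $k \cdot 2^{4k} \leq 2^{5k}$. This follows immediately from $k \leq 2^k$, valid for all $k \geq 1$, since then $k \cdot 2^{4k} \leq 2^k \cdot 2^{4k} = 2^{5k}$. Hence $2 n_k + 4 \in \bigO(s^{k \cdot 2^{4k}}) \subseteq \bigO(s^{2^{5k}})$, as claimed.

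The main obstacle is not in this final statement itself: the genuine work has already been carried out in the recursive construction of the barriers (Theorem~\ref{theo_induction_barrier} and Theorem~\ref{theo_size_of_barrier}) and in the configuration-counting bound on $\alpha_r$ (Lemma~\ref{lemma_bound_configurations2}). Given those results, the theorem reduces to a one-line instantiation ($r = k$), the reduction of Lemma~\ref{lem_barrier_to_trap}, and the elementary exponent estimate above. The only point requiring a moment of care is that the hidden constants from the barrier bound and the factor of $2$ in the reduction remain \emph{constant} prefactors under $\bigO$ and therefore do not affect the exponent; the clean bound $2^{5k}$ emerges purely from the slack $k \leq 2^k$ between the two exponents $k \cdot 2^{4k}$ and $2^{5k}$.
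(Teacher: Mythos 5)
Your proposal is correct and follows essentially the same route as the paper: instantiate Theorem~\ref{theo_size_of_barrier} at $r=k$, apply Lemma~\ref{lem_barrier_to_trap}, and absorb the exponent via $k \cdot 2^{4k} \leq 2^{5k}$. The only detail the paper adds is the remark that agents with \emph{at most} $s$ states can be padded with unreachable states to have exactly $s$ states before invoking the barrier bound, which is a triviality your argument implicitly covers.
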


\begin{proof}
We can always add additional unreachable states to all agents so that all of them have~$s$ states. Theorem~\ref{theo_size_of_barrier} yields a $k$ barrier for a given set of $k$ agents~$\agentset$ with  $\bigO(s^{k \cdot 2^{4 \cdot k}})$ vertices. The claim follows from the fact that $k \cdot 2^{4 \cdot k} \leq 2^{5\cdot k}$ and that a $k$-barrier with $n$~vertices yields a trap with~$\bigO(n)$~vertices for~$\agentset$ by Lemma~\ref{lem_barrier_to_trap}.
\end{proof}

Finally, we derive a bound on the number of agents~$k$ that are needed for exploring every graph on at most~$n$ vertices. 

\begin{theorem}
\label{theo_pebbles_lower_bound}
The number of agents needed to explore every graph on at most~$n$ vertices is at least~$\Omega( \log \log n )$, if we allow~$\bigO( (\log n)^{1-\varepsilon})$ bits of memory 
for an arbitrary constant~$\varepsilon >0$ for every agent.
\end{theorem}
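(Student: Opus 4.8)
The plan is to obtain this lower bound as an immediate consequence of the doubly exponential trap size from Theorem~\ref{theo_size_of_trap}. First I would translate the memory constraint into a bound on the number of states: an agent with $m$ bits of memory has exactly $2^m$ states, so allowing $\bigO((\log n)^{1-\varepsilon})$ bits per agent means each agent has at most $s \le 2^{c_0 (\log n)^{1-\varepsilon}}$ states for some constant $c_0 = c_0(\varepsilon)$. Since Theorem~\ref{theo_size_of_trap} is phrased in terms of the maximum number of states over all agents (its proof pads each agent with unreachable states up to a common $s$), this single bound on $s$ is all that is needed.

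I would then argue by contradiction. Fix $n$ and suppose some set $\agentset$ of $k$ agents, each with at most $s$ states as above, explores every graph on at most $n$ vertices. By Theorem~\ref{theo_size_of_trap} there is a trap for $\agentset$ with at most $C \cdot s^{2^{5k}}$ vertices, for an absolute constant $C$. If this trap has at most $n$ vertices, then it is itself a graph on at most $n$ vertices that $\agentset$ fails to explore, contradicting the assumption. Hence it suffices to show that the trap has at most $n$ vertices whenever $k$ lies below the claimed threshold.

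This reduces to a routine estimate on the exponents. Taking logarithms,
\[
\log\bigl(C\, s^{2^{5k}}\bigr) \;=\; 2^{5k}\log s + \log C \;\le\; 2^{5k}\, c_0 (\log n)^{1-\varepsilon} + \log C .
\]
For large $n$ this is at most $\log n$ as soon as $2^{5k}\, c_0 (\log n)^{1-\varepsilon} \le \tfrac12 \log n$, i.e.\ as soon as $2^{5k} \le \tfrac{1}{2c_0}(\log n)^{\varepsilon}$, which rearranges to $k \le \tfrac{\varepsilon}{5}\log\log n - \bigO(1)$. Thus for every $k$ up to this bound the trap has at most $n$ vertices, so no such set of $k$ agents can explore all $n$-vertex graphs. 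Consequently at least $\tfrac{\varepsilon}{5}\log\log n - \bigO(1) = \Omega(\log\log n)$ agents are required, which is the claim.

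I do not expect a genuine obstacle here, as the statement is essentially a corollary of the trap construction; the only point that needs care is the bookkeeping of constants, so that the factor $(\log n)^{\varepsilon}$ gained from the $2^{5k}$ term exactly offsets the deficit $(\log n)^{1-\varepsilon}$ in the per-agent memory. It is worth emphasising why $\varepsilon>0$ is essential: at $\varepsilon=0$, i.e.\ full logarithmic memory, the inequality above degenerates to $2^{5k}\le \tfrac{1}{2c_0}$ and yields no lower bound at all, consistent with the fact that a single agent with $\bigO(\log n)$ memory already explores every $n$-vertex graph by Reingold's theorem (Theorem~\ref{theo_reingold}).
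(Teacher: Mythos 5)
Your proposal is correct and follows essentially the same route as the paper: translate the memory bound into $s \le 2^{c(\log n)^{1-\varepsilon}}$ states per agent, invoke the trap of size $\bigO(s^{2^{5k}})$ from Theorem~\ref{theo_size_of_trap}, and extract $k = \Omega(\log\log n)$ by taking logarithms. The only cosmetic difference is that you argue via the contrapositive (the trap fits within $n$ vertices when $k$ is too small) while the paper directly writes $n \le \bigO(1)\, s^{2^{5k}}$ and rearranges; the calculation is the same.
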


\begin{proof}
Let~$\agentset$ be a set of~$k$ agents with~$\bigO( (\log n)^{1-\varepsilon})$ bits of memory that explores any graph on at most~$n$ vertices. By otherwise adding some unused memory, we may assume that~$0<\varepsilon<1$ and that there is a constant~$c \in \N$  such that all agents in~$\agentset$ have~$\smash{s:=2^{c \cdot (\log n)^{1-\varepsilon}}}$ states. We apply Theorem~\ref{theo_size_of_trap} and obtain a trap for~$\agentset$ containing~$\bigO(s^{2^{5 \cdot k}})$ vertices.
As the set of agents~$\agentset$ explore any graph on at most~$n$ vertices, we have~$\smash{n \leq \bigO(1) s^{2^{5 \cdot k}}}$. By taking logarithms on both sides of this inequality, we obtain
\begin{align*}
\log n \leq \bigO(1) + 2^{5 k}  \log s =  \bigO(1) + 2^{5  k}  \cdot c \cdot (\log n)^{1-\varepsilon}.
\end{align*}
Multiplication by~$(\log n)^{\varepsilon-1}$ on both sides and taking logarithms yields the claim.
\end{proof}

As an additional agent is more powerful than a pebble (Lemma~\ref{lem:agent>pebble}), we obtain the following result as a direct corollary of Theorem~\ref{theo_pebbles_lower_bound}.

\begin{corollary}
\label{cor_pebbles_lower_bound}
An agent with $\bigO((\log n)^{1-\varepsilon})$ bits of memory for an arbitrary constant $\varepsilon >0$ needs $\Omega(\log \log n)$ pebbles to explore every graph with at most $n$ vertices.
\end{corollary}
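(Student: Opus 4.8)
The plan is to derive the statement directly from Theorem~\ref{theo_pebbles_lower_bound} by means of the reduction in Lemma~\ref{lem:agent>pebble}, arguing by contradiction. Suppose some agent $A$ with $\bigO((\log n)^{1-\varepsilon})$ bits of memory and $p$ pebbles explores every connected graph on at most $n$ vertices. Writing $s := |\Agentstate|$ for its number of states, the memory bound means $\log s \in \bigO((\log n)^{1-\varepsilon})$. The goal is to transform $A$ into a set of $p+1$ cooperating agents that still explores all such graphs and still respects the memory budget, so that Theorem~\ref{theo_pebbles_lower_bound} can be applied to bound $p+1$ from below.

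First I would invoke Lemma~\ref{lem:agent>pebble} to obtain a set $\mathcal{A} = \{A_0,\dots,A_p\}$ of $p+1$ agents, where $A_0$ has $s$ states and each $A_i$ with $i>0$ has exactly two states, such that $A_0$ reproduces the walk of $A$ on every graph in the class. Since $A$ visits all vertices of each such graph and reproduction means that the edge sequence of $A$ is a subsequence of that of $A_0$, the agent $A_0$ visits every vertex that $A$ does; hence $A_0$—and therefore the set $\mathcal{A}$ in the sense of the reproduction definition of \S~\ref{sec:prelim}—explores every graph on at most $n$ vertices.

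Next I would check the per-agent memory budget demanded by Theorem~\ref{theo_pebbles_lower_bound}. The auxiliary agents $A_1,\dots,A_p$ each use $\log 2 = 1$ bit, while the lead agent $A_0$ uses $\log s \in \bigO((\log n)^{1-\varepsilon})$ bits; thus \emph{every} agent in $\mathcal{A}$ respects the $\bigO((\log n)^{1-\varepsilon})$ bound. Theorem~\ref{theo_pebbles_lower_bound} therefore applies to $\mathcal{A}$ and forces $p+1 \in \Omega(\log\log n)$, which yields $p \in \Omega(\log\log n)$ and contradicts any assumption that $o(\log\log n)$ pebbles suffice.

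The argument is essentially immediate once these two ingredients are combined, so there is no genuine technical obstacle here; the only point requiring care is the bookkeeping that both the lead agent and the two-state auxiliary agents fall into the same asymptotic memory class. This uniformity is exactly what licenses applying Theorem~\ref{theo_pebbles_lower_bound} to the whole collection $\mathcal{A}$ rather than to $A_0$ in isolation, and it is the sole place where the assumption $\varepsilon>0$ is used, since it guarantees $\log s$ stays within the sublogarithmic regime covered by the theorem.
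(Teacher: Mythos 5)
Your proposal is correct and follows exactly the paper's intended route: the paper derives the corollary directly from Theorem~\ref{theo_pebbles_lower_bound} via the reduction of Lemma~\ref{lem:agent>pebble}, just as you do. Your additional bookkeeping (verifying that both the $s$-state lead agent and the two-state auxiliary agents fall within the $\bigO((\log n)^{1-\varepsilon})$ memory bound) is precisely the detail the paper leaves implicit.
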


\bibliographystyle{abbrv}
\bibliography{mrabbrev2012,journabbrev,references,confabbrev}

\end{document}